\tikzset{node distance=2cm, auto}
\newcommand{\CH}{{\cal H}}
\newcommand{\CN}{{\cal N}}
\def\BN{{\mathbb N}}
\def\BZ{{\mathbb Z}}
\def\BC{{\mathbb C}}
\def\BP{{\mathbb P}}
\newcommand{\be}{\begin{equation}}
\newcommand{\ee}{\end{equation}}
\newcommand{\ba}{\begin{aligned}}
\newcommand{\ea}{\end{aligned}}
\newcommand{\bea}{\begin{eqnarray}}
\newcommand{\eea}{\end{eqnarray}}
\newcommand{\bean}{\begin{eqnarray*}}
\newcommand{\eean}{\end{eqnarray*}}
\def\r{\right\rangle}
\def\1{\mathbf{1}}
\def\0{|\1\r}
\newcommand{\rme}{{\rm e}}
\newcommand{\rmi}{{\rm i}}
\newcommand{\rmd}{{\rm d}}
\def\XXint#1#2#3{{\setbox0=\hbox{$#1{#2#3}{\int}$}
     \vcenter{\hbox{$#2#3$}}\kern-.5\wd0}}
\newtheorem{thm}{Theorem}
\newtheorem{lem}{Lemma}
\newdimen\tableauside\tableauside=1.0ex
\newdimen\tableaurule\tableaurule=0.4pt
\newdimen\tableaustep
\def\phantomhrule#1{\hbox{\vbox to0pt{\hrule height\tableaurule width#1\vss}}}
\def\phantomvrule#1{\vbox{\hbox to0pt{\vrule width\tableaurule height#1\hss}}}
\def\sqr{\vbox{%
  \phantomhrule\tableaustep
  \hbox{\phantomvrule\tableaustep\kern\tableaustep\phantomvrule\tableaustep}%
  \hbox{\vbox{\phantomhrule\tableauside}\kern-\tableaurule}}}
\def\squares#1{\hbox{\count0=#1\noindent\loop\sqr
  \advance\count0 by-1 \ifnum\count0>0\repeat}}
\def\tableau#1{\vcenter{\offinterlineskip
  \tableaustep=\tableauside\advance\tableaustep by-\tableaurule
  \kern\normallineskip\hbox
    {\kern\normallineskip\vbox
      {\gettableau#1 0 }%
     \kern\normallineskip\kern\tableaurule}%
  \kern\normallineskip\kern\tableaurule}}
\def\gettableau#1{\ifnum#1=0\let\next=\null\else
\squares{#1}\let\next=\gettableau\fi\next}
\title{Resurgent Transseries and the Holomorphic Anomaly}
\author[a]{Ricardo~Couso--Santamar\'\i a,}
\affiliation[a]{Departamento de F\'\i sica de Part\'\i culas and IGFAE, Universidade de Santiago de Compostela,\\ E--15782 Santiago de Compostela, Spain\\}
\emailAdd{ricardo.couso@usc.es}
\author[a,b]{Jos\'e~D.~Edelstein,}
\affiliation[b]{Centro de Estudios Cient\'\i ficos, CECs, Casilla 1469, Valdivia, Chile\\}
\emailAdd{jose.edelstein@usc.es}
\author[c]{Ricardo~Schiappa,}
\affiliation[c]{CAMGSD, Departamento de Matem\'atica, Instituto Superior T\'ecnico,\\ Av. Rovisco Pais 1, 1049--001 Lisboa, Portugal\\}
\emailAdd{schiappa@math.ist.utl.pt}
\author[d]{Marcel~Vonk}
\affiliation[d]{Institute for Theoretical Physics, University of Amsterdam,\\ Science Park 904, 1090--GL Amsterdam, The Netherlands\\}
\emailAdd{m.l.vonk@uva.nl}
\abstract{
The gauge theoretic large $N$ expansion yields an asymptotic series which requires a nonperturbative completion in order to be well defined. Recently, within the context of random matrix models, it was shown how to build resurgent transseries solutions encoding the full nonperturbative information beyond the 't~Hooft genus expansion. On the other hand, via large $N$ duality, random matrix models may be holographically described by B--model closed topological strings in local Calabi--Yau geometries. This raises the question of constructing the corresponding holographically dual resurgent transseries, tantamount to nonperturbative topological string theory. This paper addresses this point by showing how to construct resurgent transseries solutions to the holomorphic anomaly equations. These solutions are built upon (generalized) multi--instanton sectors, where the instanton actions are holomorphic. The asymptotic expansions around the multi--instanton sectors have both holomorphic and anti--holomorphic dependence, may allow for resonance, and their structure is completely fixed by the holomorphic anomaly equations in terms of specific polynomials multiplied by exponential factors and up to the holomorphic ambiguities---which generalizes the known perturbative structure to the full transseries. In particular, the anti--holomorphic dependence has a somewhat universal character. Furthermore, in the nonperturbative sectors, holomorphic ambiguities may be fixed at conifold points. This construction shows the nonperturbative integrability of the holomorphic anomaly equations, and sets the ground to start addressing large--order analysis and resurgent nonperturbative completions within closed topological string theory.
}
\keywords{Resurgence, Transseries, Topological Strings, Holomorphic Anomaly Equations}
\begin{document}

\maketitle

\vfill

\eject

\allowdisplaybreaks

\section{Introduction and Summary}

For over a decade, the idea of large $N$ duality \cite{m97} has been a remarkable source of progress within the nonperturbative development of strongly coupled gauge theoretic systems. In this framework one starts off with the partition function of some nonabelian gauge theory, $Z$, whose 't~Hooft large $N$ limit \cite{th74} produces an \textit{asymptotic} expansion, in $1/N$. The topological structure of the $1/N$ expansion is that of a genus expansion, which leads to the holographically dual closed string theory. In this way, one can think of closed string theory as an asymptotic (large $N$) approximation to some nonabelian gauge theory. This idea has been applied with success in very large classes of examples, see, \textit{e.g.}, \cite{agmoo99} for an older review. In the following we shall consider one of the simplest classes of nonabelian gauge theories, namely, random matrix models \cite{fgz93, m04}.

Due to its aforementioned asymptotic nature, the large $N$ expansion by itself is not enough to define either the gauge theory or the dual string theory. Indeed, in the large $N$ limit, the gauge theoretic free energy $F = \log Z$ has a genus expansion in $1/N^2$ where its genus $g$ perturbative contributions\footnote{In here $t = N g_s$ is the 't~Hooft coupling. When considering the closed string dual, $t$ will instead be part of the geometric moduli associated to the background Calabi--Yau geometry---but more on this in the following.} $F_g (t)$ have large--order behavior $F_g \sim (2g)!$, rendering the large $N$ expansion as an asymptotic expansion with zero radius of convergence\footnote{The situation is different at \textit{fixed} genus: although the free energy $F_g (t)$ may also be computed perturbatively---now in the 't~Hooft coupling---one finds a milder expansion with finite (non--zero) convergence radius \cite{knn77, th82}.} \cite{s90}---which in the string theoretic context is not even Borel summable \cite{gp88}. In order to go beyond this state of affairs, one has to take into account all possible nonperturbative corrections to the large $N$ expansion, typically of the form $\sim \exp \left( - N \right)$, into what is known as a transseries expansion; see, \textit{e.g.}, \cite{cnp93, ss03, e0801}. Because they include all possible nonperturbative corrections, transseries precisely incorporate Stokes phenomena and, as such, allow us to go beyond the large $N$ expansion obtaining, upon median resummation \cite{as13}, analytic results anywhere in the complex $N$--plane. Within the large $N$ context, explicit examples have been constructed addressing random matrix models and their double--scaling limits \cite{m08, gikm10, asv11, sv13}. One interesting point found in these analyses is that while one could have \textit{na\"\i vely} expected that all the large $N$ nonperturbative content within the transseries would be associated to multi--instantons, this expectation was actually shown to be \textit{incomplete}: one further needs to introduce \textit{new} nonperturbative sectors to fully describe the large $N$ gauge theory beyond its genus expansion. In fact, an important aspect of these transseries solutions is their resurgent nature; see, \textit{e.g.}, \cite{asv11}. That these transseries are resurgent means that, deep in their large--order behavior, every (generalized) multi--instanton sector is related to each other. Verifying the validity of any new nonperturbative sector may be thus thoroughly checked in many examples by exploring the information encoded in the large--order behavior of perturbation theory around any chosen multi--instanton sector \cite{gikm10, asv11, sv13}. Once this is done, the final resurgent transseries solution may be constructed yielding results valid for any $N$.

Our concern in the present work is that most studies carried through so far on the resurgent nature of the large $N$ limit have precisely been done in the gauge theory side, starting off with random matrix integrals \cite{m06, msw07, m08, msw08, gm08, gikm10, kmr10, asv11, sv13}. But would it be possible to address these issues \textit{directly} in the closed string side? If so, what would this imply towards the very nature of the large $N$ holographic duality, from a nonperturbative point--of--view?

We shall be interested in the rather complete picture of \cite{dv02, dv02a}, which addresses the 't~Hooft large $N$ limit of general matrix models. Let us start off on the gauge theoretic side with a hermitian one--matrix model with polynomial potential $V(z)$ of degree $n+1$. In this case it is rather well--known that, in the planar limit, saddle--points of the matrix integral are described by spectral curves \cite{bipz78}. Further, the 't~Hooft $1/N$ asymptotic expansion follows via a recursive calculation on this curve \cite{biz80, ackm93, eo07}. The key point of \cite{dv02} is to notice that the spectral curve is actually part of a Calabi--Yau threefold, and that the 't~Hooft expansion can actually be computed holographically by B--model topological string theory on this specific local geometry.

Let us be a bit more precise, as this will also help explain our main theme. Consider the $n$ critical points of the potential $V'(z^*)=0$ and distribute the $N$ eigenvalues across these critical points, $\{ N_i \}_{i=1}^n$. This saddle configuration is then described by a hyperelliptic curve \cite{bipz78, eo07}
\be\label{hyper}
y^2 = V'(x)^2 - t R(x),
\ee
\noindent
where $t=g_s N$ is the 't~Hooft coupling, and $R(x)$ is a polynomial of degree $n-1$ which effectively opens up the critical points $z^*_i$ into cuts A$_i$. Different eigenvalue partitions, corresponding to distinct classical vacua of the gauge theory, lead to geometrically distinct spectral curves. Conversely, given the curve (\ref{hyper}) all relevant information follows; \textit{e.g.}, the number of eigenvalues in each cut is given by the partial 't~Hooft couplings $t_i = g_s N_i$ as the A--periods
\be\label{hyperAcycle}
t_i = \frac{1}{4\pi\rmi} \oint_{\mathrm{A}_i} \rmd x\, y(x),
\ee
\noindent
while the tree--level (genus zero) free energy follows from the integration of the spectral curve along the dual B--cycles\footnote{This relation expresses the change in the planar free energy as we vary the eigenvalues around the cuts---in this case as we remove an eigenvalue from the cut A$_i$ all the way up to infinity. The energy cost of tunneling an eigenvalue from the cut A$_i$ to the cut A$_j$ is given by the instanton action $A = \int_{\mathrm{B}_{ij}} \rmd x\, y(x)$ \cite{d91, d92, msw07, msw08}.}
\be\label{hyperBcycle}
\frac{\partial F_0}{\partial t_i} = \int_{\mathrm{B}_i} \rmd x\, y(x).
\ee
\noindent
Further, the topological recursion \cite{eo07} algorithmically computes all higher--genus contributions in the 't~Hooft expansion. At the conceptual level one may now ask: why is the matrix model solved by a geometrical construction? As explained in \cite{dv02}, this can be best understood by first reinterpreting the spectral curve solution of the matrix model as being described by a Calabi--Yau geometry in disguise, and not just a Riemann surface; \textit{i.e.}, (\ref{hyper}) is actually a subset of the local (non--compact) Calabi--Yau threefold
\be\label{Xdef}
X_{\text{def}} = \left\{ u,v \in \BC,\, x,y \in \BC \,|\, u v + \CH (x,y) = 0 \right\},
\ee
\noindent
where $\CH (x,y)$ is the polynomial whose zero locus precisely yields the hyperelliptic curve (\ref{hyper}). In this way, the spectral curve clearly encodes all the non--trivial information about this Calabi--Yau geometry. Its A--cycles are projections of the Calabi--Yau compact three--cycles; the same holding for the non--compact canonically conjugated B--cycles. The one--form $y\, \rmd x$ is a projection of the holomorphic three--form $\Omega$, such that one may write (\ref{hyperAcycle}--\ref{hyperBcycle}) instead as threefold periods\footnote{In particular identifying Calabi--Yau periods with 't~Hooft moduli.}
\be\label{special}
t_i = \frac{1}{4\pi\rmi} \oint_{\widehat{\mathrm{A}}_i} \Omega, \qquad
\frac{\partial F_0}{\partial t_i} = \int_{\widehat{\mathrm{B}}_i} \Omega. 
\ee
\noindent
These are the special geometry relations of the Calabi--Yau (with $F_0$ the prepotential). In other words, the special geometry of $X_{\text{def}}$, which solves the tree--level closed topological B--model on this background, further yields the planar solution to the hermitian one--matrix model.

So why does this geometry naturally emerge from the matrix eigenvalue dynamics? Geometrically, the local Calabi--Yau (\ref{Xdef}) is a \textit{deformation} of the singular geometry $uv + y^2 - V'(x)^2 = 0$. One can now understand the nature of the geometric solution to the hermitian one--matrix model by thinking about the \textit{resolved} geometry $X_{\text{res}}$ instead, obtained by blow--up of the aforementioned singular geometry (see, \textit{e.g.}, \cite{dv02, m04} for the precise transition functions, which explicitly depend upon $V(z)$). As it turns out, the string field theory of \textit{open} topological B--strings on $X_{\text{res}}$ \textit{localizes} into the hermitian one--matrix model with potential $V(z)$ \cite{dv02, m04}. The emergence of the special geometry (\ref{special}) can now be understood as the result of the geometric transition $X_{\text{res}} \to X_{\text{sing}} \to X_{\text{def}}$ \cite{r04} implementing a large $N$ duality \cite{gv98b}. In this way, the 't~Hooft resummation of the matrix model precisely matches the closed topological B--model on $X_{\text{def}}$.

To be fully precise, the results in \cite{dv02} only show that closed topological strings on Calabi--Yau threefolds are large $N$ solutions to matrix models at the \textit{planar} level. An extension of this derivation to higher genera was later presented in \cite{emo07}. Generically, one may compute the topological B--model free energy using the holomorphic anomaly equations \cite{bcov93}, which control the $\bar{t}_i$--dependence\footnote{From this closed string perspective, both $t_i$ and $\bar{t}_i$ are now geometric moduli associated to the ambient Calabi--Yau geometry. We shall be more precise about their exact definition in the main body of the paper.} of the closed string amplitudes $F_g ( t_i, \bar{t}_i)$ and, when combined with extra boundary conditions, turn out to be completely integrable for non--compact Calabi--Yau manifolds \cite{hkr08, alm08}. What \cite{emo07} shows is that the holomorphic limit of these B--model closed string amplitudes $F_g ( t_i ) \equiv \lim_{\bar{t}_i \to +\infty} F_g ( t_i, \bar{t}_i)$ exactly matches the components of the large $N$ 't~Hooft expansion as computed within the matrix model using the topological recursion \cite{eo07}. Conversely, the matrix model free energy at genus $g$, may be extended non--holomorphically (by imposing adequate requirements of modular invariance and modifying the topological recursion accordingly) in such a way as to satisfy the holomorphic anomaly equations on the corresponding local Calabi--Yau. In this way, \cite{emo07} sets up the complete perturbative equivalence of \cite{dv02}. Setting up the \textit{nonperturbative} equivalence between large $N$ expansions of hermitian one--matrix models and closed B--model strings on local Calabi--Yau threefolds is, of course, much harder, as one immediately stumbles upon the problems of first defining exactly what one means by either nonperturbative topological strings or the nonperturbative 't~Hooft expansion of a matrix model. This is one of the main themes in the present paper, following previous work in \cite{m06, msw07, m08, msw08, em08, gm08, ps09, mpp09, gikm10, kmr10, asv11, sv13, as13} (a more detailed description of these recent developments may be found either briefly in the introduction of \cite{asv11}, or, in greater detail, in the excellent review \cite{m10}).

One may now return to our motivating question: can one address the issues of resurgent analysis and transseries solutions \textit{directly} in the closed string framework? As should be clear from the previous paragraph, answering this question entails the construction of resurgent transseries solutions to the holomorphic anomaly equations along with verifying resurgent large--order relations \textit{directly} in the closed string sector, and this is what we shall address throughout this work. It is important to stress that, over the years, there have been many other proposals to define nonperturbative topological strings, \textit{e.g}, \cite{osv04, ajs05, cdv10, lv12, hmmo13}. Yet, in one way or another, at the end of the day all these proposals have to rely upon large $N$ duality. As was hopefully made clear above, our main goal in this work is to get rid of this requirement by addressing this question strictly within closed string theory. In this way, besides eventually validating the nonperturbative duality to random matrix models we described above, our approach may also serve the purpose of validating, within closed strings, the other aforementioned proposals.

Schematically, we may represent the motivation for our research in the following diagram:
\be\label{antiholoinstaction}
\begin{tikzpicture}
  \node (Fttb) {$F_g (t_i, \bar{t}_i)$};
  \node (Ft) [right of=Fttb] {$F_g (t_i)$};
  \node (Attb) [below of=Fttb] {$A (t_i, \bar{t}_i)$};
  \node (At) [below of=Ft] {$A (t_i)$};
  \draw[->] (Fttb) to node {} (Ft);
  \draw[->, dashed] (Attb) to node {} (Fttb);
  \draw[->, dashed] (Attb) to node {} (At);
  \draw[->] (At) to node {} (Ft);
\end{tikzpicture}
\ee
\noindent
In words, we know that the perturbative closed string amplitudes $F_g ( t_i, \bar{t}_i)$ have a natural holomorphic limit and that, in this limit, the (leading) large--order growth of the matrix model amplitudes $F_g ( t_i )$ is controlled by a (holomorphic) instanton action, $A (t_i)$ \cite{msw07}. In this way, we might be tempted to conclude that, on the closed string side, there should be an instanton action $A (t_i, \bar{t}_i)$, with both holomorphic and anti--holomorphic dependence, controlling the (leading) large--order growth of the $F_g ( t_i, \bar{t}_i)$ amplitudes. As we shall show in this paper, it turns out that this expectation has \textit{too much} structure; the real answer is much simpler:
\be\label{justholoinstaction}
\begin{tikzpicture}
  \node (Fttb) {$F_g (t_i, \bar{t}_i)$};
  \node (Ft) [right of=Fttb] {$F_g (t_i)$};
  \node (At) [below of=Ft] {$A (t_i)$};
  \draw[->] (Fttb) to node {} (Ft);
  \draw[->] (At) to node {} (Fttb);
  \draw[->] (At) to node {} (Ft);
\end{tikzpicture}
\ee
\noindent
In fact, the holomorphic anomaly equations will precisely demand that, in the closed string sector, the instanton action is \textit{holomorphic}. Of course as we move towards the nonperturbative multi--instanton sectors, the corresponding amplitudes will have both holomorphic and anti--holomorphic dependence, making the resurgent behavior of the closed string amplitudes naturally different from that of the matrix model amplitudes. But it will still be overall much simpler than one might have expected at the beginning and we shall discuss this in detail later on. In fact, one might say that the instanton action itself is part of the holomorphic ambiguities, and that all ``non--trivial'' large--order behavior is thus encoded in these ambiguities. What exactly we mean by this will also be made clear as we develop our analysis in the main body of the text.

This paper is organized as follows. We begin by discussing the holomorphic anomaly equations in section \ref{sec:holoeqs}. These equations are usually written recursively for the genus $g$ free energy, but we shall see how they may be simply rewritten so that they become equations for the full free energy and thus equations allowing for a transseries \textit{ansatz}. In order to be self--contained, we briefly introduce transseries and resurgence in section \ref{sec:transseriesintro}. Having these two pieces of information at hand, one may then proceed and solve the holomorphic anomaly equations with transseries, which we do in section \ref{sec:1parameter} in the context of so--called one--parameter transseries. This is the simplest setting and we thus address it first, further motivating what one expects to be the structure of the general solution. This general solution is then constructed in section \ref{sec:multiparameter} for multi--parameter resurgent transseries, also including cases where the transseries might be resonant---and we discuss such features and such implications. The relation to large--order analysis (and resurgence) is then explored in section \ref{sec:largeorder}. On the one hand, this gives us strong perturbative support for our nonperturbative construction. On the other hand, part of these results are fundamental to explain how to fix the remaining holomorphic ambiguities of our solutions, which we do in section \ref{sec:fixingambiguities}---thus concluding the construction of general nonperturbative transseries solutions to the holomorphic anomaly equations. We end with an outlook towards specific examples and other open problems. In a couple of appendices we include the generalizations of our results in section \ref{sec:multiparameter} to multi--dimensional complex moduli spaces, and a proof of a Lemma used in the main text.

\section{The Holomorphic Anomaly Equations}\label{sec:holoeqs}

Let us begin by setting the stage and briefly reviewing how closed B--model topological string theory is solved by the holomorphic anomaly equations \cite{bcov93b, bcov93, o94}. In the process we shall also very briefly review topological strings in local Calabi--Yau geometries, but we refer the reader to one of the many excellent reviews on the subject for further details \cite{m05, m04, nv04, v05, a12}. After reviewing the basics, we will discuss how to adapt the holomorphic anomaly equations---usually written in a suitable fashion for perturbative calculations---to the present problem which asks for nonperturbative results.

\subsection{Reviewing the Background Calabi--Yau Geometry}

Closed type B topological string theory describes constant maps from Riemann surfaces of genus $g$, $\Sigma_g$, to some Calabi--Yau threefold, $X$. The theory is topological on the world--sheet, but it still depends upon the complex structure of the Calabi--Yau, parametrized by moduli $\{ z_i \}_{i=1}^{h^{2,1}}$. While at first one might have expected that the genus $g$ free energies were holomorphic functions of the complex moduli as $F_g \equiv F_g (z_i)$, it was shown in \cite{bcov93b, bcov93} that, instead, there is an anomaly in the holomorphic dependence of the free energies which arises when one couples the B--model to gravity. In other words, it is actually the case that $F_g \equiv F_g (z_i, \bar{z}_i)$. This complex conjugate dependence of the free energies was derived in \cite{bcov93}, by studying contributions to $F_g$ which arise from boundary contributions to the integral over the moduli space of maps from Riemann surfaces of genus $g$ to the Calabi--Yau manifold $X$. In particular, the authors of \cite{bcov93} obtained a set of equations encoding this anti--holomorphic dependence, known as the holomorphic anomaly equations:
\be
\label{eq:ttbarholanomeqs}
\partial_{\bar{i}} F_g = \frac{1}{2} {\bar{C}_{\bar{i}}}^{\ jk} \left( D_j D_k F_{g-1} + \sum_{h=1}^{g-1} D_j F_{g-h} D_k F_h \right), \qquad g\geq 2.
\ee
\noindent
In here $\partial_{\bar{i}} = \frac{\partial}{\partial \bar{z}_i}$, while the tensor ${\bar{C}_{\bar{i}}}^{\ jk}$ and the covariant derivative $D_j$ will be explained below. The holomorphic anomaly equation at genus $g=1$ was given in \cite{bcov93b}, while the one for genus $g=0$ is trivial since the prepotential, $F_0$, is holomorphic. Together, these equations allow for a recursive integration of the genus $g$ free energies, as the right--hand--side of \eqref{eq:ttbarholanomeqs} only depends on previous genera, already calculated in the recursion process. The outcome of the integration is the desired free energy, $F_g (z_i, \bar{z}_i)$, which can be determined up to a holomorphic \textit{integration constant}, $f_g (z_i)$, the holomorphic ambiguity. We shall discuss how this is fixed in section \ref{sec:fixingambiguities}.

Let us focus on these equations, \eqref{eq:ttbarholanomeqs}. They are valid for both compact and non--compact Calabi--Yau manifolds, but we shall only be interested in the latter. In this case, the free energies are functions (sections) on the complex moduli space of the Calabi--Yau manifold. This moduli space is actually K\"ahler, with metric
\be
G_{i\bar{j}} = \partial_i \partial_{\bar{j}} K, \qquad K = \log \rmi \int_X \Omega \wedge \bar{\Omega},
\ee
\noindent
where $K$ is the K\"ahler potential and $\Omega$ is the non-vanishing $(3,0)$--form on $X$ which determines the complex structure. With this structure, the covariant derivative $D_j$ is given by
\bea
D_j F_g &=& \partial_j F_g, \\
D_j D_k F_g &=& \partial_j \partial_k F_g - \Gamma^i_{jk}\, \partial_i F_g - \left( 2 - 2g \right) \partial_j K\, \partial_k F_g,
\eea
\noindent
where $\Gamma^i_{jk} = G^{i\bar{i}}\partial_j G_{k\bar{i}}$ is the Chistoffel symbol, and where $\partial_j K$ acts as the connection for the line bundle $\mathcal{L}^{2-2g}$ associated to the multiplicative freedom in the normalization of $\Omega$. Finally, the last ingredient is
\be
{\bar{C}_{\bar{i}}}^{\ jk} = \rme^{2K}\, \bar{C}_{\bar{i}\bar{j}\bar{k}}\, G^{j\bar{j}}\, G^{k\bar{k}},
\ee
\noindent
where $\bar{C}_{\bar{i}\bar{j}\bar{k}} = \overline{C_{ijk}}$ is the complex conjugate of the Yukawa couplings, which are given by the third holomorphic derivatives of the prepotential, $C_{ijk} = \partial_i \partial_j \partial_k F_0$.

It turns out that the integration of the equations \eqref{eq:ttbarholanomeqs} is quite complicated as they stand. The calculation may be made easier with the introduction of propagators, which essentially are potentials for the $\bar{C}_{\bar{i}\bar{j}\bar{k}}$ tensors thus simplifying the integration. These propagators were already introduced in \cite{bcov93}, and their interest became explicit in \cite{yy04, al07}. In fact, in \cite{yy04} it was shown that, in the example of the local quintic, the anti--holomorphic dependence of the free energies at every genus may be captured by a finite number of generators, in such a way that this dependence is polynomial. In \cite{al07} this was extended to general compact Calabi--Yau geometries. There, it was shown that if one defines the propagators $S^{ij}$, $S^i$, $S$, by
\be
\label{eq:propagatordefinitions}
\partial_{\bar{i}} S^{jk} = {\bar{C}_{\bar{i}}}^{\ jk}, \qquad \partial_{\bar{i}} S^j = G_{i\bar{i}}\, S^{ij}, \qquad \partial_{\bar{i}} S = G_{i\bar{i}}\, S^i,
\ee
\noindent
akin to \cite{bcov93}, then one can show that any covariant derivative of these propagators, $S^{ij}$, $S^i$, $S$, and also of $\partial_i K$, may be expressed back in terms of these objects and the Yukawa couplings, $C_{ijk}$. For example,
\be
D_i S^{jk} = \delta^j_i S^k + \delta^k_i S^j - C_{i\ell m}S^{\ell j}S^{mk} + f^{jk}_i,
\ee
\noindent
where $f^{jk}_i$ is a holomorphic function related to the fact that $S^{jk}$ in  \eqref{eq:propagatordefinitions} is only defined up to a holomorphic quantity.  As we turn to the case of non--compact geometries, one can consistently turn off\footnote{Strictly speaking this is only true in the holomorphic limit.} the propagators $S^i$, $S$, $\partial_i K$, and work only with $S^{jk}$ \cite{alm08}. The Christoffel symbols are also expressable in terms of the propagators. Restricting to this simplified scenario, one finds:
\be
\Gamma^i_{jk} = - C_{jk\ell} S^{\ell i} + \tilde{f}^i_{jk},
\ee 
\noindent
where $\tilde{f}^i_{jk}$ is another holomorphic function. Both $f^{jk}_i$ and $\tilde{f}^i_{jk}$ can be determined within specific examples, partly by choice and partly by internal consistency of the equations.

The main idea behind the above discussion is that one may now use the propagators $S^{ij}$ to take up the role of the anti--holomorphic variables $\bar{z}_i$. Using the chain rule,
\be
\partial_{\bar{i}} = \frac{\partial S^{jk}}{\partial \bar{z}_i}\, \frac{\partial}{\partial S^{jk}} = {\bar{C}_{\bar{i}}}^{\ jk}\, \frac{\partial}{\partial S^{jk}}.
\ee
\noindent
It follows that the holomorphic anomaly equations \eqref{eq:ttbarholanomeqs} become
\be
\label{eq:propagatorsholanomeqs}
\frac{\partial F_g}{\partial S^{ij}} = \frac{1}{2} \left( D_i D_j F_{g-1} + \sum_{h=1}^{g-1} D_i F_{g-h} \, D_j F_h \right), \qquad g\geq 2,
\ee
\noindent
where we have been able to get rid of ${\bar{C}_{\bar{i}}}^{\ jk}$. This will be our starting point. Later, in section \ref{sec:1parameter}, we shall see how these equations are integrated and what is the structure of the resulting solution with respect to the propagators. Let us just mention for the moment that because $D_j F_1 = \partial_j F_1$ is \textit{linear} in the propagators, the recursion \eqref{eq:propagatorsholanomeqs} yields a polynomial dependence in the $S^{ij}$. Integration also produces a holomorphic ambiguity, whose fixing is discussed in section \ref{sec:fixingambiguities}.

Before concluding this subsection, let us point out that the recursion \eqref{eq:propagatorsholanomeqs} starts at $g=2$. The free energies $F_0$ and $F_1$ are of a different sort and they are handled separately. On what concerns $F_0$, the prepotential, it carries strong geometrical information. The derivative $\partial_i F_0$ is actually a period, \textit{i.e.}, a cycle integral of the $(3,0)$--form $\Omega$, and taking two more derivatives produces the Yukawa couplings. Also note that $F_0$ does not appear explicitly in the holomorphic anomaly equations, except through the Yukawa couplings. As to $F_1$, it appears in \eqref{eq:propagatorsholanomeqs} through derivatives and it has a geometrical interpretation in terms of Ray--Singer torsion. All we will need later on is that
\be
\partial_i F_1 = \frac{1}{2} C_{ijk} S^{jk} + \alpha_i,
\ee
\noindent
where $\alpha_i$ is a holomorphic quantity that can be fixed for each example.

\subsection{Rewriting the Holomorphic Anomaly Equations}

As they stand, the holomorphic anomaly equations essentially yield a recursive procedure to compute the \textit{perturbative} free energy. Indeed, they are equations precisely for the perturbative $F_g$. But, in order to address nonperturbative issues, one would need to write these equations either for the full free energy, $F$, or even the partition function $Z = \exp F$. That equivalent formulations of the holomorphic anomaly equations may be found for $F$ or $Z$ was already anticipated early on \cite{bcov93b}, as a \textit{master equation} for the partition function. Shortly before \cite{bcov93}, the following explicit master equation was suggested in \cite{w93}, for $Z = \exp \sum_{g=0}^{+\infty} g_s^{2g-2} F_g$,
\be\label{eq:Witteneq}
\left( \partial_{\bar{i}} - \frac{1}{2} g_s^2\, {\bar{C}_{\bar{i}}}^{\ jk} D_j D_k \right) Z = 0.	
\ee
\noindent
Unfortunately, as it stands, this equation does not quite reproduce the holomorphic anomaly equations as the upper and lower summation  bounds in the resulting quadratic part are not as in \eqref{eq:propagatorsholanomeqs}. Recall that the proper limits, from $h=1$ to $h=g-1$, are necessary in order to have a recursion in $g$. In \cite{bcov93} another version of a master equation was considered, this time for $\mathcal{F} \simeq \sum_{g=1}^{+\infty} g_s^{2g-2} F_g$ (note that the sum starts at $g=1$, so that $F_0$ is not included). This equation is
\be\label{eq:BCOVmastereq}
\left( \partial_{\bar{i}} - \partial_{\bar{i}} F_1 \right) \rme^{\mathcal{F}} = \frac{1}{2} g_s^2\, {\bar{C}_{\bar{i}}}^{\ jk} D_j D_k \rme^{\mathcal{F}},
\ee
\noindent
which now does give back the holomorphic anomaly equations for $g\geq 2$. As before, $F_0$ has no direct role in the equations other than through the Yukawa couplings. Later on, in \cite{emo07}, an equivalent version of the holomorphic anomaly master equation was discussed which is closer to what we are interested in. Defining
\be
Z \simeq \exp \sum_{g=0}^{+\infty} g_s^{2g-2} F_g, \qquad \widetilde{Z} = \rme^{-\frac{1}{g_s^{2}} F_0 - F_1}\, Z, \qquad \widehat{Z} = \rme^{- \frac{1}{g_s^{2}} F_0}\,Z,
\ee
\noindent
it is possible to show that the holomorphic anomaly equations may be written in the form\footnote{\cite{emo07} works in the language of matrix models, where $g_s$ is identified with $N^{-1}$.}
\be\label{eq:EMOeq}
\frac{1}{\widetilde{Z}}\, \partial_{\bar{i}} \widetilde{Z} = \frac{1}{2} g_s^2\, \frac{1}{\widehat{Z}}\, {\bar{C}_{\bar{i}}}^{\ jk} D_j D_k \widehat{Z}.	
\ee
\noindent
It is noteworthy that these master equations resemble generalized versions of the heat equation, and this has been explored in \cite{w93, gnp06, nw07} along with the possibility that the partition function is actually described by a Riemann theta function.

Our goal here is to write a master equation for the partition function $Z \simeq \exp \sum_{g=0}^{+\infty} g_s^{2g-2}F_g$, including both $F_0$ and $F_1$, with the property that it will yield the tower of holomorphic anomaly equations for genus $g \ge 2$ in the spirit of the above examples. Let us emphasize that such an equation should be written in terms of $Z$ alone, so that later on one is able to promote $Z$ to a fully nonperturbative partition function. This is in contrast with, \textit{e.g.}, \eqref{eq:EMOeq} above, which further involves the functions $\widetilde{Z}$ and $\widehat{Z}$. The price of working only with $Z$ will be that both $F_0$ and $F_1$ will explicitly appear in the resulting equation. However, this is not a problem since one should regard $F_0$ and $F_1$ as geometrical data associated to the specific model under study. In this sense, let us stress that \eqref{eq:EMOeq} is naturally \textit{already} the equation we are looking for; we just need to rewrite it in a more convenient way.

Before deriving the appropriate form of the master equation, we shall switch to propagator variables for the anti--holomorphic dependence. Now, let us first state the master equation, and then show that it reproduces the familiar holomorphic anomaly equations \eqref{eq:propagatorsholanomeqs}. The master equation is:
\be\label{eq:UVWZeq}
\left( \frac{\partial}{\partial S^{ij}} + \frac{1}{2} \left( U_i D_j + U_j D_i \right) - \frac{1}{2} g_s^2\, D_i D_j \right) Z = \left( \frac{1}{g_s^{2}}\, W_{ij} + V_{ij} \right) Z.
\ee
\noindent
Here $U_i$, $V_{ij}$, and $W_{ij}$ are functions involving $F_0$ and $F_1$ which ensure that we will obtain the correct holomorphic anomaly equations and nothing else. They can be thought of as geometrical data. Switching from the partition function, $Z$, to the free energy, $F= \log Z$, one may write \eqref{eq:UVWZeq} as
\be\label{eq:UVWFeq}
\frac{\partial F}{\partial S^{ij}} + \frac{1}{2} \left( U_i D_j F + U_j D_i F \right) - \frac{1}{2} g_s^2 \left( D_i D_j F+ D_i F D_j F \right) = \frac{1}{g_s^{2}}\, W_{ij}+V_{ij}.
\ee
\noindent
Let us now use a perturbative \textit{ansatz} for the free energy
\be
\label{perturbativeF}
F_{\text{pert}} \equiv F^{(0)} \simeq \sum_{g=0}^{+\infty} g_s^{2g-2} F^{(0)}_{g}
\ee
\noindent
and find out what are the values of $U_i$, $V_{ij}$, $W_{ij}$ which yield back \eqref{eq:propagatorsholanomeqs}. We have explicitly added a superscript $(0)$ to stress that the free energies are perturbative, anticipating that we shall soon work with nonperturbative corrections. The quadratic term $D_i F^{(0)}D_j F^{(0)}$ yields
\be
D_i F^{(0)} D_j F^{(0)} = \sum_{g=0}^{+\infty} g_s^{2g-4} \sum_{h=0}^g D_i F^{(0)}_{g-h} D_j F^{(0)}_h,
\ee
\noindent
where the limits in the $h$--sum are not quite the ones we need. It will thus be the role of $U_i$ to remove the first and last terms in the sum. Doing the rest of the calculation and assembling equal powers of $g_s$ together, we find
\bea
&& \frac{1}{g_s^{2}} \left( \frac{\partial F^{(0)}_0}{\partial S^{ij}} + \frac{1}{2} U_i D_j F^{(0)}_0 + \frac{1}{2} U_j D_i F^{(0)}_0 - \frac{1}{2} D_i F^{(0)}_0 D_j F^{(0)}_0 \right) + \nonumber \\
&+&
g_s^0 \left( \frac{\partial F^{(0)}_1}{\partial S^{ij}} + \frac{1}{2} U_i D_j F^{(0)}_1 + \frac{1}{2} U_j D_i F^{(0)}_1 - \frac{1}{2} D_i D_j F^{(0)}_0 - \frac{1}{2} D_i F^{(0)}_0 D_j F^{(0)}_1 - \frac{1}{2} D_i F^{(0)}_1 D_j F^{(0)}_0 \right) + \nonumber\\
&+& \sum_{g=2}^{+\infty} g_s^{2g-2} \left( \frac{\partial F^{(0)}_g}{\partial S^{ij}} + \frac{1}{2} U_i D_j F^{(0)}_g + \frac{1}{2} U_j D_i F^{(0)}_g - \frac{1}{2} D_i D_j F^{(0)}_{g-1} - \frac{1}{2} \sum_{h=0}^g D_i F^{(0)}_h D_j F^{(0)}_{g-h} \right) = \nonumber \\
&=& \frac{1}{g_s^{2}}\, W_{ij} + g_s^0\, V_{ij}.
\eea
\noindent
The terms at order $g_s^{-2}$ and $g_s^0$ dictate the values of $W_{ij}$ and $V_{ij}$, respectively; and we fix $U_i$ by imposing that the $h=0$ and $h=g$ are removed in the $g_s^{2g-2}$ term, for each $g\geq 2$. It finally follows 
\bea
\label{eq:Uconstraint}
U_i &=& D_i F^{(0)}_0, \\
\label{eq:Vconstraint}
V_{ij} &=& \frac{\partial F^{(0)}_1}{\partial S^{ij}} - \frac{1}{2}D_i D_j F^{(0)}_0, \\
\label{eq:Wconstraint}
W_{ij} &=& \frac{\partial F^{(0)}_0}{\partial S^{ij}} + \frac{1}{2} D_i F^{(0)}_0 D_j F^{(0)}_0.
\eea
\noindent
These expressions imply the tower of holomorphic anomaly equations \eqref{eq:propagatorsholanomeqs}. Let us stress again that \eqref{eq:UVWZeq} is precisely the well--known master equation version of the holomorphic anomaly equations---as mentioned equivalent to \eqref{eq:EMOeq}---simply rewritten in a version  prepared to accept a nonperturbative \textit{ansatz} for the partition function.

To conclude, we specialize our version of the master equation \eqref{eq:UVWFeq} to the case when the background geometry has a complex moduli space of dimension one. This means that we only have a single holomorphic coordinate, call it $z$, and a single propagator, $S^{zz}$. In this case, the master equation for the holomorphic anomaly equations has the form
\be\label{eq:1parameterUVWeq}
\partial_{S^{zz}} F + U\, D_z F - \frac{1}{2} g_s^2 \left( D_z D_z F + D_z F D_z F \right) = \frac{1}{g_s^{2}}\, W + V,
\ee
\noindent
where we have removed unnecessary indices in $U$, $V$ and $W$.

\section{A Word on the Transseries Framework}\label{sec:transseriesintro}

The concept of a \textit{resurgent transseries} plays a central role in this paper. As such, in an effort to be self--contained, we briefly review it in this section. We follow a practical approach: we do not give the most general definitions possible, but rather introduce transseries based on examples, with our later applications in mind. The reader who wants to know more about the general theory of transseries can find a good starting point in \cite{e0801, asv11}.

In physics, one often encounters perturbative series of the general form
\be
F(x) \simeq \sum_{g=0}^{+\infty} F_g \, x^g,
\label{eq:fseries}
\ee
\noindent
with $F$ some physical quantity and $x$ a (perturbative) coupling. In many examples, the quantity $F(x)$ is known to satisfy some (non--linear) equation although solving such equation exactly and finding a function $F(x)$ is not possible. What one often \textit{can} do instead is to solve the relevant equation order by order in $x$, and in this way inductively find the perturbative coefficients $F_g$.

As is well known, the perturbative coefficients $F_g$ do not always tell the full story about a problem. A famous example is the function
\be
F(x) = \rme^{-A/x} \qquad (x \geq 0),
\ee
\noindent
which is clearly a non--trivial function, yet at $x=0$ it has a Taylor expansion of the form \eqref{eq:fseries} with $F_g=0$ for all $g$. Such ``nonperturbative'' functions appear frequently in physics; for example, instanton contributions in quantum field theories are of this form, where the parameter $x$ is Planck's constant $\hbar$ and $A$ is the instanton action. This example motivates the introduction of more general objects, beyond power series, called \textit{transseries}. A transseries in the variable $x$ is a formal series in several ``building blocks'', each of which is a function of $x$. The canonical example to have in mind is a transseries of the form
\be
F(x,\sigma) = \sum_{n=0}^{+\infty} \sum_{g=0}^{+\infty} \sigma^n\, \rme^{-nA/x}\, F^{(n)}_g\, x^g.
\label{eq:1parts}
\ee
\noindent
Here, the two building blocks are the original perturbative parameter, $x$, and the ``instanton factor'', $\rme^{-A/x}$. One also introduces an ``instanton counting parameter'' $\sigma$. This makes it very explicit how the new building block augments simple power series. In fact, the transseries is a perturbative expansion in both of these building blocks, with coefficients $F^{(n)}_g$. Just like in the ordinary power series case, one can plug such a transseries \textit{ansatz} into an equation and try to formally solve the equation order by order in $x$ and in $\rme^{-A/x}$.

The transseries parameter $\sigma$ turns out to be very useful when one constructs a transseries solution to either a differential or a finite difference equation (see \cite{m08, gikm10} for the first applications in string theory and \cite{asv11, sv13} for further developments), since in these cases $\sigma$ is often an integration constant. This means that the transseries one finds is a formal solution to the equation in question for \textit{any} value of $\sigma$. Note that changing $\sigma$ does not change the ordinary perturbative sub--series given by the coefficients $F^{(0)}_g$; the different formal transseries one finds by changing $\sigma$ only differ in their \textit{nonperturbative} content. Of course, in problems described by higher--order equations, one expects to find several integration constants. As a result, one needs \textit{multi--parameter} transseries in these problems. For example, a two--parameters transseries, with instanton counting parameters $\sigma_1$ and $\sigma_2$, could be of the form
\be
F(x,\sigma_1,\sigma_2) = \sum_{n=0}^{+\infty} \sum_{m=0}^{+\infty} \sum_{g=0}^{+\infty} \sigma_1^n \sigma_2^m\, \rme^{-\left( n A_1 + m A_2 \right)/x}\, F^{(n|m)}_g\, x^g.
\label{eq:2parts}
\ee
\noindent
Examples for which one can inductively construct such formal two--parameter transseries solutions, the Painlev\'e I and II equations and the string equation for the hermitean matrix model with quartic potential, were studied in detail in \cite{gikm10, asv11, sv13}.

One further example of the type of transseries structure we are interested in appears in situations where two of the instanton actions are opposite. For example, if we plug a transseries \textit{ansatz} of the form (\ref{eq:2parts}) into a certain class of equations, known as \textit{resonant}, we may find that $A_2 = -A_1$. This fact may actually lead to problems when one inductively solves for the transseries coefficients $F^{(n|m)}_g$, since the coefficients in the recursive equation one has to solve,  appearing in front of $F^{(n|m)}_g$, depend on linear combinations of the $A_\alpha$. As such, these coefficients may now vanish making the system unsolvable. This problem, called \textit{resonance}, was originally encountered in \cite{gikm10} and also played a crucial role in the examples in \cite{asv11, sv13}. It was found that it can be solved by introducing an extra ``building block'' into the transseries---a block of the form $\log x$. Its derivatives will lead to additional terms in the recursive equations that in general solve the resonance problem. Thus, apart from transseries of the above form, in problems with resonance one often encounters transseries of the form\footnote{In the  examples addressed in \cite{asv11, sv13}, the sum over logarithmic sectors is actually \textit{finite}.}
\be
F(x,\sigma_1,\sigma_2) = \sum_{n=0}^{+\infty} \sum_{m=0}^{+\infty} \sum_{k=0}^{k_{\text{max}}} \sum_{g=0}^{+\infty} \sigma_1^n \sigma_2^m\, \rme^{- \left( n A_1 + m A_2 \right)/x}\, \log^k \left(x\right) F^{(n|m)[k]}_g\, x^g,
\label{eq:2partslog}
\ee
\noindent
where we used the two--parameter setting as a concrete example. We refer the reader to \cite{asv11} for more details on the resonance phenomenon.

The transseries structure appearing in \eqref{eq:2partslog} is the most general form of transseries that we will discuss in the following, except for one final detail: in the above examples, all of the sums have started at index $0$. This is generally sufficient for the $n$, $m$ and $k$--sums, since one usually looks for a transseries solution that starts off with an ordinary series solution, with coefficients $F^{(0|0)[0]}_g$. However, this restriction is in general too strong for the $g$--sums (see, \textit{e.g.}, \cite{asv11, sv13}). Some leading coefficients in these sums may vanish, or it may happen that we need the $g$--sums to be Laurent series starting at a negative value of $g$. We shall use the generic variable $b$, with the appropriate indices attached, for the starting value of $g$ in a particular sector. For example, with an arbitrary starting order $b^{(n|m)[k]}$ our two--parameters example above becomes
\be
F(x,\sigma_1,\sigma_2) = \sum_{n=0}^{+\infty} \sum_{m=0}^{+\infty} \sum_{k=0}^{k_{\text{max}}} \sum_{g=0}^{+\infty} \sigma_1^n \sigma_2^m\, \rme^{- \left( n A_1 + m A_2 \right)/x}\, \log^k \left(x\right) F^{(n|m)[k]}_g\, x^{g+b^{(n|m)[k]}}.
\label{eq:2partslogb}
\ee
\noindent
This transseries structure (and its straightforward multi--parameter generalizations) will be the most general structure we address in this paper. For more general examples of transseries structures and the general theory behind those, we refer the reader to \cite{e0801}.

So far we have considered completely general, formal transseries, where the coefficients $F^{(n|m)}_g$ (we will now revert to the log--free case to keep the notation simple) can be arbitrary numbers. However, in most interesting cases transseries are \textit{resurgent}, \textit{i.e.}, the transseries coefficients have structural properties encoded in their growth\footnote{We shall say a bit more about how exactly these coefficients grow in the resurgent case in section \ref{sec:largeorder}.} as $g \to +\infty$. To understand what a resurgent transseries is, let us first note that they satisfy so--called \textit{bridge equations} (this is particularly clear when one starts off with differential or finite difference equations), such as 
\be
\dot{\Delta}_\omega F(x,\sigma_1,\sigma_2) = S_\omega \left( \sigma_1, \sigma_2 \right) \frac{\partial F}{\partial \sigma_1} + \widetilde{S}_\omega \left( \sigma_1, \sigma_2 \right) \frac{\partial F}{\partial \sigma_2},
\label{eq:bridgeeq}
\ee
\noindent
where, for concreteness, we once again chose a two--parameter example. Let us briefly explain the contents of this equation. On the left--hand--side, the so--called \textit{alien derivative} $\dot{\Delta}_\omega$ appears, implying that \eqref{eq:bridgeeq} is a ``bridge'' between alien and ordinary differential calculus. This is a differential operator mapping transseries to transseries, which plays an important role in \'Ecalle's theory of resurgence \cite{e81}. In particular, its exponential is used to calculate the discontinuity in a transseries when Stokes phenomenon occurs (see \cite{asv11, as13} for very explicit formulae). The alien derivative is non--zero at singular points $\omega$ in the Borel complex plane, which encode the nonperturbative content of a given transseries. On the right--hand--side, $S_\omega$ and $\widetilde{S}_\omega$ are the Stokes factors. These factors completely encode Stokes phenomenon, \textit{i.e.}, they carry the complete information on how to connect transseries sectorial solutions across the complex plane. Fortunately for us, neither the precise definition of the alien derivative nor the explicit form of the Stokes coefficients is important to convey the main message of the bridge equation \eqref{eq:bridgeeq}. All we need to know is that the alien derivative $\dot{\Delta}_\omega$ only acts on power series of $x$ and, as such, it does not change the powers of $\sigma_i$ appearing in the transseries. On the other hand, the $\sigma_i$--derivatives on the right--hand--side of \eqref{eq:bridgeeq} clearly \textit{do} change the powers of $\sigma_i$ appearing in a transseries. The result of this is that the bridge equation, when written out in components, relates coefficients $F^{(n|m)}_g$ to coefficients $F^{(n'|m')}_{g'}$ with \textit{different} $n'$, $m'$ and $g'$. In particular, it gives us relations between perturbative and nonperturbative coefficients (again, see \cite{asv11} for very explicit formulae).

This is also the origin of the name resurgent: at each singular point $\omega$, a specific nonperturbative sector will see the resurgence of other, different nonperturbative sectors due to the nature of the bridge equations. The relations one thus obtains between different coefficients in different sectors are so stringent that we could, in principle, determine \textit{all} nonperturbative coefficients if the perturbative coefficients and the Stokes factors are known. In section \ref{sec:largeorder}, we shall see that the resulting relations take the form of large--order relations, and discuss how this allows us to test the resurgent properties of the transseries appearing in topological string theory.

Before proceeding, let us stress one important point. In the above discussion we have implicitly assumed the existence of a (non--linear) differential or finite--difference equation in the perturbative coupling, $x$, with the transseries parameterizing families of solutions to such an equation. This is for instance the case in earlier examples dealing with matrix models (where there is a string equation) and their double--scaling limits (described by Painlev\'e--type differential equations) \cite{m08, gikm10, asv11, sv13}. From a string theoretic point--of--view, one may think of these equations as equations in $g_s$, implying one can further determine important structural data. For instance, such equations precisely compute the instanton actions, determine the number of the transseries parameters, or the starting order $b$. Let us stress, however, that this is \textit{not} the case in our problem of the topological string\footnote{Finding an equation in $g_s$ for diverse topological string backgrounds is in itself a fascinating research problem.}. Rather, the holomorphic anomaly equations are a set of consistency relations for the topological string free energy and, as such, turn out to be equations in the moduli space coordinates. It is thus quite remarkable that it is still possible to solve these equations with a transseries \textit{ansatz}, although, consequently, it will turn out that just as the (perturbative) holomorphic ambiguity is not determined, neither will we be able to determine the complete structural data for the transseries nor the associated instanton actions. Nonetheless, as we shall see, a very large number of properties may still be obtained!

Finally, note that in the derivation of the net of resurgence relations, as briefly discussed above, a bridge equation is needed and in turn this is based on the existence of some equation in the perturbative expansion parameter, $g_s$. The fact that we do not know of such an equation for the topological string does not imply that the free energy is not a resurgent function; quite to the contrary: it is in fact expected to be, based on the large accumulated experience with matrix models and their double--scaling limits.  Of course some deviations from these examples can still occur, and our strategy will be adjusted as they may appear.

\section{One--Parameter Resurgent Transseries}\label{sec:1parameter}

Our main goal in this paper is to construct transseries solutions to the holomorphic anomaly equations. In the previous two sections we have seen how to write these equations in a fashion adapted to nonperturbative solutions, and we have discussed what exactly is the structure of transseries solutions. We will now show how to construct one--parameter transseries solutions to the holomorphic anomaly equations, and how their anti--holomorphic dependence is fixed via exponentials and polynomials in the propagators both at perturbative and nonperturbative levels.

\subsection{A Closer Look at the Perturbative Sector}

Before diving into the details of a transseries solution for the topological string free energy, let us review what is the structure of the \textit{perturbative} solution, $F^{(0)}_g$, as integrated from the holomorphic anomaly equations. We are particularly interested in the anti--holomorphic moduli dependence of these free energies, \textit{i.e.}, how they depend on the propagators.

Let us start with \cite{yy04} where it was pointed out that the holomorphic anomaly equations may be efficiently integrated to yield the string perturbative expansion. Within the example of the quintic in $\mathbb{P}^4$, that work introduced a finite set of generators and proved that the genus $g$ free energies are polynomials of a certain degree in the generators. These generators captured the anti--holomorphic dependence of the $F^{(0)}_g$ and the degree of the polynomials was found to be $3g-3$. Later, in \cite{al07}, this result was extended to arbitrary compact Calabi--Yau threefolds with any number of complex moduli. In particular, it was shown that the propagators $S^{ij}$, $S^i$, $S$ and $\partial_i K$, generate a differential ring, \textit{i.e.}, any covariant derivative of a polynomial in these generators is again a polynomial in these generators; and it was found that the free energy at genus $g$ is a polynomial of total degree $3g-3$ precisely in the propagators, provided one assigns degree $+1$ to $S^{ij}$ and $\partial_i K$, degree $+2$ to $S^i$, and degree $+3$ to $S$. The same result, but for local Calabi--Yau geometries, was obtained in \cite{hkr08, alm08}. Another approach to integration based on the differential ring of quasi--modular forms was followed in \cite{gkmw07}.

Since we shall later study the structure of higher--instanton free energies $F^{(n)}_g$ in great detail, it will be useful to first derive the polynomial structure of the perturbative free energies, $F^{(0)}_g$. A nice and elegant way to show that $F^{(0)}_g$ is a polynomial of degree $3g-3$ is the one used originally in \cite{al07}. Given the holomorphic anomaly equations \eqref{eq:propagatorsholanomeqs}, one assigns a degree to each object in the following way
\be
F^{(0)}_g \mapsto d(g), \quad D_i \mapsto +1, \quad S^{ij} \mapsto +1.
\ee
\noindent
Note that this degree will coincide with the usual notion of polynomial degree in the $S^{ij}$--variables. Note also that since
\be
\label{eq:covariantderivativeonpropagator}
D_i S^{jk}  = - C_{i\ell m}\, S^{\ell j}S^{mk} + f^{jk}_i,
\ee
\noindent
it is consistent to assign degree $+1$ to $D_i$ (and zero to any holomorphic quantity like the Yukawa couplings). The idea now is to prove by induction that $d(g) = 3g-3$. The base case of the induction is simply proved by direct integration of \eqref{eq:propagatorsholanomeqs} for $g=2$, where one finds that $F^{(0)}_2$ is a polynomial in $S^{ij}$ of degree $3$. Then, assuming $d(h)=3h-3$ for $h<g$, one finds that the right--hand--side of \eqref{eq:propagatorsholanomeqs} has degree $3 \left( g-1 \right)-3+1+1=3g-4$, as it should be, thus proving the assumption that the left--hand--side has degree $3g-3-1=3g-4$.

A more pedestrian but equivalent way to obtain the same conclusion is again to prove by induction that
\be
F^{(0)}_g = \textrm{Pol} \left( \{S^{ij}\}; 3g-3 \right) \equiv \textrm{Pol} \left( 3g-3 \right), \qquad g\geq 2,
\ee
\noindent
but this time around without relying directly on the notion of a degree. Instead, one may perform the integration directly once the structure of the right--hand--side of \eqref{eq:propagatorsholanomeqs} is known. In here, $\textrm{Pol} \left( \{S^{ij}\}; d \right)$ stands for a polynomial of total degree $d$ in the propagators, and whose coefficients are functions of the holomorphic complex moduli. The base case of the induction is the same as before, proved by explicit integration. Then, assuming $F^{(0)}_h = \textrm{Pol} \left( 3h-3 \right)$ for $h<g$, and since
\bea
D_i F^{(0)}_h &=& \textrm{Pol} \left( 3h-3+1 \right), \\
D_i D_j F^{(0)}_h &=& \textrm{Pol} \left( 3h-3+2 \right),
\eea
\noindent
where we used \eqref{eq:covariantderivativeonpropagator}, it follows
\bea
\textrm{RHS of }\eqref{eq:propagatorsholanomeqs} &=& \textrm{Pol} \left( 3 \left( g-1 \right) -3+2 \right) + \sum_{h=1}^{g-1} \textrm{Pol} \left( 3 \left( g-h \right) -3+1 \right) \times \textrm{Pol} \left( 3h-3+1 \right) = \nonumber \\
&=& \textrm{Pol} \left( 3g-4 \right).
\eea
\noindent
Integration with respect to the propagators raises the polynomial degree by one, so we find that indeed $F^{(0)}_g = \textrm{Pol} \left( 3g-3 \right)$ which is what we wanted to prove.

In the analysis of the structure of the higher--instanton free energies we shall make use of this latter strategy. The reason for not using the initial degree approach is, as we shall see, the appearance of exponentials of the propagators in the solutions, which would make such approach impracticable (but in section \ref{sec:multiparameter} we will fully explain this point).

\subsection{Setting the Transseries Construction}

We shall now address the construction of transseries solutions, within the context of one--parameter transseries and for Calabi--Yau geometries with a complex moduli space of dimension one. Along this paper we shall lift these restrictions and construct completely general solutions but, to introduce the overall strategy, it proves useful to start with a simplified setting. Previously, we discussed how to appropriately write the holomorphic anomaly equations as a single equation for the full partition function, instead of an infinite tower of equations for the perturbative genus $g$ free energies. In this way, in the dimension one case, we obtained equation (\ref{eq:1parameterUVWeq}) with constraints (\ref{eq:Uconstraint}), (\ref{eq:Vconstraint}) and (\ref{eq:Wconstraint}) (with $i=j=z$). As already mentioned, while this equation is just a simple repackaging of the holomorphic anomaly equations, it also allows for a solution beyond the perturbative level: we may now try to solve it with a transseries \textit{ansatz} of the form
\begin{equation}\label{eq:1parameterF}
F(\sigma,g_s) = \sum_{n=0}^{+\infty} \sigma^n\, \rme^{- A^{(n)}/g_s}\, F^{(n)} (g_s),
\end{equation}
\noindent
where the transseries parameter $\sigma$ keeps track of the instanton number $n$. Further, $A^{(n)} := n A$, where $A\equiv A(z,S^{zz})$ is the instanton action, and the asymptotic perturbative expansions around each fixed multi--instanton sector ($n \neq 0$; the perturbative case is described separately in \eqref{perturbativeF}) take the form
\begin{equation}\label{eq:1parameteransatz}
F^{(n)}(g_s) \simeq \sum_{g=0}^{+\infty} g_s^{g+b^{(n)}} F^{(n)}_g (z,S^{zz}).
\end{equation}
\noindent
In this expression $b^{(n)}$, the starting order, is a characteristic exponent which depends upon the instanton number but which is not determined by the equations themselves. When we plug the above asymptotic series (\ref{eq:1parameteransatz}) into the holomorphic anomaly equations (\ref{eq:1parameterUVWeq}) we will find an expansion in both $\sigma$ and $g_s$ whose coefficients constitute the tower of equations we wish to solve\footnote{In particular we recover the familiar ``perturbative'' tower of holomorphic anomaly equations.}. However, we will further find that the particular equations we obtain do depend on the actual values of $b^{(n)}$, and as such we must find a procedure to fix them.

It is both pedagogical and convenient to perform the expansion of the holomorphic anomaly equation with the \textit{ansatz} (\ref{eq:1parameterF}) first in $\sigma$, dealing dealing with functions of $g_s$, $F^{(n)}(g_s)$, and after that to expand the result in $g_s$. For each $n\geq 1$ we find  
\bea
&&
\left( \partial_{S^{zz}} - \frac{1}{g_s} \partial_{S^{zz}} A^{(n)} \right) F^{(n)} - \frac{1}{2} g_s^2 \left( D_z - \frac{1}{g_s} \partial_z A^{(n)} + 2\, \partial_z \widetilde{F}^{(0)} \right) \left( \partial_z - \frac{1}{g_s} \partial_z A^{(n)} \right) F^{(n)} = \nonumber \\ 
&&
= \frac{1}{2}g_s^2\, \sum_{m=1}^{n-1} \left( \partial_z - \frac{1}{g_s} \partial_z A^{(m)} \right) F^{(m)} \left( \partial_z - \frac{1}{g_s} \partial_z A^{(n-m)} \right) F^{(n-m)}.
\label{eq:1parameterpreNPeq}
\eea
\noindent
Here we have defined $\widetilde{F}^{(0)}(g_s) := F^{(0)}(g_s) - \frac{1}{g_s^2}\, F^{(0)}_0$. The differential operator $\partial_z - \frac{1}{g_s}\, \partial_z A^{(n)}$ appears in (\ref{eq:1parameterpreNPeq}) because of the identity
\be
\partial_z^k \left( \rme^{-A^{(n)}/g_s} F^{(n)} \right) = \rme^{-A^{(n)}/g_s} \left(\partial_z -\frac{1}{g_s}\, \partial_z A^{(n)} \right)^k F^{(n)}, \qquad k=1,2,\ldots.
\ee
\noindent
The right--hand--side of (\ref{eq:1parameterpreNPeq}) comes from the quadratic term in (\ref{eq:1parameterUVWeq}). We have separated the terms corresponding to values of $m$ equal to $0$ and $n$, which involve $F^{(0)}$ and $F^{(n)}$, and moved them to the left;  these and the $U_i\, D_j F$ terms build up the $\widetilde{F}^{(0)}$ term in (\ref{eq:1parameterpreNPeq}). In particular, the left--hand--side of (\ref{eq:1parameterpreNPeq}) may be written as
\be
\mathcal{D}^{(n)}(g_s)\, F^{(n)}(g_s),
\ee
\noindent
where $\mathcal{D}^{(n)}(g_s)$ is a $g_s$--dependent differential operator
\be\label{eq:1parameterDoperator}
\mathcal{D}^{(n)}(g_s) \equiv \partial_{S^{zz}} - \frac{1}{g_s} \partial_{S^{zz}} A^{(n)} - \frac{1}{2} g_s^2 \left( D_z - \frac{1}{g_s} \partial_z A^{(n)} + 2\, \partial_z \widetilde{F}^{(0)} \right) \left( \partial_z - \frac{1}{g_s} \partial_z A^{(n)} \right).
\ee

Let us turn to the $g_s$--expansion, using (\ref{eq:1parameteransatz}) and $\mathcal{D}^{(n)}(g_s)=\sum_{g=-1}^{+\infty} g_s^g\, \mathcal{D}^{(n)}_g$, with $\mathcal{D}^{(n)}_{-1} := -\partial_{S^{zz}} A^{(n)}$. We find
\bea\label{eq:1paramgssumeq}
\sum_{g=-1}^{+\infty} g_s^g \left\{ -\partial_{S^{zz}} A^{(n)} F^{(n)}_{g+1} + \sum_{h=0}^g \mathcal{D}^{(n)}_h F^{(n)}_{g-h} \right\} &=& \\
&&
\hspace{-190pt} = \sum_{m=1}^{n-1} \sum_{g=0}^{+\infty} g_s^{g+B(n,m)}\, \frac{1}{2}\, \sum_{h=0}^g \left( \partial_z F^{(m)}_{h-1} - \partial_z A^{(m)} F^{(m)}_h \right) \left(\partial_z F^{(n-m)}_{g-1-h} - \partial_z A^{(n-m)} F^{(n-m)}_{g-h} \right), \nonumber
\eea
\noindent
where we have defined $B(n,m):=b^{(m)}+b^{(n-m)}-b^{(n)}$ for $n,m \ge 1$. As we have commented earlier, the holomorphic anomaly equations do not compute the values of the starting genus $b^{(n)}$. This is not a surprise since the values of $b^{(n)}$ are very much dependent upon the model and the geometry one is working with. Note, however, that the values of $B(n,m)$ determine the structure of the equations in the following way. Looking at \eqref{eq:1paramgssumeq} one sees that if $B(n,m)$ is sufficiently negative we will have terms on the right--hand--side of the equation but not on the left. This is in contrast to what happens when $B$ is sufficiently positive. In that case, after collecting terms with similar powers in $g_s$, we find an equation for $F^{(n)}_g$ which depends on previously computed data (see below). If $B(n,m)\geq 0$ for every $n,m \ge 1$, we obtain a system of equations which can be solved recursively in $n$ and $g$, and which has no extra constraint equations. If $B(n,m)<0$ for some $n$, $m$, some such constraints will appear and they will have to be treated separately. Note that while \eqref{eq:1paramgssumeq} imposes $B(n,m)$ to be an integer, all it imposes on the starting orders $b^{(n)}$ is that this constraint is satisfied; in particular the $b^{(n)}$ are not required to be integers.

Let us start with the simplest situation, $B(n,m)\geq 0$, and in particular work out the example $B(n,m)=+1$ for each $n$, $m$. After solving this case we shall see how it generalizes to any other $b^{(n)}$ such that $B(n,m)\geq 0$ with $n,m \ge 1$. Setting $B=+1$ and collecting terms with the same power in $g_s$, we obtain the tower of equations
\bea
\label{eq:1parameteralmostNPeq}
\partial_{S^{zz}} A^{(n)} F^{(n)}_{g+1} - \sum_{h=0}^g \mathcal{D}^{(n)}_h F^{(n)}_{g-h} + && \\
&&
\hspace{-70pt} + \frac{1}{2}\,\sum_{m=1}^{n-1} \sum_{h=0}^{g-1} \left( \partial_z F^{(m)}_{h-1} - \partial_z A^{(m)} F^{(m)}_h \right) \left( \partial_z F^{(n-m)}_{g-2-h} - \partial_z A^{(n-m)} F^{(n-m)}_{g-1-h} \right) = 0,
\nonumber
\eea
\noindent
for $g=-1,0,1,2,\ldots$. In the sums over $h$ we are using the convention that $F^{(m)}_h$ is zero whenever $h<0$. The explicit form the $\mathcal{D}^{(n)}_g$ operators, which we defined earlier as coefficients in the $g_s$--expansion of (\ref{eq:1parameterDoperator}), is
\bea
\label{eq:Dn0}
\mathcal{D}^{(n)}_0 &=& \partial_{S^{zz}} - \frac{1}{2} \left( \partial_z A^{(n)} \right)^2, \\
\label{eq:calD1}
\mathcal{D}^{(n)}_1 &=& \frac{1}{2}\, D_z^2 A^{(n)} + \partial_z A^{(n)} \left( D_z + \partial_z F^{(0)}_1 \right), \\
\label{eq:calD2}
\mathcal{D}^{(n)}_2 &=& - \frac{1}{2}\, D_z^2 - \partial_z F^{(0)}_1\, D_z, \\
\mathcal{D}^{(n)}_{2h-1} &=& \partial_z A^{(n)}\, \partial_z F^{(0)}_h, \qquad h = 2, 3, \ldots, \\
\label{eq:calD2h}
\mathcal{D}^{(n)}_{2h} &=& -\partial_z F^{(0)}_h\, D_z, \qquad h = 2, 3, \ldots.
\eea
\noindent
Let us have a closer look at the equations (\ref{eq:1parameteralmostNPeq}). For $g=-1$ there is only one term,
\be
\partial_{S^{zz}} A^{(n)}\, F^{(n)}_0 = 0.
\ee
\noindent
Since $F^{(n)}_0\neq 0$ by definition, and $A^{(n)}=n A$, we arrive at the important equation
\be
\partial_{S^{zz}} A =0.
\ee
\noindent
This means that the instanton action, $A$, that we presupposed dependent on both $z$ and $S^{zz}$ turns out \textit{not} to depend on the propagator $S^{zz}$. In fact, the instanton action is \textit{holomorphic}. Note, however, that our equations do not determine its $z$--functional dependence. This suggests that one may think of the instanton action, $A(z)$, as a holomorphic ambiguity, much in the same way as the familiar holomorphic ambiguities that appear in the integration of the free energies. In order to fix it, first recall the experience from matrix models \cite{msw07, ps09, dmp11} which tells us that the instanton action is generically given by a combination of periods of the spectral curve \cite{kk04}. One may then forget about matrix models \cite{bkmp07, dmp11}---albeit keeping its spectral geometry structure---by considering toric Calabi--Yau threefolds, whose mirror is essentially described by a Riemann surface, the mirror curve \cite{hv00}, and take the instanton action as an appropriate combination of periods of this curve, \textit{i.e.}, as an appropriate combination of Calabi--Yau periods.

Using this result back in (\ref{eq:1parameteralmostNPeq}) simplifies the equations to
\bea
\label{eq:1parameterNPeq}
\left( \partial_{S^{zz}} - \frac{1}{2} \left( \partial_z A^{(n)} \right)^2 \right) F^{(n)}_g &=& \\
&&
\hspace{-130pt} = - \sum_{h=1}^g \mathcal{D}^{(n)}_h F^{(n)}_{g-h} + \frac{1}{2}\, \sum_{m=1}^{n-1} \sum_{h=0}^{g-1} \left( \partial_z F^{(m)}_{h-1} - \partial_z A^{(m)} F^{(m)}_h \right) \left( \partial_z F^{(n-m)}_{g-2-h} - \partial_z A^{(n-m)} F^{(n-m)}_{g-1-h} \right). \nonumber
\eea
\noindent
The operator that appears on the left--hand--side is $\mathcal{D}^{(n)}_0$, \eqref{eq:Dn0}. Note that the right--hand--side of (\ref{eq:1parameterNPeq}) depends only on \textit{previous} instanton sectors and genera, and it is, therefore, a completely known object in the recursive integration of the equations. Further, the set of operators $\mathcal{D}^{(n)}_h$ appearing on the right--hand--side of (\ref{eq:1parameterNPeq}) will always include a second--order covariant derivative, as long as $g \ge 2$ (just look at \eqref{eq:calD2} above). In other words, \eqref{eq:1parameterNPeq} is, effectively, the nonperturbative analogue of the holomorphic anomaly equations of \cite{bcov93}.

As we explained before, \eqref{eq:1parameterNPeq} are the resulting ``nonperturbative'' holomorphic anomaly equations when $b^{(n)}=+1$, $\forall n$ (or, more generally, when $B(n,m)=+1$, $\forall n,m$). Otherwise, the calculation simply generalizes to
\bea
\label{eq:B1parameteralmostNPeq}
\partial_{S^{zz}} A^{(n)} F^{(n)}_{g+1} - \sum_{h=0}^g \mathcal{D}^{(n)}_h F^{(n)}_{g-h} + && \\
&&
\hspace{-140pt} + \frac{1}{2}\,\sum_{m=1}^{n-1} \sum_{h=0}^{g-B(n,m)} \left( \partial_z F^{(m)}_{h-1} - \partial_z A^{(m)} F^{(m)}_h \right) \left( \partial_z F^{(n-m)}_{g-1-B(n,m)-h} - \partial_z A^{(n-m)} F^{(n-m)}_{g-B(n,m)-h} \right) = 0.
\nonumber
\eea
\noindent
Let us stress that, independently of the value of $B(n,m)$, the equation for $g=-1$ and $n=1$ is
\be
\partial_{S^{zz}} A^{(1)} F^{(1)}_0 =0,
\ee
\noindent
which implies that $A^{(1)} \equiv A$ is holomorphic also in this general case.

As a final remark, let us note that the one--parameter transseries we addressed may be thought of as a multi--parameters transseries when all sectors but one are switched off. That is, in a general setting, where (generalized) instanton sectors are labelled by integers $(n_1 |\cdots | n_\alpha |\cdots | n_k)$, the $n$--instanton sector of an one--parameter transseries is embedded as $(0 |\cdots | n | \cdots | 0)$. In this way, it is natural to expect that much of the structure we have uncovered so far will have a natural generalization within multi--parameter transseries.

\subsection{Structure of the Transseries Solution}

So far we have obtained an infinite tower of equations; a nonperturbative extension of the holomorphic anomaly equations. The key question is, of course, whether we can solve/integrate them explicitly, or how much can be said about their (generic) solutions. Similar to what was done in the perturbative case \cite{yy04, al07, alm08}, it turns out that also in the present nonperturbative case we can be very specific about the general structure of the solutions.

Let us for the moment stick to the case $b^{(n)}=1$, $\forall n$, and let us denote by $G^{(n)}_g$ the full right--hand--side of \eqref{eq:1parameterNPeq}. Because of the simple structure of this equation, it is straightforward to integrate in the propagator, $S^{zz}$, and obtain
\begin{equation}\label{eq:1parameterintegratingG}
F^{(n)}_g (z,S^{zz}) = \rme^{\frac{1}{2} \left( \partial_z A^{(n)} \right)^2 S^{zz}} \left( f^{(n)}_g (z) + \int^{S^{zz}} \rmd \widetilde{S}^{zz}\, \rme^{- \frac{1}{2} \left( \partial_z A^{(n)} \right)^2 \widetilde{S}^{zz}}\, G^{(n)}_g (z,\widetilde{S}^{zz}) \right),
\end{equation}
\noindent
where $f^{(n)}_g (z)$ is the holomorphic ambiguity of the $n$th multi--instanton sector at order $g$. Notice the nonperturbative novelty: the appearance of exponentials in $S^{zz}$ in the solution \eqref{eq:1parameterintegratingG}. One can further anticipate the presence of different exponentials arising from previous instanton sectors building up the actual expression for $F^{(n)}_g$. The reason for these exponentials factors is, of course, the fact that in the nonperturbative setting the differential operator we are integrating is no longer just a derivative, $\partial_{S^{zz}}$, but a derivative with an extra linear term in the instanton action.

Let us first have a look at the actual form of the equations for the one--instanton sector, $n=1$. In this sector, the second term of the right--hand--side of (\ref{eq:1parameterNPeq}) vanishes, so we have
\be
G^{(1)}_g = - \sum_{h=1}^g \mathcal{D}^{(1)}_h F^{(1)}_{g-h}.
\ee
\noindent
Starting at $g=0$, as $G^{(1)}_0 = 0$ the equation for $F^{(1)}_0$ is simply
\be
\left( \partial_{S^{zz}} - \frac{1}{2} (\partial_z A)^2 \right) F^{(1)}_0 = 0,
\ee
\noindent
which can be immediately integrated with respect to $S^{zz}$ to give
\be
F^{(1)}_0 = \rme^{\frac{1}{2} (\partial_z A)^2 S^{zz}}\, f^{(1)}_0(z).
\ee
\noindent
Here, $f^{(1)}_0(z)$ is the holomorphic ambiguity. At next order, for $g=1$, we now have
\be
G^{(1)}_1 = - \mathcal{D}^{(1)}_1 F^{(1)}_0,
\ee
\noindent
where $\mathcal{D}^{(1)}_1$ is given by (\ref{eq:calD1}) with $n=1$. The exponential which is part of $F^{(1)}_0$ naturally survives the action of $\mathcal{D}^{(1)}_1$, while the $z$--derivative acting on the propagator $S^{zz}$, that sits in the exponential, gives back an $(S^{zz})^2$ term (recall \eqref{eq:covariantderivativeonpropagator}). In this way, $G^{(1)}_1$ is the product of the exponential $\exp \frac{1}{2} \left( \partial_z A \right)^2 S^{zz}$ by a polynomial of degree $2$ in $S^{zz}$. When we plug this expression into the general form of the solution (\ref{eq:1parameterintegratingG}), the exponential inside the integral cancels that of $G^{(1)}_1$ and we are left with the integral of a polynomial (of known degree). Thus, from (\ref{eq:1parameterintegratingG}) we obtain that $F^{(1)}_1$ is the product of $\exp \frac{1}{2} \left( \partial_z A \right)^2 S^{zz}$ by a polynomial of degree $3$.

This suggests that, in general, $F^{(1)}_g$ is of the form
\be\label{eq:1parameterhigherinstantonsolution}
F^{(1)}_g = \rme^{\frac{1}{2}(\partial_z A)^2 S^{zz}}\, \textrm{Pol} \left( S^{zz};3g \right),
\ee
\noindent
where $\textrm{Pol} \left(S^{zz};d\right)$ stands for a polynomial of degree $d$ in $S^{zz}$, with coefficients which are holomorphic functions of $z$. It can be checked by explicit computation that this is indeed the case for low $g$. To prove the general case one proceeds by induction, studying the structure of $G^{(1)}_g$ and finding that it is the product of a polynomial of degree $3g-1$ by the usual exponential. Integration then yields the result. We shall be more thorough when looking at the structure of the free energies for the multi--parameter transseries, as the strategy for that case will be the same as in here.

Having understood the $n=1$ sector we may move on to the $n=2$ equations. Now the quadratic terms in (\ref{eq:1parameterNPeq}) are present and the structure of the solution is more complicated. This contribution,
\be
\sum_{h=0}^{g-1} \left( \partial_z F^{(1)}_{h-1} - \partial_z A^{(1)} F^{(1)}_h \right) \left( \partial_z F^{(1)}_{g-2-h} - \partial_z A^{(1)} F^{(n-m)}_{g-1-h} \right),
\ee
\noindent
will produce terms with $\exp \frac{1}{2} \left(1+1\right) \left( \partial_z A \right)^2 S^{zz} = \exp \frac{1}{2}\, 2 \left( \partial_z A \right)^2 S^{zz}$. There will also be terms with $\exp \frac{1}{2} \left( \partial_z A^{(2)} \right)^2 S^{zz} = \exp \frac{1}{2}\, 4 \left( \partial_z A \right)^2 S^{zz}$. In this way, in order to specify the structure of the $F^{(2)}_g$ free energies, one needs to know the degrees, $d_1$ and $d_2$, of the polynomials in $S^{zz}$ that multiply these exponentials:
\be
F^{(2)}_g = \rme^{\frac{1}{2}\, 2 \left( \partial_z A \right)^2 S^{zz}}\, \textrm{Pol}\left(S^{zz};d_1\right) + \rme^{\frac{1}{2}\, 4 \left( \partial_z A \right)^2 S^{zz}}\, \textrm{Pol} \left(S^{zz};d_2 \right).
\ee
\noindent
Computing the first free energies for $g=0,1,2,\ldots,$ suggests $d_1=3(g+1-2)$ and $d_2 =3(g+1-1)$.

Mechanically computing case by case, we find that the next instanton sector, $n=3$, will have exponentials of the type $\exp \frac{1}{2}\, a \left( \partial_z A \right)^2 S^{zz}$ with $a \in \{ 3,5,9 \}$ and the associated polynomials will have degrees $d = 3 \left( g+1-\lambda \right)$, with $\lambda \in \{ 3,2,1 \}$, respectively. These numbers get more complicated as the instanton number increases. For example, for $n=7$, $a \in \{ 7, 9, 11, 13, 15, 17, 19, 21, 25, 27, 29, 37, 49 \}$ and $\lambda \in \{Ê7, 6, 5, 4, 4, 3, 3, 3, 2, 3, 2, 2, 1 \}$, respectively.

The nature of the numbers $a$ and $\lambda$ is purely combinatoric, and they actually have an interpretation\footnote{This interpretation becomes too cumbersome when we have a multi--parameters transseries.} in terms of integer partitions of the instanton number $n$. In the next section we shall prove that these numbers are encoded in the generating function
\be\label{eq:1parambeta1genfunction}
\Phi = \prod_{m=1}^{+\infty} \frac{1}{1 - \varphi\, E^{m^2}\, \rho^m} = \sum_{n=0}^{+\infty} \sum_{\{\gamma_n\}} E^{a \left(n;\gamma_n\right)}\, \varphi^{\lambda \left(n;\gamma_n\right)} \left( 1 + \mathcal{O}(\varphi) \right),
\ee
\noindent
where $\varphi$, $E$, and $\rho$ are formal variables, and $\{\gamma_n\}$ is a set of indices; but see section \ref{sec:multiparameter} for a detailed explanation and origin of this generating function.

It is also important to note from the general form of the solution (\ref{eq:1parameterintegratingG}) that the holomorphic ambiguity, $f^{(n)}_g$, always sits in the polynomial that accompanies $\exp \frac{1}{2}\, n^2 \left( \partial_z A \right)^2 S^{zz}$. Interestingly, this has a theta--function flavor as in \cite{em08}. The corresponding polynomial is of degree $3(g+1-1)=3g$, which is the highest since $\lambda=1$ only for $a = n^2$ (see section \ref{sec:multiparameter} for further details).

In summary, we found that the higher--instanton free energies have the form
\be
F^{(n)}_g = \sum_{\{\gamma_n\}} \rme^{\frac{1}{2}\, a \left( n;\gamma_n \right) \left( \partial_z A \right)^2 S^{zz}}\, \textrm{Pol} \left(S^{zz}; 3 \left( g+1-\lambda \left( n;\gamma_n \right) \right)\right),
\ee
\noindent
where the combinatorial data $\{ a, \lambda, \gamma \}$ is encoded in the generating function \eqref{eq:1parambeta1genfunction}. Similarly, one can derive an analogous structure when $b^{(n)}$ is more general than just $+1$. Assuming the restriction that $B(n,m) \geq 0$, $\forall n,m$, one would find (and again we refer the reader to  section \ref{sec:multiparameter} for a detailed proof)
\be
F^{(n)}_g = \sum_{\{\gamma_n\}} \rme^{\frac{1}{2}\, a \left( n;\gamma_n \right) \left( \partial_z A \right)^2 S^{zz}}\, \textrm{Pol} \left( S^{zz}; 3\left( g + b^{(n)} - \lambda_b \left( n;\gamma_n \right) \right) \right),
\ee
\noindent
where this time around the combinatorial data is stored in
\be\label{eq:1parambetagenfunction}
\Phi_b = \prod_{m=1}^{+\infty} \frac{1}{1 - \varphi^{b^{(m)}}\, E^{m^2}\, \rho^m} = \sum_{n=0}^{+\infty} \sum_{\{\gamma_n\}} E^{a \left(n;\gamma_n\right)}\, \varphi^{\lambda_b \left(n;\gamma_n\right)} \left( 1 + \mathcal{O}(\varphi) \right).
\ee
\noindent
Do note that $a$ and $\gamma$ are independent of the value of the starting genus $b^{(n)}$.

A few comments are now in order. The fact that we have not approached the determination of the structure of the $F^{(n)}_g$ by assigning degrees to the various objects of the corresponding equations---as was earlier done in the perturbative sector---is because now we do not only have polynomials. What we actually have is a sum of different exponentials times polynomials of various degrees, and tracking all those degrees at the same time while capturing all the combinatorics of the problem makes the approach much less clear than using a more hands--on calculation as above. Also, as we mentioned before, a detailed analysis of the structure of the solution and a recipe to calculate the $a$ and $\lambda$ numbers will be studied in the multi--parameters transseries case. The proofs will rely on induction, carefully studying the exponential and polynomial structure of the $G^{(n)}_g$ in (\ref{eq:1parameterintegratingG}) and tracking what happens when integrating. Further, it is important to stress that the holomorphic anomaly equations in (\ref{eq:1parameterNPeq}) determine the functional dependence of the free energies $F^{(n)}_g(z,S^{zz})$, up to a holomorphic ambiguity, $f^{(n)}_g(z)$, just as in the perturbative case (see section \ref{sec:fixingambiguities}). The anti--holomorphic dependence, encoded in the propagator $S^{zz}$, is in the form of polynomials and exponentials. The numbers $a$ and $\lambda$, depending on the instanton number $n$, determine the coefficients in the exponentials and the degree of the polynomials, respectively.

Finally, let us point out that in the above transseries computations we have focused only on the propagators $S^{ij}$, having switched off the rest, $S^i$, $S$ and $\partial_i K$. For the moment this makes the calculations easier, while still preserving non--holomorphic information in the problem and, as we shall see later on, still being able to fix the holomorphic ambiguities (since the other propagators vanish in the holomorphic limit). One can easily check that our computations can be extended without major difficulties in order to include every other generator, and it turns out that the corresponding results do not differ much from the ones we have found. In particular, one can easily check that the instanton action is still independent of \textit{all} propagators, and hence holomorphic. The resulting transseries structure is very similar, with the same exponential terms in $S^{ij}$, but where the associated polynomials will now comprise a few more terms while maintaining the very \textit{same} total degree. In particular, $S^i$ would contribute with degree $+2$, $S$ with degree $+3$, and $\partial_i K$ with degree $+1$ (also see \cite{al07} for the discussion of how to compute the degree when including all generators). We believe this structure is generic, and will assume this in the following---thus restricting our analysis to the sole inclusion of $S^{ij}$ throughout our work.

\section{Multi--Parameter Resurgent Transseries}\label{sec:multiparameter}

In the previous section we have shown how to construct one--parameter transseries solutions to the holomorphic anomaly equations. These solutions will be adequate nonperturbative solutions whenever there is a single instanton action in the game. However, most of the time this is not the case: either there are different instanton actions associated to different (physical) instanton effects, or resurgence demands that further ``generalized'' instanton actions should come into play when considering the full nonperturbative grand--canonical partition function (see, \textit{e.g.}, \cite{gikm10, asv11, sv13}). In this case, multi--parameter transseries solutions must be considered. We shall now see how our previous discussion generalizes to the multi--parameters case; in fact this section will follow closely its analog for the one--parameter transseries.

\subsection{Setting the Generalized Transseries Construction}

Let us consider a general multi--parameters transseries \textit{ansatz}, with the form
\begin{equation}\label{eq:preNPansatz}
F ( \boldsymbol{\sigma}, g_s) = \sum_{\boldsymbol{n} \in \BN^\kappa_0} \boldsymbol{\sigma}^{\boldsymbol{n}}\, \rme^{-A^{(\boldsymbol{n})}/g_s}\, F^{(\boldsymbol{n})} (g_s).
\end{equation}
\noindent
We are assuming one has $\kappa$ distinct instanton actions, $A_\alpha$, and we are using the shorthand notation $A^{(\boldsymbol{n})} = \sum_{\alpha=1}^\kappa n_\alpha A_\alpha$ to identify all (generalized) multi--instanton sectors. At this moment we let $A_\alpha = A_\alpha (z,S^{zz})$ for absolute generality, but we shall see in the following that all these instanton actions are independent of $S^{zz}$, \textit{i.e.}, they are all \textit{holomorphic} functions. At each of the multi--instanton sectors, one further finds asymptotic perturbative expansions which for the moment we just denote by $F^{(\boldsymbol{n})} (g_s)$, \textit{i.e.}, in the first few steps it proves useful not to look at the particular asymptotic expansions in $g_s$. Finally, the transseries parameters encoding the nonperturbative ambiguities are assembled in the expression above as $\boldsymbol{\sigma}^{\boldsymbol{n}} := \prod_{\alpha=1}^\kappa \sigma_\alpha^{n_\alpha}$.

If we plug our transseries \textit{ansatz} (\ref{eq:preNPansatz}) into the holomorphic anomaly equation\footnote{We are still working with complex moduli space of dimension one. This requirement is lifted in appendix \ref{app:multidimcomplmodspace}.} \eqref{eq:1parameterUVWeq}, and we collect terms with the same powers in $\boldsymbol{\sigma}$, we find
\bea
&&
\left( \partial_{S^{zz}} - \frac{1}{g_s} \partial_{S^{zz}} A^{(\boldsymbol{n})} \right) F^{(\boldsymbol{n})} - \frac{1}{2} g_s^2 \left( D_z - \frac{1}{g_s} \partial_z A^{(\boldsymbol{n})} + 2\, \partial_z \widetilde{F}^{(\boldsymbol{0})} \right) \left( \partial_z - \frac{1}{g_s} \partial_z A^{(\boldsymbol{n})} \right) F^{(\boldsymbol{n})} = \nonumber \\ 
&&
= \frac{1}{2}g_s^2\, \sideset{}{'}\sum_{\boldsymbol{m}=\boldsymbol{0}}^{\boldsymbol{n}} \left( \partial_z - \frac{1}{g_s} \partial_z A^{(\boldsymbol{m})} \right) F^{(\boldsymbol{m})} \left( \partial_z - \frac{1}{g_s} \partial_z A^{(\boldsymbol{n}-\boldsymbol{m})} \right) F^{(\boldsymbol{n}-\boldsymbol{m})},
\label{eq:preNPeq}
\eea
\noindent
where again $\widetilde{F}^{(\boldsymbol{0})}(g_s) := F^{(\boldsymbol{0})}(g_s)-\frac{1}{g_s^2}\, F^{(\boldsymbol{0})}_0$. The same comments we made to (\ref{eq:1parameterpreNPeq}) apply now in here. The only novelty is the use of a primed sum over the previous instanton numbers, meaning
\be
\sideset{}{'}\sum_{\boldsymbol{m}=\boldsymbol{0}}^{\boldsymbol{n}} \equiv\,  \sum_{\substack{m_1,\ldots,m_\kappa=0 \\ \boldsymbol{m} \neq \boldsymbol{0},\, \boldsymbol{m}\neq \boldsymbol{n}}}^{n_1,\ldots,n_\kappa}.
\ee
\noindent
That is, the sum does not involve either $F^{(\boldsymbol{0})}$ or $F^{(\boldsymbol{n})}$. Instead, those terms have been moved to the left--hand--side of (\ref{eq:preNPeq}). Again, as in the previous section, we can define a differential operator $\mathcal{D}^{(\boldsymbol{n})}$ and its $g_s$--expansion; see (\ref{eq:1parameterDoperator}).

Let us now consider the perturbative $g_s$ asymptotic expansions of each (generalized) multi--instanton sector,
\be\label{eq:multiparameteransatz}
F^{(\boldsymbol{n})} \simeq \sum_{g=0}^{+\infty} g_s^{g+b^{(\boldsymbol{n})}} F^{(\boldsymbol{n})}_g,
\ee
\noindent
where $b^{(\boldsymbol{n})}$ is a characteristic exponent that generalizes the one in the last section. Plugging this expansion (\ref{eq:multiparameteransatz}) back into (\ref{eq:preNPeq}), we arrive at
\bea
\sum_{g=-1}^{+\infty} g_s^g \left\{ -\partial_{S^{zz}} A^{(\boldsymbol{n})} F^{(\boldsymbol{n})}_{g+1} + \sum_{h=0}^g \mathcal{D}^{(\boldsymbol{n})}_h F^{(\boldsymbol{n})}_{g-h} \right\} &=& \\
&&
\hspace{-190pt} = \sideset{}{'}\sum_{\boldsymbol{m}=\boldsymbol{0}}^{\boldsymbol{n}}\, \sum_{g=0}^{+\infty} g_s^{g+B(\boldsymbol{n},\boldsymbol{m})}\, \frac{1}{2}\, \sum_{h=0}^{g} \left( \partial_z F^{(\boldsymbol{m})}_{h-1} - \partial_z A^{(\boldsymbol{m})} F^{(\boldsymbol{m})}_h \right) \left(\partial_z F^{(\boldsymbol{n}-\boldsymbol{m})}_{g-1-h} - \partial_z A^{(\boldsymbol{n}-\boldsymbol{m})} F^{(\boldsymbol{n}-\boldsymbol{m})}_{g-h} \right), \nonumber
\eea
\noindent
where $B(\boldsymbol{n},\boldsymbol{m}) := b^{(\boldsymbol{m})}+b^{(\boldsymbol{n}-\boldsymbol{m})}-b^{(\boldsymbol{n})}$. In the following we shall carry out the calculation for a general $b^{(\boldsymbol{n})}$, with the condition $B(\boldsymbol{n},\boldsymbol{m}) \geq 0$ to avoid the case of constraints, but it might be useful to have the example $B(\boldsymbol{n},\boldsymbol{m})=+1$, $\forall\, \boldsymbol{n},\boldsymbol{m}$, in mind. Collecting terms with the same power in $g_s$ we find the equations
\begin{eqnarray}
\label{eq:almostNPeq}
\partial_{S^{zz}}A^{(\boldsymbol{n})} F^{(\boldsymbol{n})}_{g+1} - \sum_{h=0}^g \mathcal{D}^{(\boldsymbol{n})}_h F^{(\boldsymbol{n})}_{g-h} &+& \\
&&
\hspace{-150pt} + \frac{1}{2}\, \sideset{}{'}\sum_{\boldsymbol{m}=\boldsymbol{0}}^{\boldsymbol{n}}\, \sum_{h=0}^{g-B(\boldsymbol{n},\boldsymbol{m})} \left( \partial_z F^{(\boldsymbol{m})}_{h-1} - \partial_z A^{(\boldsymbol{m})} F^{(\boldsymbol{m})}_h \right) \left( \partial_z F^{(\boldsymbol{n}-\boldsymbol{m})}_{g-1-B(\boldsymbol{n},\boldsymbol{m})-h} - \partial_z A^{(\boldsymbol{n}-\boldsymbol{m})} F^{(\boldsymbol{n}-\boldsymbol{m})}_{g-B(\boldsymbol{n},\boldsymbol{m})-h} \right) = 0,
\nonumber
\end{eqnarray}
\noindent
for $g=-1,0,1,2,\ldots$. For the one--instanton sectors, that is, those of the form $\boldsymbol{n} = (0|\cdots|1|\cdots|0)$, the quadratic term in \eqref{eq:almostNPeq} drops (for any value of $B$). If we further focus on $g=-1$, also the second term drops,  and we are left with
\be
\partial_{S^{zz}} A^{(0|\cdots|1|\cdots|0)} = 0
\ee
\noindent
for every one--instanton sector, since $F^{(0|\cdots|1|\cdots|0)}_0 \neq 0$. We therefore conclude
\be
\partial_{S^{zz}} A_\alpha = 0, \qquad \alpha =1,\ldots, \kappa,
\ee
\noindent
\textit{i.e.}, \textit{every} instanton action is holomorphic. This is a fundamental point within the nonperturbative structure of topological string theory, and we shall return to it later in the paper.

Finally, if one rewrites \eqref{eq:almostNPeq} above as an equation for the unknowns $F^{(\boldsymbol{n})}_g$---or, more precisely, for $\mathcal{D}^{(\boldsymbol{n})}_0 F^{(\boldsymbol{n})}_g$ as we are dealing with a differential equation---it follows that
\begin{eqnarray}
\label{eq:NPeq}
\left( \partial_{S^{zz}} - \frac{1}{2} \left( \partial_z A^{(\boldsymbol{n})} \right)^2 \right) F^{(\boldsymbol{n})}_g &=& - \sum_{h=1}^g \mathcal{D}^{(\boldsymbol{n})}_h F^{(\boldsymbol{n})}_{g-h} + \\
&&
\hspace{-130pt} + \frac{1}{2}\, \sideset{}{'}\sum_{\boldsymbol{m}=\boldsymbol{0}}^{\boldsymbol{n}}\, \sum_{h=0}^{g-B(\boldsymbol{n},\boldsymbol{m})} \left( \partial_z F^{(\boldsymbol{m})}_{h-1} - \partial_z A^{(\boldsymbol{m})} F^{(\boldsymbol{m})}_h \right) \left( \partial_z F^{(\boldsymbol{n}-\boldsymbol{m})}_{g-1-B(\boldsymbol{n},\boldsymbol{m})-h} - \partial_z A^{(\boldsymbol{n}-\boldsymbol{m})} F^{(\boldsymbol{n}-\boldsymbol{m})}_{g-B(\boldsymbol{n},\boldsymbol{m})-h} \right), \nonumber
\end{eqnarray}
\noindent
where it should be clear that the right--hand--side depends only on \textit{previous} instanton sectors and genera, and it is, therefore, a known object in the recursive integration of the equations. The $\mathcal{D}^{(\boldsymbol{n})}_h$ operators can be read from (\ref{eq:calD1}--\ref{eq:calD2h}), trivially changing $n$ by $\boldsymbol{n}$ and $0$ by $\boldsymbol{0}$.

\subsection{Structure of the Generalized Transseries Solution}

If we denote by $G^{(\boldsymbol{n})}_g$ the right--hand--side of (\ref{eq:NPeq}), it is straightforward to find the solution to the differential equation:
\begin{equation}\label{eq:integratingG}
F^{(\boldsymbol{n})}_g (z,S^{zz}) = \rme^{\frac{1}{2} \left( \partial_z A^{(\boldsymbol{n})} \right)^2 S^{zz}} \left( f^{(\boldsymbol{n})}_g (z) + \int^{S^{zz}} \rmd \widetilde{S}^{zz}\, \rme^{- \frac{1}{2} \left( \partial_z A^{(\boldsymbol{n})} \right)^2 \widetilde{S}^{zz}}\, G^{(\boldsymbol{n})}_g (z,\widetilde{S}^{zz}) \right),
\end{equation}
\noindent
where $f^{(\boldsymbol{n})}_g (z)$ is the holomorphic ambiguity of the multi--instanton sector $\boldsymbol{n}$, at order $g$. This expression, \eqref{eq:integratingG}, is very reminiscent of similar formulae we have obtained in section \ref{sec:1parameter} for one--parameter transseries solutions to the holomorphic anomaly equations. As such, it is without surprise that we will find also in here, in the multi--parameter transseries context, that it is possible to fix the polynomial structure of the many nonperturbative sectors---precisely as finite linear combinations of exponential terms multiplied by suitable polynomials. 

Let us make the aforementioned expectation precise. If correct, this will imply that the structure of the solution will be specified by the coefficients in the exponentials and the degree of the polynomials. In this case, one finds expressions of the form
\begin{equation}\label{eq:Fstructure}
F^{(\boldsymbol{n})}_g = \sum_{\left\{ \gamma_{\boldsymbol{n}} \right\}} \exp \left( \frac{1}{2} \sum_{\alpha,\beta=1}^\kappa a_{\alpha\beta} \left( \boldsymbol{n}; \gamma_{\boldsymbol{n}} \right) \partial_z A_\alpha\, \partial_z A_\beta\, S^{zz} \right) \textrm{Pol} \left( S^{zz}; d_b \left( \boldsymbol{n}; g; \gamma_{\boldsymbol{n}} \right) \right).
\end{equation}
\noindent
Here $\gamma_{\boldsymbol{n}}$ is an index running in some finite set (it may be regarded as a label); $a_{\alpha\beta} \left( \boldsymbol{n}; \gamma_{\boldsymbol{n}} \right)$ are non--negative integers (we are already anticipating here that these numbers will not depend on $g$) and they are the generalization of the $a$--numbers we found at the end of section \ref{sec:1parameter}; and $d_b(\boldsymbol{n};g;\gamma_{\boldsymbol{n}})$ is the degree of the corresponding polynomial in $S^{zz}$ (whose coefficients will have a holomorphic dependence on $z$). In the formula for the degree $d_b$ we shall encounter the generalization of the $\lambda_b$--numbers of section \ref{sec:1parameter}; see \eqref{eq:degreeformula} in the following and note the $b$--dependence.

It turns out that, as we shall see in detail below, all these integers may be codified by a single generating function
\begin{equation}\label{eq:generatingfunction}
\Phi_b \equiv \Phi_b \left( \varphi, E, \rho \right) := \sideset{}{'}\prod_{\boldsymbol{m}=\boldsymbol{0}}^{+\infty} \frac{1}{1 - \varphi^{b^{(\boldsymbol{m})}}\, \prod_{\alpha,\beta=1}^\kappa E_{\alpha\beta}^{m_\alpha m_\beta} \prod_{\alpha=1}^\kappa \rho_\alpha^{m_\alpha}},
\end{equation}
\noindent
where $E_{\alpha\beta}$ is taken to be symmetric in $(\alpha,\beta)$. The prime, $(\:')$, in the infinite product means that $\boldsymbol{m}=\boldsymbol{0}$ is not considered (if it were we would get an extra term $(1-\varphi)^{-1}$, so we omit it). Here $\varphi$, $E_{\alpha\beta}$, and $\rho_\alpha$ are formal variables, so that $\Phi_b$ may be expanded in power series using the formal identity $(1-x)^{-1} = 1 + x + x^2 + x^3 + \cdots$ as
\begin{equation}
\Phi_b = \sideset{}{'}\prod_{\boldsymbol{m}=\boldsymbol{0}}^{+\infty}\, \sum_{r=0}^{+\infty} \varphi^{r\, b^{(\boldsymbol{m})}} \prod_{\alpha,\beta=1}^\kappa E_{\alpha\beta}^{r\, m_\alpha m_\beta}\, \prod_{\alpha=1}^\kappa \rho_\alpha^{r\, m_\alpha}.
\end{equation}
\noindent
Next, we explicitly evaluate the $\boldsymbol{m}$--products (relabeling $r\rightarrow r_{\boldsymbol{m}}$),
\begin{equation}
\Phi_b = \sum_{\{r_{\boldsymbol{m}}\}} \varphi^{\sum_{\boldsymbol{m}=\boldsymbol{0}}^{+\infty} r_{\boldsymbol{m}}\, b^{(\boldsymbol{m})}}\, \prod_{\alpha,\beta=1}^\kappa E_{\alpha\beta}^{\sum_{\boldsymbol{m}=\boldsymbol{0}}^{+\infty} r_{\boldsymbol{m}}\, m_\alpha m_\beta}\, \prod_{\alpha=1}^\kappa \rho_\alpha^{\sum_{\boldsymbol{m}=\boldsymbol{0}}^{+\infty} r_{\boldsymbol{m}}\, m_\alpha},
\end{equation}
\noindent
and reorganize the several terms according to their powers in $\rho_\alpha$ for each $\alpha$. Thus, for any fixed $\boldsymbol{n}$, we select those values of $\{r_{\boldsymbol{m}}\}$ which satisfy $\sum_{\boldsymbol{m}=\boldsymbol{0}}^{+\infty} r_{\boldsymbol{m}}\, \boldsymbol{m} = \boldsymbol{n}$. For example, in the case of one--parameter transseries we addressed earlier, one would simply be doing the choice of $\sum_{m=0}^{+\infty} r_m\, m = n$, or $1\cdot r_1 + 2\cdot r_2 + 3\cdot r_3 + \cdots = n$, that is, one would be selecting those $\{r_m\}$ that codify an integer partition of $n$. It then follows that
\begin{equation}
\Phi_b = \sum_{\boldsymbol{n}=\boldsymbol{0}}^{+\infty} \boldsymbol{\rho}^{\boldsymbol{n}} \sum_{\{r_{\boldsymbol{m}}\}\: :\: \sum_{\boldsymbol{m}=\boldsymbol{0}}^{+\infty} r_{\boldsymbol{m}}\, \boldsymbol{m} = \boldsymbol{n}} \varphi^{\sum_{\boldsymbol{m}=\boldsymbol{0}}^{+\infty} r_{\boldsymbol{m}}\, b^{(\boldsymbol{m})}} \prod_{\alpha,\beta=1}^\kappa E_{\alpha\beta}^{\sum_{\boldsymbol{m}=\boldsymbol{0}}^{+\infty} r_{\boldsymbol{m}}\, m_\alpha m_\beta}.
\end{equation}
\noindent
Next, one may separate, for each $\boldsymbol{n}$, the $\{r_{\boldsymbol{m}}\}$ (with $\sum_{\boldsymbol{m}=\boldsymbol{0}}^{+\infty} r_{\boldsymbol{m}}\, \boldsymbol{m} = \boldsymbol{n}$) according to their values of $\left\{\sum_{\boldsymbol{m}=\boldsymbol{0}}^{+\infty} r_{\boldsymbol{m}}\, m_\alpha m_\beta \right\}_{\alpha,\beta}$, and introduce a label $\gamma_{\boldsymbol{n}}$ that runs over these classes. Let us define
\begin{equation}
a_{\alpha\beta} \left( \boldsymbol{n}; \gamma_{\boldsymbol{n}} \right) := \sum_{\boldsymbol{m}=\boldsymbol{0}}^{+\infty} r_{\boldsymbol{m}}\, m_\alpha m_\beta
\end{equation}
\noindent
for $\{r_{\boldsymbol{m}}\}$ such that $\sum_{\boldsymbol{m}} r_{\boldsymbol{m}}\, \boldsymbol{m} = \boldsymbol{n}$. Then, $\gamma_{\boldsymbol{n}}$ is the label that identifies the classes which have different values of $\{a_{\alpha\beta}\}$. In the one--parameter example, one had $a(n;\gamma_n) = \sum_{m=0} r_m\, m^2$ with $\sum_{m=0} r_m\, m=n$. This means that $a(n;\gamma_n)$ is the sum of squares of the parts of the integer partition of $n$ specified by $\{r_m\}$. The value of $a$ breaks the set of partitions of $n$ into equivalence classes, labeled by $\gamma_n$. Now, for each $\gamma_{\boldsymbol{n}}$ we have a set of numbers $a_{\alpha\beta} \left( \boldsymbol{n}; \gamma_{\boldsymbol{n}} \right)$, and we may write our equation above as
\begin{equation}\label{eq:explicitFgenfunction}
\Phi_b = \sum_{\boldsymbol{n}=\boldsymbol{0}}^{+\infty} \boldsymbol{\rho}^{\boldsymbol{n}}\, \sum_{\left\{ \gamma_{\boldsymbol{n}} \right\}}\, \prod_{\alpha,\beta=1}^\kappa E_{\alpha\beta}^{a_{\alpha\beta} \left( \boldsymbol{n}; \gamma_{\boldsymbol{n}} \right)} \sum_{\{r_{\boldsymbol{m}}\} \in \left\{ \gamma_{\boldsymbol{n}} \right\}} \varphi^{\sum_{\boldsymbol{m}=\boldsymbol{0}}^{+\infty} r_{\boldsymbol{m}}\, b^{(\boldsymbol{m})}}.
\end{equation}
\noindent
Finally, let us introduce
\begin{equation}
\lambda_b \left( \boldsymbol{n}; \gamma_{\boldsymbol{n}} \right) := \min_{\{r_{\boldsymbol{m}}\} \in \left\{ \gamma_{\boldsymbol{n}} \right\}} \bigg\{ \sum_{\boldsymbol{m}=\boldsymbol{0}}^{+\infty} r_{\boldsymbol{m}}\, b^{(\boldsymbol{m})} \bigg\}.
\end{equation}
\noindent
This is simplest to understand in the one--parameter example with $b^{(n)}=1$. In that case, $\lambda_{b=1}(n;\gamma_n)$ looks at the partitions of $n$ that lie in the class $\gamma_n$, counts the number of parts of each partition in the class, and gives the minimum of those. In the end, given all the manipulations above, what we have shown is that one can write
\begin{equation}\label{eq:expandedgenfunction}
\Phi_b =  \sum_{\boldsymbol{n}=\boldsymbol{0}}^{+\infty} \boldsymbol{\rho}^{\boldsymbol{n}}\, \sum_{\left\{ \gamma_{\boldsymbol{n}} \right\}} E^{a \left( \boldsymbol{n}; \gamma_{\boldsymbol{n}} \right)}\, \varphi^{\lambda_b \left( \boldsymbol{n}; \gamma_{\boldsymbol{n}} \right)} \left( 1 + \mathcal{O} \left( \varphi \right) \right),
\end{equation}
\noindent
where $E^a := \prod_{\alpha,\beta=1}^\kappa E_{\alpha\beta}^{a_{\alpha\beta}}$. On this final expression, note that for every $\boldsymbol{n}$ there is a special class $\widehat{\gamma}_{\boldsymbol{n}}$ whose only element is specified by $r_{\boldsymbol{m}} = \delta_{\boldsymbol{m}\boldsymbol{n}}$, or $a_{\alpha\beta} \left( \boldsymbol{n}; \widehat{\gamma}_{\boldsymbol{n}} \right) = n_\alpha n_\beta$, $\forall \alpha,\beta$. Further notice that $\lambda_b \left( \boldsymbol{n}; \widehat{\gamma}_{\boldsymbol{n}} \right) = b^{(\boldsymbol{n})}$. For the one--parameter case when $b^{(n)}=1$, this special class has $a(n;\gamma_n)=n^2$ and $\lambda(n;\gamma_n)=1$. Recall from our discussion at the end of section \ref{sec:1parameter} that this is precisely where the holomorphic ambiguity $f^{(\boldsymbol{n})}_g$ sits for each $g$.

Let us go back one step and recall that the main idea we are exploring is that the collections of numbers $a_{\alpha\beta} ( \boldsymbol{n}; \gamma_{\boldsymbol{n}} )$ and $\lambda ( \boldsymbol{n}; \gamma_{\boldsymbol{n}} )$ are all the required data which is necessary in order to reconstruct the structure of the nonperturbative sectors $F^{(\boldsymbol{n})}_g$, for every $g$. In order to make this idea precise, the first step needed is to make the identification
\begin{equation}\label{eq:EtoExp}
E_{\alpha\beta} \longleftrightarrow \rme^{\frac{1}{2}\, \partial_z A_\alpha\, \partial_z A_\beta\, S^{zz}},
\end{equation}
\noindent
so that the numbers $a/a_{\alpha\beta}$ in (\ref{eq:Fstructure}) and (\ref{eq:expandedgenfunction}) coincide. This is somewhat similar to what happened in the one--parameter transseries example, in the previous section. The second step will be to prove that
\begin{equation}\label{eq:degreeformula}
d_b \left( \boldsymbol{n}; g; \gamma_{\boldsymbol{n}} \right) = 3 \left( g +b^{(\boldsymbol{n})} - \lambda_b \left( \boldsymbol{n}; \gamma_{\boldsymbol{n}} \right) \right).
\end{equation}
\noindent
Before proving this statement, and as an example to illustrate the general structure, take a transseries with two parameters and restrict to $b^{(\boldsymbol{n})}=1$. Let us determine the structure of the instanton sector $(2|1)$, for each $g$. We need to formally expand $\Phi_{b=1}$ in (\ref{eq:generatingfunction}) and look at terms of order $\rho_1^2 \rho_2^1$. They are
\begin{equation}
\label{eq:rhoEvarphiexample}
E_{11}^4\, E_{12}^{2\cdot 2}\, E_{22}\, \varphi + E_{11}^4\, E_{22}\, \varphi^2 + E_{11}^2\, E_{12}^{2\cdot 1}\, E_{22}\, \varphi^2 + E_{11}^2\, E_{22}\, \varphi^3.
\end{equation}
\noindent
We now have to compare this with (\ref{eq:expandedgenfunction}),
\be
\Phi_{b=1} = \cdots + \rho_1^2 \rho_2^1 \sum_{\{\gamma_{(2|1)}\}} E_{11}^{a_{11}(2|1;\gamma_{(2|1)})}\, E_{12}^{2 a_{12}(2|1;\gamma_{(2|1)})}\, E_{22}^{a_{22}(2|1;\gamma_{(2|1)})}\, \varphi^{\lambda(2|1;\gamma_{(2|1)})} \left( 1+ \mathcal{O}( \varphi ) \right) + \cdots.
\ee
\noindent
We see that there are four classes. The special class is the one with $a_{11}=2\cdot 2=4$, $a_{12}=2\cdot 1=2$, $a_{22}=1\cdot 1 =1$, which corresponds to the first term in (\ref{eq:rhoEvarphiexample}). Note that the value of $\lambda$ at this special class is indeed $b^{(2|1)}=1$, as stated before. Using (\ref{eq:degreeformula}) we can read
\begin{eqnarray}
F^{(2|1)}_g &=& \rme^{[2 \left( \partial_z A_1 \right)^2  + 2\, \partial_z A_1\, \partial_z A_2 + \frac{1}{2} \left( \partial_z A_2 \right)^2] S^{zz}}\, \textrm{Pol} \left( S^{zz}; 3g \right) + \rme^{[2 \left( \partial_z A_1 \right)^2 + \frac{1}{2} \left( \partial_z A_2 \right)^2 ]S^{zz}}\, \textrm{Pol} \left( S^{zz}; 3g-3 \right) +\nonumber \\
&&
\hspace{-20pt}
+ \;\rme^{ [\left( \partial_z A_1 \right)^2 + 2\, \partial_z A_1\, \partial_z A_2 + \frac{1}{2} \left( \partial_z A_2 \right)^2 ]S^{zz}}\, \textrm{Pol} \left( S^{zz}; 3g-3 \right) + \rme^{[\left( \partial_z A_1 \right)^2 + \frac{1}{2} \left( \partial_z A_2 \right)^2 ]S^{zz}}\, \textrm{Pol} \left( S^{zz}; 3g-6 \right).\nonumber\\
&&
\end{eqnarray}
\noindent
Here it is important to note that if the degree of the polynomial is negative, we take it to be identically zero. This means that, for low $g$, we may not see all the exponentials one can write down. In the example above, for $g=0$ we only have the first term, for $g=1$ we have the first three, and for $g\geq 2$ all four terms appear.

We are now ready to state and prove the following theorem:

\begin{thm}\label{structuraltheorem}
For any $\boldsymbol{n}\neq \boldsymbol{0}$ and $g\geq 0$, the structure of the nonperturbative free energies has the form
\begin{equation}\label{eq:1paramStructure}
F^{(\boldsymbol{n})}_g = \sum_{\{\gamma_{\boldsymbol{n}}\}} \rme^{\frac{1}{2} \sum_{\alpha,\beta=1}^\kappa a_{\alpha\beta} \left( \boldsymbol{n}; \gamma_{\boldsymbol{n}} \right) \partial_z A_\alpha\, \partial_z A_\beta\, S^{zz}}\, \mathrm{Pol} \left( S^{zz}; 3 \left( g + b^{(\boldsymbol{n})} - \lambda_b \left( \boldsymbol{n}; \gamma_{\boldsymbol{n}} \right) \right) \right),
\end{equation}
\noindent
where the set of numbers $\left\{ a_{\alpha\beta} \left( \boldsymbol{n}; \gamma_{\boldsymbol{n}} \right) \right\}$ and $\left\{ \lambda_b \left( \boldsymbol{n}; \gamma_{\boldsymbol{n}} \right) \right\}$ are read off from the generating function\footnote{Because we are not focusing on the coefficients of the generating function, but only on the combinatorial structure of the formal variables, note that this generating function is actually not unique.}
\bea
\Phi_b &=& \sideset{}{'}\prod_{\boldsymbol{m}=\boldsymbol{0}}^{+\infty} \frac{1}{1 - \varphi^{b^{(\boldsymbol{m})}}\, \prod_{\alpha,\beta=1}^\kappa E_{\alpha\beta}^{m_\alpha m_\beta}\, \prod_{\alpha=1}^\kappa \rho_\alpha^{m_\alpha}} = \nonumber \\
&=&\sum_{\boldsymbol{n}=\boldsymbol{0}}^{+\infty} \boldsymbol{\rho}^{\boldsymbol{n}}\, \sum_{\left\{ \gamma_{\boldsymbol{n}} \right\}} \prod_{\alpha,\beta=1}^\kappa E_{\alpha\beta}^{a_{\alpha\beta} \left( \boldsymbol{n}; \gamma_{\boldsymbol{n}} \right)}\, \varphi^{\lambda_b \left( \boldsymbol{n}; \gamma_{\boldsymbol{n}} \right)} \left( 1+ \mathcal{O} \left( \varphi \right) \right).
\label{eq:thmcurlyFstructure}
\eea
\noindent
In here $\mathrm{Pol} \left(S^{zz};d\right)$ stands for a polynomial of degree $d$ in the variable $S^{zz}$ (and whose coefficients have a holomorphic dependence on $z$). Whenever $d<0$, the polynomial is taken to be identically zero. We are assuming that $b^{(\boldsymbol{m})} + b^{(\boldsymbol{n}-\boldsymbol{m})} - b^{(\boldsymbol{n})}\geq 0$.
\end{thm}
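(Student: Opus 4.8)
The natural strategy is a double induction, primarily on the total instanton number $|\boldsymbol{n}| = \sum_\alpha n_\alpha$ and secondarily on the genus $g$, exploiting the integral representation \eqref{eq:integratingG}. The base case is $|\boldsymbol{n}|=1$, the one--instanton sectors: here the quadratic primed sum in \eqref{eq:NPeq} is empty, so the recursion closes on a single sector and mimics exactly the $n=1$ analysis of section \ref{sec:1parameter}. First I would verify that for $g=0$ the equation reads $\left(\partial_{S^{zz}} - \tfrac12(\partial_z A^{(\boldsymbol{n})})^2\right)F^{(\boldsymbol{n})}_0 = 0$, integrating to $F^{(\boldsymbol{n})}_0 = \rme^{\frac12(\partial_z A^{(\boldsymbol{n})})^2 S^{zz}} f^{(\boldsymbol{n})}_0(z)$, which matches \eqref{eq:1paramStructure} since for a one--instanton sector $\boldsymbol{n}$ the only class is $\widehat{\gamma}_{\boldsymbol{n}}$ with $a_{\alpha\beta}=n_\alpha n_\beta$ and $\lambda_b=b^{(\boldsymbol{n})}$, giving degree $3g=0$.

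The inductive step is where the real content lies. Assuming the stated structure for all sectors $\boldsymbol{m}$ with $|\boldsymbol{m}|<|\boldsymbol{n}|$ (all $g$) and for the sector $\boldsymbol{n}$ at genera below $g$, I would compute $G^{(\boldsymbol{n})}_g$, the right--hand side of \eqref{eq:NPeq}, and show it is a finite sum of terms of the form $\rme^{\frac12 c\, S^{zz}}\,\mathrm{Pol}(S^{zz};\,\cdot\,)$ whose exponents $c$ and polynomial degrees are precisely those dictated by the generating function $\Phi_b$. The key mechanical facts to track are: (i) a covariant derivative $D_z$ or a $\partial_z$ acting on $\rme^{\frac12 c\, S^{zz}}\,\mathrm{Pol}(S^{zz};d)$ produces $\rme^{\frac12 c\, S^{zz}}\,\mathrm{Pol}(S^{zz};d+1)$, because the derivative hitting the exponent pulls down a factor $\partial_z A$ times $\partial_z S^{zz}$, and by \eqref{eq:covariantderivativeonpropagator} that introduces an $(S^{zz})^2$ (degree $+2$) while the operator $\mathcal{D}^{(\boldsymbol{n})}_h$ carries an intrinsic degree $-1$ in the counting, for a net $+1$; and (ii) the product of two terms with exponents $\frac12 c_1 S^{zz}$ and $\frac12 c_2 S^{zz}$ yields a single exponent $\frac12(c_1+c_2)S^{zz}$ with degrees adding. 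Fact (ii) is exactly the multiplicative structure that the product $\prod_{\boldsymbol{m}}(1-\cdots)^{-1}$ in \eqref{eq:generatingfunction} encodes, with the identification \eqref{eq:EtoExp}: combining a $\boldsymbol{m}$--term and a $(\boldsymbol{n}-\boldsymbol{m})$--term in the quadratic sum corresponds to multiplying monomials in $E_{\alpha\beta}$, so the exponents $a_{\alpha\beta}(\boldsymbol{n};\gamma_{\boldsymbol{n}}) = \sum_{\boldsymbol{m}} r_{\boldsymbol{m}}\, m_\alpha m_\beta$ are reproduced. The degree bookkeeping is governed by the $\varphi$--powers $\lambda_b$: the quadratic term shifts $g$ by $-B(\boldsymbol{n},\boldsymbol{m})$, and using $B(\boldsymbol{n},\boldsymbol{m}) = b^{(\boldsymbol{m})}+b^{(\boldsymbol{n}-\boldsymbol{m})}-b^{(\boldsymbol{n})}$ together with the inductive degree formula one checks that the degrees combine precisely to give $d_b = 3(g+b^{(\boldsymbol{n})}-\lambda_b)$ after the final integration raises the degree by one.

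The concluding step is integration via \eqref{eq:integratingG}. For each exponent $c = \sum_{\alpha,\beta} a_{\alpha\beta}\,\partial_z A_\alpha\,\partial_z A_\beta$ appearing in $G^{(\boldsymbol{n})}_g$ \emph{other than} the leading one $c^{\ast}=\sum_{\alpha,\beta}n_\alpha n_\beta\,\partial_z A_\alpha\,\partial_z A_\beta$ (the special class $\widehat{\gamma}_{\boldsymbol{n}}$), the prefactor $\rme^{-\frac12 c^{\ast}S^{zz}}$ inside the integral does \emph{not} cancel, and one integrates $\rme^{\frac12(c-c^{\ast})S^{zz}}\,\mathrm{Pol}$; since $c\neq c^{\ast}$ this yields $\rme^{\frac12(c-c^{\ast})S^{zz}}$ times a polynomial of the same degree, and multiplying back by the overall $\rme^{\frac12 c^{\ast}S^{zz}}$ restores exponent $c$ with the degree raised by one exactly as required. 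For the special class, where $c=c^{\ast}$, the exponentials cancel and one integrates a pure polynomial, raising its degree by one and absorbing the holomorphic ambiguity $f^{(\boldsymbol{n})}_g(z)$; this is why the ambiguity always sits in the $\lambda_b=b^{(\boldsymbol{n})}$, top--degree $3g$ (for $b^{(\boldsymbol{n})}=1$) polynomial. \textbf{The main obstacle} I anticipate is not any single derivative or integral but the combinatorial bookkeeping: one must argue that the set of exponent/degree pairs $\{(a_{\alpha\beta},\lambda_b)\}$ produced by the recursion is \emph{closed and complete}, i.e.\ that every class $\gamma_{\boldsymbol{n}}$ enumerated by $\Phi_b$ actually arises (no term is missing) and that no spurious exponent is generated (the recursion never produces a $c$ outside the list). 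Establishing this bijection between the iterated quadratic combinations in \eqref{eq:NPeq} and the partition data $\{r_{\boldsymbol{m}}\}$ with $\sum_{\boldsymbol{m}}r_{\boldsymbol{m}}\boldsymbol{m}=\boldsymbol{n}$ generated by the infinite product is the crux; the minimum defining $\lambda_b$ is precisely what guarantees the correct (largest) polynomial degree survives for each class, and one should check the convention that negative--degree polynomials vanish is consistent with the lower genera where not all classes are yet visible.
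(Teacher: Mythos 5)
Your plan coincides in structure with the paper's own proof: a double induction in $(\boldsymbol{n},g)$, determination of the exponential--times--polynomial form of $G^{(\boldsymbol{n})}_g$, and integration via \eqref{eq:integratingG}, with the special class $\widehat{\gamma}_{\boldsymbol{n}}$ losing its exponential inside the integral (so its degree rises by one and it absorbs $f^{(\boldsymbol{n})}_g$) while all other classes retain theirs. However, the step you yourself single out as the crux is left entirely unproven, and it is a genuine gap, not a formality. The assertion that the classes produced by the quadratic recursion are \emph{exactly} those enumerated by $\Phi_b$ --- namely that for $\gamma_{\boldsymbol{n}}\neq\widehat{\gamma}_{\boldsymbol{n}}$ every set $\left\{ a_{\alpha\beta}(\boldsymbol{n};\gamma_{\boldsymbol{n}}) \right\}$ arises as $\left\{ a_{\alpha\beta}(\boldsymbol{m};\gamma_{\boldsymbol{m}})+a_{\alpha\beta}(\boldsymbol{n}-\boldsymbol{m};\gamma_{\boldsymbol{n}-\boldsymbol{m}}) \right\}$ and conversely, and that
\[
\lambda_b\left(\boldsymbol{n};\gamma_{\boldsymbol{n}}\right) = \min\left\{ \lambda_b\left(\boldsymbol{m};\gamma_{\boldsymbol{m}}\right)+\lambda_b\left(\boldsymbol{n}-\boldsymbol{m};\gamma_{\boldsymbol{n}-\boldsymbol{m}}\right) \right\}
\]
over compatible splittings, with the special class instead supplied as initial data $a_{\alpha\beta}=n_\alpha n_\beta$, $\lambda_b=b^{(\boldsymbol{n})}$ --- is precisely Lemma \ref{lem:setrecurrence}, and the paper needs a dedicated argument for it (appendix \ref{ap:proofofrecursionlemma}): one computes $\left(\Phi_b-1\right)^2$ in two different ways, once by squaring the expanded series and once by expanding $\Phi_b^2-2\Phi_b+1$ using $(1-x)^{-2}=\sum_{r\geq0}(r+1)x^r$, and the identification works because $\prod_{\boldsymbol{m}}\left(r_{\boldsymbol{m}}+1\right)-2\geq 0$ with equality precisely for $\widehat{\gamma}_{\boldsymbol{n}}$, which is why that class, and only that class, must be excluded from the recursion and fed in separately. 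Without some such argument your induction does not close: you cannot match the exponents and degrees generated by iterating the quadratic term against the data read off from the infinite product.

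A smaller but real error sits in your integration bookkeeping. For a non-special class ($c\neq c^{\ast}$) you write that integration yields ``a polynomial of the same degree'' and then that the degree is ``raised by one exactly as required''; only the first half is correct. The rule is $\int^{S^{zz}}\rmd\widetilde{S}^{zz}\,\rme^{c'\widetilde{S}^{zz}}\,\mathrm{Pol}\,(\widetilde{S}^{zz};d)=\rme^{c'S^{zz}}\,\mathrm{Pol}\left(S^{zz};d\right)$ for $c'\neq0$: the $+1$ occurs only for the special class, where the integrand is a bare polynomial. This is not cosmetic. The quadratic term already contributes degree $3\left(g+b^{(\boldsymbol{n})}-\lambda_b\right)$ to the non-special classes of $G^{(\boldsymbol{n})}_g$ (versus $3\left(g+b^{(\boldsymbol{n})}-\lambda_b\right)-1$ from the $\mathcal{D}$-terms), so a genuine extra $+1$ there would overshoot the degree claimed in \eqref{eq:1paramStructure}; degree preservation under integration against a nonvanishing exponential is exactly what makes the count come out right.
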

\noindent
In order to prove this theorem we will need to make use of the following lemma, whose proof we have included in appendix \ref{ap:proofofrecursionlemma}.

\begin{lem}\label{lem:setrecurrence}
The set of numbers $\left\{ a_{\alpha\beta} \left( \boldsymbol{n}; \gamma_{\boldsymbol{n}} \right) \right\}$ and $\left\{ \lambda_b \left( \boldsymbol{n}; \gamma_{\boldsymbol{n}} \right)Ê\right\}$, and the range of the labels $\left\{ \gamma_{\boldsymbol{n}} \right\}$, which appear in (\ref{eq:thmcurlyFstructure}) are determined by the recursions
\begin{equation}\label{eq:arecursion}
\left\{ a_{\alpha\beta} \left( \boldsymbol{n}; \gamma_{\boldsymbol{n}} \right) \right\}_{\gamma_{\boldsymbol{n}} \neq \widehat{\gamma}_{\boldsymbol{n}}} = \sideset{}{'}\bigcup_{\boldsymbol{m}, \gamma_{\boldsymbol{m}}, \gamma_{\boldsymbol{n}-\boldsymbol{m}}} \left\{ a_{\alpha\beta} \left( \boldsymbol{m}; \gamma_{\boldsymbol{m}} \right) + a_{\alpha\beta} \left( \boldsymbol{n}-\boldsymbol{m}; \gamma_{\boldsymbol{n}-\boldsymbol{m}} \right) \right\}
\end{equation}
\noindent
and
\begin{equation}\label{eq:lambdarecursion}
\lambda_b \left( \boldsymbol{n}; \gamma_{\boldsymbol{n}} \right)= \min \left\{ \lambda_b \left( \boldsymbol{m}; \gamma_{\boldsymbol{m}} \right) + \lambda_b \left( \boldsymbol{n}-\boldsymbol{m}; \gamma_{\boldsymbol{n}-\boldsymbol{m}} \right) \right\}, \qquad \forall \gamma_{\boldsymbol{n}} \neq \widehat{\gamma}_{\boldsymbol{n}},
\end{equation}
\noindent
where $\min$ ranges over $\boldsymbol{m} \in \{\boldsymbol{0},\ldots,\boldsymbol{n}\}'$, and $\gamma_{\boldsymbol{m}}$, $\gamma_{\boldsymbol{n}-\boldsymbol{m}}$ are such that $a_{\alpha\beta} \left( \boldsymbol{m};\gamma_{\boldsymbol{m}} \right) + a_{\alpha\beta}\left( \boldsymbol{n}-\boldsymbol{m}; \right.$ $\left. \gamma_{\boldsymbol{n}-\boldsymbol{m}} \right) = a_{\alpha\beta} \left( \boldsymbol{n};\gamma_{\boldsymbol{n}} \right)$, $\forall \alpha,\beta$. As always, the prime means that $\boldsymbol{m}=\boldsymbol{0}$ and $\boldsymbol{m}=\boldsymbol{n}$ are excluded. Further, we have to specify the initial data:
\begin{eqnarray}
\label{eq:ainitialdata}
a_{\alpha\beta} \left( \boldsymbol{n}; \widehat{\gamma}_{\boldsymbol{n}} \right) &=& n_\alpha n_\beta, \qquad \forall\, \boldsymbol{n},\alpha,\beta, \\
\label{eq:lambdainitialdata}
\lambda_b \left( \boldsymbol{n}; \widehat{\gamma}_{\boldsymbol{n}} \right) &=& b^{(\boldsymbol{n})}, \qquad \forall\, \boldsymbol{n}.
\end{eqnarray}
\end{lem}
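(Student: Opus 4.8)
The plan is to prove Lemma~\ref{lem:setrecurrence} as a purely combinatorial statement about the coefficients of $\Phi_b$, with no reference to the free energies. Recall from the expansion of \eqref{eq:generatingfunction} that the coefficient of $\boldsymbol{\rho}^{\boldsymbol{n}}$ is indexed by the \emph{generalized partitions} of $\boldsymbol{n}$: the collections $\{r_{\boldsymbol{m}}\}$ of non-negative integers (finitely many nonzero) with $\sum_{\boldsymbol{m}} r_{\boldsymbol{m}}\,\boldsymbol{m}=\boldsymbol{n}$, equivalently a finite multiset of parts $\boldsymbol{m}\in\BN_0^\kappa\setminus\{\boldsymbol{0}\}$ summing to $\boldsymbol{n}$. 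To each partition $P$ I attach its two \emph{additive} invariants, the symmetric matrix $a_{\alpha\beta}(P):=\sum_{\boldsymbol{m}} r_{\boldsymbol{m}}\,m_\alpha m_\beta$ and the weight $w(P):=\sum_{\boldsymbol{m}} r_{\boldsymbol{m}}\,b^{(\boldsymbol{m})}$. The class label $\gamma_{\boldsymbol{n}}$ records the value of $a_{\alpha\beta}(P)$, so $a_{\alpha\beta}(\boldsymbol{n};\gamma_{\boldsymbol{n}})$ is this common value and $\lambda_b(\boldsymbol{n};\gamma_{\boldsymbol{n}})=\min_{P\in\gamma_{\boldsymbol{n}}} w(P)$. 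The whole proof rests on the fact that $a_{\alpha\beta}$ and $w$ are additive under disjoint union of the underlying multisets of parts.

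The first and only non-routine step establishes the initial data and the fact that $\widehat\gamma_{\boldsymbol{n}}$ is a singleton class. Writing the parts of $P$ with multiplicity as $\boldsymbol{m}_1,\dots,\boldsymbol{m}_L$, the conditions $\sum_i \boldsymbol{m}_i=\boldsymbol{n}$ and $a_{\alpha\beta}(P)=n_\alpha n_\beta$ say precisely that $\sum_i \boldsymbol{m}_i\boldsymbol{m}_i^{\mathsf T}=\boldsymbol{n}\boldsymbol{n}^{\mathsf T}=\sum_{i,j}\boldsymbol{m}_i\boldsymbol{m}_j^{\mathsf T}$, hence $\sum_{i\neq j}\boldsymbol{m}_i\boldsymbol{m}_j^{\mathsf T}=0$. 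Since every part has non-negative integer entries, each matrix entry $\sum_{i\neq j}(m_i)_\alpha(m_j)_\beta$ is a sum of non-negative terms and must vanish term by term. The diagonal entries force, for each coordinate $\alpha$ with $n_\alpha>0$, all of the $\alpha$-mass onto a single part; the off-diagonal entries then force all such coordinates onto the \emph{same} part, which is therefore $\boldsymbol{n}$ itself, leaving every remaining part equal to $\boldsymbol{0}$, which is excluded. Hence $L=1$ and the trivial partition $r_{\boldsymbol{m}}=\delta_{\boldsymbol{m}\boldsymbol{n}}$ is the unique one with this $a$-matrix. This yields \eqref{eq:ainitialdata} and \eqref{eq:lambdainitialdata}, shows $\widehat\gamma_{\boldsymbol{n}}$ is a single-element class, and — crucially — identifies the non-special classes with the classes of \emph{non-trivial} partitions ($L\geq 2$), since those never attain the value $n_\alpha n_\beta$.

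The two recursions then follow from a split/merge correspondence. Any non-trivial partition of $\boldsymbol{n}$ decomposes into a nonempty sub-multiset $P_1$ summing to some $\boldsymbol{m}$ with $\boldsymbol{0}\neq\boldsymbol{m}\neq\boldsymbol{n}$ and its complement $P_2$ summing to $\boldsymbol{n}-\boldsymbol{m}$; conversely the merge of a partition of $\boldsymbol{m}$ with one of $\boldsymbol{n}-\boldsymbol{m}$ is a non-trivial partition of $\boldsymbol{n}$, and under both operations $a_{\alpha\beta}$ and $w$ split or add by additivity. For \eqref{eq:arecursion} I would argue that the set of $a$-matrices of non-special classes equals the set of $a$-matrices of non-trivial partitions, which by the correspondence is exactly the union over $\boldsymbol{m}$ and over classes $\gamma_{\boldsymbol{m}},\gamma_{\boldsymbol{n}-\boldsymbol{m}}$ of the summed matrices $a_{\alpha\beta}(\boldsymbol{m};\gamma_{\boldsymbol{m}})+a_{\alpha\beta}(\boldsymbol{n}-\boldsymbol{m};\gamma_{\boldsymbol{n}-\boldsymbol{m}})$; the overcounting of a given partition by several splittings is harmless since only the resulting \emph{set} matters, and this identity simultaneously pins down the range of the labels $\gamma_{\boldsymbol{n}}$.

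For the weight recursion \eqref{eq:lambdarecursion} I would prove the two inequalities for a fixed non-special $\gamma_{\boldsymbol{n}}$ with matrix $A=\{a_{\alpha\beta}(\boldsymbol{n};\gamma_{\boldsymbol{n}})\}$. For ``$\leq$'', choose compatible classes $\gamma_{\boldsymbol{m}},\gamma_{\boldsymbol{n}-\boldsymbol{m}}$ (those with $a(\boldsymbol{m};\gamma_{\boldsymbol{m}})+a(\boldsymbol{n}-\boldsymbol{m};\gamma_{\boldsymbol{n}-\boldsymbol{m}})=A$) and minimal-weight representatives $P_1,P_2$; their merge lies in $\gamma_{\boldsymbol{n}}$ and has weight $\lambda_b(\boldsymbol{m};\gamma_{\boldsymbol{m}})+\lambda_b(\boldsymbol{n}-\boldsymbol{m};\gamma_{\boldsymbol{n}-\boldsymbol{m}})$. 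For ``$\geq$'', take a minimal-weight $P\in\gamma_{\boldsymbol{n}}$, split it as $P_1\sqcup P_2$ of types $\gamma_{\boldsymbol{m}},\gamma_{\boldsymbol{n}-\boldsymbol{m}}$ (automatically compatible with $A$); then $\lambda_b(\boldsymbol{n};\gamma_{\boldsymbol{n}})=w(P_1)+w(P_2)\geq\lambda_b(\boldsymbol{m};\gamma_{\boldsymbol{m}})+\lambda_b(\boldsymbol{n}-\boldsymbol{m};\gamma_{\boldsymbol{n}-\boldsymbol{m}})$, bounded below by the right-hand minimum. Since every split strictly lowers $\sum_\alpha n_\alpha$, the recursions together with \eqref{eq:ainitialdata}--\eqref{eq:lambdainitialdata} are well-founded and uniquely determine all the data. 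The main obstacle is the strict-convexity/support-disjointness argument of the second paragraph isolating the special class; the remaining steps are bookkeeping with the additivity of $a_{\alpha\beta}$ and $w$.
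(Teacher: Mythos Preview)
Your proof is correct and complete, but it takes a genuinely different route from the paper's. The paper establishes the lemma via a generating-function identity: it computes $(\Phi_b-1)^2$ in two ways --- first by squaring the $\boldsymbol{\rho}$-expansion of $\Phi_b-1$, which manufactures the sums $a(\boldsymbol{m};\gamma_{\boldsymbol{m}})+a(\boldsymbol{n}-\boldsymbol{m};\gamma_{\boldsymbol{n}-\boldsymbol{m}})$ and the corresponding $\lambda_b$-sums, and second as $\Phi_b^2-2\Phi_b+1$, which after the observation that $\prod_{\boldsymbol{m}}(r_{\boldsymbol{m}}+1)-2>0$ precisely for non-trivial partitions reproduces the non-special part of the original expansion --- and then matches $E$-exponents and leading $\varphi$-exponents term by term. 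You instead work directly with the underlying combinatorial objects: you set up a split/merge correspondence between non-trivial generalized partitions of $\boldsymbol{n}$ and pairs of partitions of $\boldsymbol{m}$ and $\boldsymbol{n}-\boldsymbol{m}$, and read off \eqref{eq:arecursion}--\eqref{eq:lambdarecursion} from the additivity of $a_{\alpha\beta}$ and the weight $w$ under merging.

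Your approach is more elementary and arguably more self-contained. In particular, your second paragraph supplies an explicit proof (the positivity/support-disjointness argument on $\sum_{i\neq j}\boldsymbol{m}_i\boldsymbol{m}_j^{\mathsf T}=0$) that the special class $\widehat{\gamma}_{\boldsymbol{n}}$ contains \emph{only} the trivial partition $r_{\boldsymbol{m}}=\delta_{\boldsymbol{m}\boldsymbol{n}}$; the paper states this as a fact but does not justify it. The paper's squaring trick is slicker algebraically but implicitly relies on the positivity of all coefficients to rule out accidental cancellation of leading $\varphi$-terms; your bijective argument sidesteps that concern entirely and makes the well-foundedness of the recursion (via the strict decrease of $\|\boldsymbol{n}\|_1$ under splitting) transparent.
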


\begin{proof}[Proof of the Theorem]
The proof uses induction in $\boldsymbol{n}$ and $g$. Let us assume that the structure of the free energies, $F^{(\boldsymbol{m})}_h$ is known for $\boldsymbol{m}\leq \boldsymbol{n}$ and $h<g$. Then we can calculate the complete anti--holomorphic structure of $G^{(\boldsymbol{n})}_g$ appearing in \eqref{eq:integratingG}, and thus integrate the holomorphic anomaly equations. In order to see how this may be explicitly done recall the relation \eqref{eq:covariantderivativeonpropagator}: a $z$--derivative on a polynomial raises the degree by $1$. Also, if the $z$--derivative acts on an exponential such as \eqref{eq:EtoExp}, the result is the same exponential times a polynomial of degree $2$. If we acted twice with the derivative, the degree would be $4$. We will also use that $\partial_z F^{(\boldsymbol{0})}_h$ is a polynomial of degree $3h-2$ for $h\geq 1$. Finally, in order to ease the notation a little, we are going to use $E_{\alpha\beta}$ instead of $\exp \frac{1}{2}\, \partial_z A_\alpha \partial_z A_\beta\, S^{zz}$ and we are going to drop the $S^{zz}$ in $\textrm{Pol} \left(S^{zz};d\right)$ since it is always understood.

Recall from \eqref{eq:NPeq} that $G^{(\boldsymbol{n})}_g$ is composed of two terms. The first may be calculated as
\be
\sum_{h=1}^g \mathcal{D}^{(\boldsymbol{n})}_h F^{(\boldsymbol{n})}_{g-h} = \sum_{\{\gamma_{\boldsymbol{n})}\}} E^{a \left( \boldsymbol{n};\gamma_{\boldsymbol{n}} \right)}\, \textrm{Pol} \left( 3\left( g+b^{(n)}-\lambda_b \left(\boldsymbol{n};\gamma_{\boldsymbol{n}} \right) \right) -1 \right),
\ee
\noindent
where the dominant contribution arises from the term $\mathcal{D}^{(\boldsymbol{n})}_1 F^{(\boldsymbol{n})}_{g-1}$, that is, it is the contribution with the highest degree polynomial accompanying each exponential.

The second term in $G^{(\boldsymbol{n})}_g$ is not so straightforward as it is given by a product. Let us evaluate it explicitly,
\begin{eqnarray}
&&
\sideset{}{'}\sum_{\boldsymbol{m}=\boldsymbol{0}}^{\boldsymbol{n}} \sum_{h=0}^{g-B(\boldsymbol{n},\boldsymbol{m})} \left( \partial_z F^{(\boldsymbol{m})}_{h-1} - \partial_z A^{(\boldsymbol{m})} F^{(\boldsymbol{m})}_h \right) \left( \partial_z F^{(\boldsymbol{n}-\boldsymbol{m})}_{g-1-B(\boldsymbol{n},\boldsymbol{m})-h} - \partial_z A^{(\boldsymbol{n}-\boldsymbol{m})} F^{(\boldsymbol{n}-\boldsymbol{m})}_{g-B(\boldsymbol{n},\boldsymbol{m})-h} \right) = \nonumber \\
&=& \sideset{}{'}\sum_{\boldsymbol{m}=\boldsymbol{0}}^{\boldsymbol{n}}\, \sum_{h=0}^{g-B} \left\{ \sum_{\left\{ \gamma_{\boldsymbol{m}} \right\}} E^{a \left( \boldsymbol{m}; \gamma_{\boldsymbol{m}} \right)}\, \textrm{Pol} \left( 3 \left( h + b^{(\boldsymbol{m})} - \lambda_b \left( \boldsymbol{m}; \gamma_{\boldsymbol{m}} \right) \right) \right) \right\} \times \label{eq:proofquadraticterm} \\
&&
\hspace{5pt}
\times \left\{ \sum_{\left\{ \gamma_{\boldsymbol{n}-\boldsymbol{m}} \right\}} E^{a \left( \boldsymbol{n}-\boldsymbol{m}; \gamma_{\boldsymbol{n}-\boldsymbol{m}} \right)}\,  \textrm{Pol} \left( 3 \left( g - h - B(\boldsymbol{n},\boldsymbol{m}) + b^{(\boldsymbol{n}-\boldsymbol{m})} - \lambda_b \left( \boldsymbol{n}-\boldsymbol{m}; \gamma_{\boldsymbol{n}-\boldsymbol{m}} \right) \right) \right) \right\} = \nonumber \\
&=& \sideset{}{'}\sum_{\boldsymbol{m}=\boldsymbol{0}}^{\boldsymbol{n}} \sum_{\left\{ \gamma_{\boldsymbol{m}}, \gamma_{\boldsymbol{n}-\boldsymbol{m}} \right\}} E^{a \left( \boldsymbol{m}; \gamma_{\boldsymbol{m}} \right) + a \left( \boldsymbol{n}-\boldsymbol{m}; \gamma_{\boldsymbol{n}-\boldsymbol{m}} \right)}\, \textrm{Pol} \left( 3 \left( g + b^{(\boldsymbol{n})} - \lambda_b \left( \boldsymbol{m}; \gamma_{\boldsymbol{m}} \right) - \lambda_b \left( \boldsymbol{n}-\boldsymbol{m}; \gamma_{\boldsymbol{n}-\boldsymbol{m}} \right) \right) \right), \nonumber
\end{eqnarray}
\noindent
where we have used the definition of $B(\boldsymbol{n},\boldsymbol{m})$ and dropped the sum in $h$ since there is no dependence on it anymore. Now one makes use of Lemma \ref{lem:setrecurrence} and obtains the final result,
\begin{equation}
(\ref{eq:proofquadraticterm}) = \sideset{}{'}\sum_{\left\{ \gamma_{\boldsymbol{n}} \right\}} E^{a \left( \boldsymbol{n}; \gamma_{\boldsymbol{n}} \right)}\, \textrm{Pol} \left( 3 \left( g + b^{(\boldsymbol{n})} - \lambda_b \left( \boldsymbol{n}; \gamma_{\boldsymbol{n}} \right) \right) \right),
\end{equation}
\noindent
where the prime means that $\widehat{\gamma}_{\boldsymbol{n}}$ is not included in the sum. Putting both ingredients of $G^{(\boldsymbol{n})}_g$ together, one arrives at
\bea
G^{(\boldsymbol{n})}_g &=& E^{a \left( \boldsymbol{n}; \widehat{\gamma}_{\boldsymbol{n}} \right)}\, \textrm{Pol} \left( 3 \left( g + b^{(\boldsymbol{n})} - \lambda_b \left( \boldsymbol{n}; \widehat{\gamma}_{\boldsymbol{n}} \right) \right) - 1 \right) + \nonumber \\
&&
\hspace{80pt}
+ \sideset{}{'}\sum_{\left\{ \gamma_{\boldsymbol{n}} \right\}} E^{a \left( \boldsymbol{n}; \gamma_{\boldsymbol{n}} \right)}\, \textrm{Pol} \left( 3 \left( g + b^{(\boldsymbol{n})} - \lambda_b \left( \boldsymbol{n}; \gamma_{\boldsymbol{n}} \right) \right) \right),
\label{eq:thmalmostGng}
\eea
\noindent
where we have explicitly separated the $\widehat{\gamma}_{\boldsymbol{n}}$ term, and assembled all others together. This separation is convenient since
\begin{equation}
E^{a \left( \boldsymbol{n}; \widehat{\gamma}_{\boldsymbol{n}} \right)} = \rme^{\frac{1}{2} \sum_{\alpha,\beta=1}^\kappa a_{\alpha\beta} \left( \boldsymbol{n}; \widehat{\gamma}_{\boldsymbol{n}} \right) \partial_z A_\alpha\, \partial_z A_\beta\, S^{zz}} = \rme^{\frac{1}{2} \sum_{\alpha,\beta=1}^\kappa n_\alpha n_\beta\, \partial_z A_\alpha\, \partial_z A_\beta\, S^{zz}} = \rme^{\frac{1}{2} \left( \partial_z A^{(\boldsymbol{n})} \right)^2 S^{zz}},
\end{equation}
\noindent
and what we have to integrate with respect to $S^{zz}$ is precisely $\rme^{-\frac{1}{2} \left( \partial_z A^{(\boldsymbol{n})} \right)^2 S^{zz}}\, G^{(\boldsymbol{n})}_g$. Consequently, the first term in (\ref{eq:thmalmostGng}) becomes a polynomial within the integral, while the others keep an exponential part. In fact, using that
\bea
\int^{S^{zz}} \rmd \widetilde{S}^{zz}\, \textrm{Pol}\, ( \widetilde{S}^{zz};d ) &=& \textrm{Pol} \left( S^{zz};d+1 \right), \\ 
\int^{S^{zz}} \rmd \widetilde{S}^{zz}\, \rme^{c\, \widetilde{S}^{zz}}\, \textrm{Pol}\, ( \widetilde{S}^{zz};d ) &=& \rme^{c \, S^{zz}}\, \textrm{Pol} \left( S^{zz};d \right), \qquad c \neq 0,
\eea
\noindent
we finally obtain the result we were looking for
\begin{eqnarray}
F^{(\boldsymbol{n})}_g &=& \rme^{\frac{1}{2} \left( \partial_z A^{(\boldsymbol{n})} \right)^2 S^{zz}} \left( f^{(\boldsymbol{n})}_g + \int^{S^{zz}} \rmd \widetilde{S}^{zz}\, \rme^{-\frac{1}{2} \left( \partial_z A^{(\boldsymbol{n})} \right)^2 \widetilde{S}^{zz}}\, G^{(\boldsymbol{n})}_g (z,\widetilde{S}^{zz}) \right) = \nonumber \\
&=& E^{a \left( \boldsymbol{n}; \widehat{\gamma}_{\boldsymbol{n}} \right)} \Bigg( f^{(\boldsymbol{n})}_g + \textrm{Pol} \left( 3 \left( g + b^{(\boldsymbol{n})} - \lambda_b \left( \boldsymbol{n}; \widehat{\gamma}_{\boldsymbol{n}} \right) \right) - 1 + 1 \right) + \nonumber \\
&&
\hspace{110pt}
+ \sideset{}{'}\sum_{\left\{ \gamma_{\boldsymbol{n}} \right\}} E^{a \left( \boldsymbol{n}; \gamma_{\boldsymbol{n}} \right) - a \left( \boldsymbol{n}; \widehat{\gamma}_{\boldsymbol{n}} \right)}\, \textrm{Pol} \left( 3 \left( g + b^{(\boldsymbol{n})} - \lambda_b \left( \boldsymbol{n}; \gamma_{\boldsymbol{n}} \right) \right) \right) \Bigg) = \nonumber \\
&=& \sum_{\left\{ \gamma_{\boldsymbol{n}} \right\}} E^{a \left( \boldsymbol{n}; \gamma_{\boldsymbol{n}} \right)}\, \textrm{Pol} \left( 3 \left( g + b^{(\boldsymbol{n})} - \lambda_b \left( \boldsymbol{n}; \gamma_{\boldsymbol{n}} \right) \right) \right).
\end{eqnarray}

To conclude the proof one has to check the induction base case. For $g=0$ and any $\boldsymbol{n}$,
\be\label{eq:basecaseeq}
\left( \partial_{S^{zz}} - \frac{1}{2} \left( \partial_z A^{(\boldsymbol{n})} \right)^2 \right) F^{(\boldsymbol{n})}_0 = \frac{1}{2} \sideset{}{'}\sum_{\substack{ \boldsymbol{m} = \boldsymbol{0}, \\ B(\boldsymbol{n},\boldsymbol{m})=0 }}^{\boldsymbol{n}} \partial_z A^{(\boldsymbol{m})}\, \partial_z A^{(\boldsymbol{n}-\boldsymbol{m})}\, F^{(\boldsymbol{m})}_0\, F^{(\boldsymbol{n}-\boldsymbol{m})}_0.
\ee
\noindent
For one--instanton sectors, of the form $(0| \cdots | 1 | \cdots |0)$, the right--hand--side of \eqref{eq:basecaseeq} is zero and we can integrate directly
\be
F^{(0| \cdots | 1 | \cdots |0)}_0 = \rme^{\frac{1}{2} \left(\partial_z A^{(0| \cdots | 1 | \cdots |0)}\right)^2 S^{zz}}\, f^{(0| \cdots | 1 | \cdots |0)}_0 = E^{a \left( (0| \cdots | 1 | \cdots |0); \gamma_{(0| \cdots | 1 | \cdots |0)} \right)}\, \textrm{Pol} \left(S^{zz};0\right),
\ee
\noindent
which is exactly the statement of the theorem \eqref{eq:1paramStructure} for the one--instanton sector and $g=0$. Indeed, because it is an one--instanton sector, there is only one $\gamma_{\boldsymbol{n}}$ precisely given by $\widehat{\gamma}_{\boldsymbol{n}}$. For that class $\sum_{\alpha\beta} a_{\alpha\beta}\, \partial_z A_\alpha\, \partial_z A_\beta = \sum_{\alpha\beta} n_\alpha n_\beta\, \partial_z A_\alpha\, \partial_z A_\beta = \left( \partial_z A^{(\boldsymbol{n})}\right)^2$, where we have used \eqref{eq:ainitialdata}. Further using \eqref{eq:lambdainitialdata} we obtain $\textrm{Pol} \left(S^{zz};0\right)$. Now one uses induction in the instanton number $\boldsymbol{n}$ in order to show that
\be\label{eq:basecasesol}
F^{(\boldsymbol{n})}_0 = \sum_{\substack{\{\gamma_{\boldsymbol{n}}\} \\ \lambda_b \left( \boldsymbol{n};\gamma_{\boldsymbol{n}} \right) = b^{(\boldsymbol{n})}}} E^{a \left( \boldsymbol{n};\gamma_{\boldsymbol{n}} \right)}\, \textrm{Pol} \left(S^{zz};0\right),
\ee
\noindent
which is the expression \eqref{eq:1paramStructure} when $g=0$ (recall that if the degree of $\textrm{Pol}$ is negative the polynomial is taken to be zero). Now assume that \eqref{eq:basecasesol} is true for $\boldsymbol{m} < \boldsymbol{n}$. Because\footnote{If we define $\lambda_B \left( \boldsymbol{n};\gamma_{\boldsymbol{n}} \right) := \lambda_b \left( \boldsymbol{n};\gamma_{\boldsymbol{n}} \right) - b^{(\boldsymbol{n})}$, then \eqref{eq:lambdarecursion} and \eqref{eq:lambdainitialdata} become
\be
\lambda_B \left( \boldsymbol{n};\gamma_{\boldsymbol{n}} \right) = \min \left\{ \lambda_B \left( \boldsymbol{m};\gamma_{\boldsymbol{m}} \right) + \lambda_B \left( \boldsymbol{n}-\boldsymbol{m};\gamma_{\boldsymbol{n}-\boldsymbol{m}} \right) + B \left( \boldsymbol{n},\boldsymbol{m} \right) \right\},
\ee
\noindent
with $\lambda_B \left( \boldsymbol{n};\hat{\gamma}_{\boldsymbol{n}} \right) = 0$. Since we assume $B\geq 0$, $\forall \boldsymbol{n},\boldsymbol{m}$, induction shows that $\lambda_B \left( \boldsymbol{n};\gamma_{\boldsymbol{n}} \right) \geq 0$.} $\lambda_b \geq b^{(\boldsymbol{n})}$, we can write $F^{(\boldsymbol{m})}_0$ as
\be
\sum_{\{\gamma_{\boldsymbol{m}}\}} E^{a \left( \boldsymbol{m};\gamma_{\boldsymbol{m}} \right)}\, \textrm{Pol} \left( S^{zz};b^{(\boldsymbol{m})} - \lambda_b \left( \boldsymbol{m};\gamma_{\boldsymbol{m}} \right) \right).
\ee
Then, the right--hand--side of \eqref{eq:basecaseeq} is
\bea
&&
\sideset{}{'}\sum_{\substack{ \boldsymbol{m} = \boldsymbol{0} \\ B(\boldsymbol{n},\boldsymbol{m})=0 }}^{\boldsymbol{n}}\, \sum_{\{\gamma_{\boldsymbol{m}}\},\{\gamma_{\boldsymbol{n}-\boldsymbol{m}}\}} E^{a \left( \boldsymbol{m};\gamma_{\boldsymbol{m}} \right) + a \left( \boldsymbol{n}-\boldsymbol{m};\gamma_{\boldsymbol{n}-\boldsymbol{m}} \right)} \times \nonumber \\
&&
\hspace{100pt}
\times\, \textrm{Pol} \left( b^{(\boldsymbol{m})} + b^{(\boldsymbol{n}-\boldsymbol{m})} - \lambda_b \left( \boldsymbol{m};\gamma_{\boldsymbol{m}} \right) - \lambda_b \left( \boldsymbol{n}-\boldsymbol{m};\gamma_{\boldsymbol{n}-\boldsymbol{m}} \right) \right) = \nonumber\\
&&
= \sideset{}{'}\sum_{\{\gamma_{\boldsymbol{n}}\}} E^{a \left( \boldsymbol{n};\gamma_{\boldsymbol{n}} \right)}\, \textrm{Pol} \left( b^{(\boldsymbol{n})} - \lambda_b(\boldsymbol{n};\gamma_{\boldsymbol{n}}) \right) = \sideset{}{'}\sum_{\substack{\{\gamma_{\boldsymbol{n}}\} \\ \lambda_b \left( \boldsymbol{n};\gamma_{\boldsymbol{n}} \right) = b^{(\boldsymbol{n})}}} E^{a \left( \boldsymbol{n};\gamma_{\boldsymbol{n}} \right)}\, \textrm{Pol} \left(0\right),
\eea
\noindent
where we have used that $B(\boldsymbol{n},\boldsymbol{m})=0$ means $b^{(\boldsymbol{m})}+b^{(\boldsymbol{n}-\boldsymbol{m})}=b^{(\boldsymbol{n})}$, and the recursions \eqref{eq:arecursion}, \eqref{eq:lambdarecursion}. Integrating \eqref{eq:basecaseeq} concludes the proof of the base case and the proof of the theorem.
\end{proof}

A generalization of the results in this section to the case of Calabi--Yau geometries with multi--dimensional complex moduli space is included in appendix \ref{app:multidimcomplmodspace}.

\subsection{Resonance and Logarithmic Sectors}\label{sec:resonanceandlogarithmicsectors}

As we have discussed in section \ref{sec:transseriesintro}, in some situations our models might be resonant. Exactly when does a Calabi--Yau geometry yields a resonant solution is an interesting question for future research; in here we shall only be interested in assuming resonance and investigating its consequences on the solutions to the (nonperturbative) holomorphic anomaly equations. In the matrix model setting we already know that we should expect logarithmic sectors and a ``reflective'' pattern within the instanton actions. In the following we will assume that resonance is still manifested by logarithmic blocks in the transseries, but in fact any other non--analytic function of $g_s$ could take up their place and the calculation would still go through. In this subsection we shall explore what can be said about these logarithmic sectors.

Recall that the generalization of the transseries \textit{ansatz} \eqref{eq:preNPansatz}, \eqref{eq:multiparameteransatz}, to incorporate logarithmic sectors is quite straightforward:
\be\label{eq:preNPlogansatz}
F (\boldsymbol{\sigma},g_s) = \sum_{\boldsymbol{n} \in \mathbb{N}_0^\kappa} \boldsymbol{\sigma}^{\boldsymbol{n}}\, \rme^{-A^{(\boldsymbol{n})}/g_s} \sum_{k=0}^{k_{\text{max}}^{(\boldsymbol{n})}} \log^k \left(g_s\right)\, F^{(\boldsymbol{n})[k]}(g_s),
\ee
\noindent
with
\be\label{eq:multiparameterlogansatz}
F^{(\boldsymbol{n})[k]}(g_s) \simeq \sum_{g=0}^{+\infty} g_s^{g+b^{(\boldsymbol{n})[k]}}F^{(\boldsymbol{n})[k]}_g.
\ee
\noindent
Here $k_{\text{max}}$ is a function that specifies the maximum power of $\log \left(g_s\right)$ that could appear in each instanton sector. As it happened with the (generalized) starting order $b^{(\boldsymbol{n})[k]}$, $k_{\text{max}}$ is not fixed by the holomorphic anomaly equations. We will assume some generic condition for $k_{\text{max}}^{(\boldsymbol{n})}$ to carry on with the calculations (see Theorem \ref{thm:multiparameterlogstructure}) and leave the special cases for a separate analysis.

If we plug \eqref{eq:preNPlogansatz} in the holomorphic anomaly equation we find
\bea
\sum_{k=0}^{k_{\text{max}}^{(\boldsymbol{n})}} \log^k \left(g_s\right) \mathcal{D}^{(\boldsymbol{n})} F^{(\boldsymbol{n})[k]} &=&  \frac{1}{2} g_s^2\, \sideset{}{'}\sum_{\boldsymbol{m}=\boldsymbol{0}}^{\boldsymbol{n}}\, \sum_{k=0}^{k_{\text{max}}^{(\boldsymbol{m})} + k_{\text{max}}^{(\boldsymbol{n}-\boldsymbol{m})}} \log^k \left(g_s\right) \times \\
&&
\hspace{-90pt}
\times \sum_{\ell = \max \left( 0, k - k_{\text{max}}^{(\boldsymbol{n}-\boldsymbol{m})} \right)}^{\min \left( k, k_{\text{max}}^{(\boldsymbol{m})} \right)} \left(\partial_z -\frac{1}{g_s} \partial_z A^{(\boldsymbol{m})} \right) F^{(\boldsymbol{m})[\ell]} \left(\partial_z -\frac{1}{g_s} \partial_z A^{(\boldsymbol{n}-\boldsymbol{m})} \right) F^{(\boldsymbol{n}-\boldsymbol{m})[k-\ell]}. \nonumber
\eea
\noindent
Let us focus on the situation in which $k_{\text{max}}^{(\boldsymbol{n})} \geq  k_{\text{max}}^{(\boldsymbol{m})} + k_{\text{max}}^{(\boldsymbol{n}-\boldsymbol{m})}$. This case avoids the appearance of any extra constraints, and it is cleaner to handle. It then follows
\be\label{eq:logpreeq}
\mathcal{D}^{(\boldsymbol{n})} F^{(\boldsymbol{n})[k]} = \sideset{}{'}\sum_{\boldsymbol{m}=\boldsymbol{0}}^{\boldsymbol{n}} \sum_{\ell = \max \left( 0, k-k_{\text{max}}^{(\boldsymbol{n}-\boldsymbol{m})} \right)}^{\min \left( k, k_{\text{max}}^{(\boldsymbol{m})} \right)} \left(\partial_z -\frac{1}{g_s} \partial_z A^{(\boldsymbol{m})} \right) F^{(\boldsymbol{m})[\ell]} \left(\partial_z -\frac{1}{g_s} \partial_z A^{(\boldsymbol{n}-\boldsymbol{m})} \right) F^{(\boldsymbol{n}-\boldsymbol{m})[k-\ell]},
\ee
\noindent
for $\boldsymbol{n}\neq 0$, $k=0,1,2, \ldots, k_{\text{max}}^{(\boldsymbol{n})}$. Using the asymptotic expansion \eqref{eq:multiparameterlogansatz} we obtain
\begin{eqnarray}
-\partial_{S^{zz}} A^{(\boldsymbol{n})}\, F^{(\boldsymbol{n})[k]}_{g+1} + \sum_{h=0}^g \mathcal{D}^{(\boldsymbol{n})}_h F^{(\boldsymbol{n})[k]}_{g-h} &=&  \frac{1}{2}\, \sideset{}{'}\sum_{\boldsymbol{m}=\boldsymbol{0}}^{\boldsymbol{n}}\,\sum_{\ell} \sum_{h=0}^{g-B(\boldsymbol{n},\boldsymbol{m})[k,\ell]} \left( \partial_z F^{(\boldsymbol{m})[\ell]}_{h-1} - \partial_z A^{(\boldsymbol{m})}\, F^{(\boldsymbol{m})[\ell]}_h \right) \times \nonumber \\
&&
\hspace{-35pt}
\times \left( \partial_z F^{(\boldsymbol{n}-\boldsymbol{m})[k-\ell]}_{g-1-B(\boldsymbol{n},\boldsymbol{m})[k,\ell]-h} - \partial_z A^{(\boldsymbol{n}-\boldsymbol{m})}\, F^{(\boldsymbol{n}-\boldsymbol{m})[k-\ell]}_{g-B(\boldsymbol{n},\boldsymbol{m})[k,\ell]-h} \right),
\label{eq:almostNPlogeq}
\end{eqnarray}
\noindent
where the sum in $\ell$ has the same limits as in the previous equations, and
\be
B(\boldsymbol{n},\boldsymbol{m})[k,\ell] := b^{(\boldsymbol{m})[\ell]}+b^{(\boldsymbol{n}-\boldsymbol{m})[k-\ell]} - b^{(\boldsymbol{n})[k]}.
\ee
\noindent
As usual we shall assume that $B\geq 0$ in the following.

Once more, if we look at the one--instanton sectors $(0| \cdots |1| \cdots | 0)$, $g=-1$, and also $k=0$ for simplicity, we find that the instanton actions are holomorphic,
\be
\partial_{S^{zz}} A_\alpha = 0, \qquad \forall\alpha.
\ee
\noindent
It is important that the equations can be used to conclude the holomorphicity of $A_\alpha$ in every circumstance (in fact this is also true even if $B$ is not greater or equal to zero for some sectors).

Now notice that the equations \eqref{eq:almostNPlogeq} are only slightly more complicated than those in \eqref{eq:almostNPeq}, so it is natural to expect a straightforward generalization of both Theorem \ref{structuraltheorem} and the counterpart of Lemma \ref{lem:setrecurrence}. This is in fact true as we shall now show.

\begin{thm}\label{thm:multiparameterlogstructure}
For any $\boldsymbol{n}\neq \boldsymbol{0}$, $k \in \{ 0,\ldots,k_{\rm{max}}^{(\boldsymbol{n})} \}$, and $g\geq 0$, the structure of the nonperturbative free energies has the form
\begin{equation}\label{eq:1paramLogStructure}
F^{(\boldsymbol{n})[k]}_g = \sum_{\{\gamma_{\boldsymbol{n}}\}} \rme^{\frac{1}{2}\sum_{\alpha,\beta=1}^\kappa a_{\alpha\beta} \left( \boldsymbol{n}; \gamma_{\boldsymbol{n}} \right) \partial_z A_\alpha\, \partial_z A_\beta\, S^{zz}}\, \mathrm{Pol} \left( S^{zz}; 3 \left( g + b^{(\boldsymbol{n})[k]} - \lambda_{b,k_{\rm{max}}}^{[k]} \left( \boldsymbol{n}; \gamma_{\boldsymbol{n}} \right) \right) \right),
\end{equation}
\noindent
where the set of numbers $\left\{ a_{\alpha\beta} \left( \boldsymbol{n}; \gamma_{\boldsymbol{n}} \right) \right\}$ and $\left\{ \lambda_{b,k_{\rm{max}}}^{[k]} \left( \boldsymbol{n}; \gamma_{\boldsymbol{n}} \right) \right\}$ are read off from the generating function
\bea
\Phi_{b,k_{\rm{max}}} &=& \sideset{}{'}\prod_{\boldsymbol{m}=\boldsymbol{0}}^{+\infty} \, \prod_{\ell=0}^{k_{\rm{max}}^{(\boldsymbol{m})}} \frac{1}{1 - \varphi^{b^{(\boldsymbol{m})[\ell]}}\, \psi^{\ell}\, \prod_{\alpha,\beta=1}^\kappa E_{\alpha\beta}^{m_\alpha m_\beta}\, \prod_{\alpha=1}^\kappa \rho_\alpha^{m_\alpha}} = \nonumber \\
&=& \sum_{\boldsymbol{n}=\boldsymbol{0}}^{+\infty} \boldsymbol{\rho}^{\boldsymbol{n}}\, \sum_{k=0}^{k_{\rm{max}}^{(\boldsymbol{n})}} \psi^k\, \sum_{\left\{ \gamma_{\boldsymbol{n}} \right\}} \prod_{\alpha,\beta=1}^\kappa E_{\alpha\beta}^{a_{\alpha\beta} \left( \boldsymbol{n}; \gamma_{\boldsymbol{n}} \right)}\, \varphi^{\lambda_{b,k_{\rm{max}}}^{[k]} \left( \boldsymbol{n}; \gamma_{\boldsymbol{n}} \right)} \left( 1+ \mathcal{O} \left( \varphi \right) \right).
\eea
\noindent
In here $\mathrm{Pol} \left(S^{zz};d\right)$ stands for a polynomial of degree $d$ in the variable $S^{zz}$ (and whose coefficients have a holomorphic dependence on $z$). Whenever $d<0$, the polynomial is taken to be identically zero. We are assuming that $b^{(\boldsymbol{m})[\ell]} + b^{(\boldsymbol{n}-\boldsymbol{m})[k-\ell]} - b^{(\boldsymbol{n})[k]}\geq 0$, and $k_{\rm{max}}^{(\boldsymbol{n})} - k_{\rm{max}}^{(\boldsymbol{m})} - k_{\rm{max}}^{(\boldsymbol{n}-\boldsymbol{m})} \geq 0$.
\end{thm}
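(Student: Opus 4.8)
The plan is to follow the proof of Theorem \ref{structuraltheorem} almost verbatim, carrying along the extra logarithmic index $k$ together with the formal variable $\psi$ that tracks it. I would argue by induction jointly in the instanton vector $\boldsymbol{n}$, the genus $g$, and the logarithmic power $k$, assuming the claimed structure \eqref{eq:1paramLogStructure} for all $\boldsymbol{m}\leq\boldsymbol{n}$ with $h<g$ (and all admissible logarithmic powers), and then determining the propagator structure of the right--hand--side $G^{(\boldsymbol{n})[k]}_g$ of \eqref{eq:almostNPlogeq} before integrating it through \eqref{eq:integratingG}. The two elementary integration identities used at the end of the previous proof---that integrating a polynomial raises its degree by one, while integrating an exponential times a polynomial preserves the polynomial degree---are exactly the tools needed again, since the holomorphicity of every instanton action (already re--derived in the text from the one--instanton, $g=-1$, $k=0$ equation) guarantees that the integrating operator $\partial_{S^{zz}}-\tfrac{1}{2}(\partial_z A^{(\boldsymbol{n})})^2$ is unchanged, so that the solution formula \eqref{eq:integratingG} applies verbatim.

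First I would record the logarithmic analog of Lemma \ref{lem:setrecurrence}. The quadratic numbers $a_{\alpha\beta}(\boldsymbol{n};\gamma_{\boldsymbol{n}})$ are untouched: they are read from the $\prod_{\alpha\beta}E_{\alpha\beta}^{m_\alpha m_\beta}$ content of the generating function, which is identical to \eqref{eq:generatingfunction}, so their recursion \eqref{eq:arecursion} and initial data \eqref{eq:ainitialdata} carry over unchanged. What is genuinely new is the degree--controlling number $\lambda_{b,k_{\rm{max}}}^{[k]}$. Extracting the coefficient of $\psi^k$ in $\Phi_{b,k_{\rm{max}}}$ and repeating the manipulation that led to \eqref{eq:expandedgenfunction}, I expect the recursion
\[
\lambda_{b,k_{\rm{max}}}^{[k]}\left(\boldsymbol{n};\gamma_{\boldsymbol{n}}\right) = \min\left\{ \lambda_{b,k_{\rm{max}}}^{[\ell]}\left(\boldsymbol{m};\gamma_{\boldsymbol{m}}\right) + \lambda_{b,k_{\rm{max}}}^{[k-\ell]}\left(\boldsymbol{n}-\boldsymbol{m};\gamma_{\boldsymbol{n}-\boldsymbol{m}}\right) \right\},
\]
the minimum ranging over $\boldsymbol{m}\in\{\boldsymbol{0},\ldots,\boldsymbol{n}\}'$, over the admissible window of $\ell$ appearing in \eqref{eq:almostNPlogeq}, and over compatible classes $\gamma_{\boldsymbol{m}},\gamma_{\boldsymbol{n}-\boldsymbol{m}}$, with the initial datum $\lambda_{b,k_{\rm{max}}}^{[k]}(\boldsymbol{n};\widehat{\gamma}_{\boldsymbol{n}}) = b^{(\boldsymbol{n})[k]}$. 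The proof of this is the same bookkeeping as in appendix \ref{ap:proofofrecursionlemma}, now tracking $\psi$ additively because $\psi^\ell\cdot\psi^{k-\ell}=\psi^k$ exactly mirrors the splitting of the logarithmic power in the product term.

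With the recursion in hand, the inductive step is identical in spirit to before. The first piece of $G^{(\boldsymbol{n})[k]}_g$, namely $-\sum_{h\geq 1}\mathcal{D}^{(\boldsymbol{n})}_h F^{(\boldsymbol{n})[k]}_{g-h}$, contributes, for each exponential $E^{a(\boldsymbol{n};\gamma_{\boldsymbol{n}})}$, a polynomial whose top degree $3(g+b^{(\boldsymbol{n})[k]}-\lambda_{b,k_{\rm{max}}}^{[k]}(\boldsymbol{n};\gamma_{\boldsymbol{n}}))-1$ comes from $\mathcal{D}^{(\boldsymbol{n})}_1 F^{(\boldsymbol{n})[k]}_{g-1}$, exactly as in the non--logarithmic case. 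The quadratic piece is evaluated as in \eqref{eq:proofquadraticterm}, now with the extra $\ell$--sum; using the $B(\boldsymbol{n},\boldsymbol{m})[k,\ell]$ substitution and the $\lambda$--recursion above, it collapses to the same form with the $\widehat{\gamma}_{\boldsymbol{n}}$ class removed. Integrating then reproduces \eqref{eq:1paramLogStructure}, and the $g=0$ base case is checked exactly as \eqref{eq:basecaseeq}--\eqref{eq:basecasesol}, the one--instanton sectors integrating directly because their quadratic term vanishes.

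I expect the main obstacle to be purely combinatorial and to live entirely in the verification of the $\lambda$--recursion: one must confirm that the $\ell$--window in \eqref{eq:almostNPlogeq}, $\max(0,k-k_{\rm{max}}^{(\boldsymbol{n}-\boldsymbol{m})})\leq \ell\leq\min(k,k_{\rm{max}}^{(\boldsymbol{m})})$, coincides precisely with the set of splittings for which both $\lambda_{b,k_{\rm{max}}}^{[\ell]}$ and $\lambda_{b,k_{\rm{max}}}^{[k-\ell]}$ are defined, and that the standing hypothesis $k_{\rm{max}}^{(\boldsymbol{n})}\geq k_{\rm{max}}^{(\boldsymbol{m})}+k_{\rm{max}}^{(\boldsymbol{n}-\boldsymbol{m})}$ is exactly what prevents logarithmic powers above $k_{\rm{max}}^{(\boldsymbol{n})}$ from being generated on the right--hand--side---so that, just as $B\geq 0$ avoids genus constraints, this inequality avoids logarithmic constraints. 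Once this compatibility is settled, every other step is a transcription of the $\psi$--free argument.
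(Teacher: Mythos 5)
Your proposal is correct and takes exactly the approach the paper intends: the paper in fact omits this proof, stating it is ``completely analogous to the proof for Theorem \ref{structuraltheorem}, modulo the extra index $[k]$,'' and your argument is precisely that transcription written out---joint induction in $\boldsymbol{n}$, $g$, $k$, the unchanged recursion for the $a_{\alpha\beta}$, the $\psi$-tracked analogue of Lemma \ref{lem:setrecurrence} with initial datum $\lambda_{b,k_{\text{max}}}^{[k]}\left(\boldsymbol{n};\widehat{\gamma}_{\boldsymbol{n}}\right) = b^{(\boldsymbol{n})[k]}$, and integration via \eqref{eq:integratingG}. Your closing observation, that the $\ell$-window in \eqref{eq:almostNPlogeq} coincides with the admissible splittings and that $k_{\text{max}}^{(\boldsymbol{n})} \geq k_{\text{max}}^{(\boldsymbol{m})} + k_{\text{max}}^{(\boldsymbol{n}-\boldsymbol{m})}$ plays the same constraint-avoiding role as $B\geq 0$, is likewise exactly the point made in the paper's surrounding discussion.
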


Note in particular that the exponential structure of the free energies has no dependence on $b^{(\boldsymbol{n})[k]}$ or $k_{\text{max}}^{(\boldsymbol{n})}$, since this only comes from the quadratic nature of the differential equation and the differential operator $\mathcal{D}$. The proof of this theorem is completely analogous to the proof for Theorem \ref{structuraltheorem}, modulo the extra index $[k]$, and thus we omit it.

Throughout this section we have been explicitly constructing the anti--holomorphic structure of rather general transseries \textit{ans\"atze}. The appearance of exponentials and polynomials, specified by combinatorial data, represents the generic situation one may encounter in many particular examples. Nevertheless, models enjoying certain extra symmetries may be expected to depart from this generic situation, which means that Theorem \ref{thm:multiparameterlogstructure} may not apply directly. A very important example is the one in which there are generalized instanton sectors as in \cite{gikm10, asv11, sv13}, \textit{i.e.}, we find instanton actions $A_1 = - A_2$ or generalizations of this structure. We shall discuss this case in detail in the next subsection. Another obvious  situation where Theorem \ref{thm:multiparameterlogstructure} may have to be modified is when the starting genus $b^{(\boldsymbol{n})[k]}$ or the function $k_{\text{max}}^{(\boldsymbol{n})}$ do not satisfy the assumptions of the theorem (specified at the end). When this happens, equations \eqref{eq:NPeq} and \eqref{eq:logpreeq} give constraints on the free energies which have already been calculated in the recursion, or they give extra conditions on the instanton actions. The reason is simply because in those equations the left--hand--side is zero while the right--hand--side is not. Note that this is not any inconsistency; particular models are indeed expected to satisfy certain extra relations between their free energies and to have special values for the instanton actions, across moduli space, which are compatible with these constraints. In such cases Theorem \ref{thm:multiparameterlogstructure} will not provide the exact anti--holomorphic structure for the sectors satisfying the extra constraints, and one should be on the lookout for these effects.

Let us discuss some of these aforementioned possibilities. For example, in a logarithm--free transseries for which the starting genus satisfies $B(\boldsymbol{n},\boldsymbol{m}) = b^{(\boldsymbol{m})} + b^{(\boldsymbol{n}-\boldsymbol{m})} - b^{(\boldsymbol{n})} < 0$, for some $\boldsymbol{n}$ and one particular sector $\boldsymbol{m}$, one will have an equation coming from \eqref{eq:NPeq} which is
\be
0 = \partial_z A^{(\boldsymbol{m})}\, \partial_z A^{(\boldsymbol{n}-\boldsymbol{m})}\,  F^{(\boldsymbol{m})}_0\, F^{(\boldsymbol{n}-\boldsymbol{m})}_0.
\ee
\noindent
This means that one combination of instanton actions is \textit{constant} (recall $F^{(\boldsymbol{m})}_0 \neq 0$ by construction). Something similar happens for a transseries in which $k_{\text{max}}^{(\boldsymbol{n})} < k_{\text{max}}^{(\boldsymbol{m})} + k_{\text{max}}^{(\boldsymbol{n}-\boldsymbol{m})}$, for some $\boldsymbol{n}$ and $\boldsymbol{m}$. For example, if, for simplicity, it is only one sector $\boldsymbol{m}$ satisfying this inequality, \eqref{eq:logpreeq} will yield
\be
0 = \left( \partial_z - \frac{1}{g_s} \partial_z A^{(\boldsymbol{m})} \right) F^{(\boldsymbol{m})[k_{\text{max}}^{(\boldsymbol{m})}]} \left( \partial_z - \frac{1}{g_s} \partial_z A^{(\boldsymbol{n}-\boldsymbol{m})} \right) F^{(\boldsymbol{n}-\boldsymbol{m})[k_{\text{max}}^{(\boldsymbol{n}-\boldsymbol{m})}]}.
\ee
So either one of the terms, or both, cancel. Say the first one does. Then, when using the asymptotic expansion in $g_s$ for the free energy one will find that $A^{(\boldsymbol{m})}$ is constant and $\partial_z F^{(\boldsymbol{m})[k_{\text{max}}^{(\boldsymbol{m})}]}_g = 0$ for every $g$. In spite of this, and as the propagator variable $S^{zz}$ includes in itself both $z$ and $\bar{z}$, the free energy may still be $z$--dependent due to \eqref{eq:covariantderivativeonpropagator}.

So, we have seen that these non--generic situations are associated to finding constant instanton actions (see, \textit{e.g.}, \cite{ps09} for examples). Since the holomorphic anomaly equations depend on the \textit{derivative} of the instanton action this will naturally affect the structure of the solutions, both in the exponential and polynomial parts. For the sake of simplicity, consider a logarithmic--free one--parameter transseries with $b^{(n)}=+1$. In this case, integrating the equations gives the structure
\be\label{eq:Aconststructure}
F^{(n)}_g = \textrm{Pol} \left( 3 \left[\frac{g}{2} \right] -1 \right), \qquad g\geq 1,
\ee
\noindent
and $F^{(n)}_0 = \textrm{Pol} \left(0\right)$. What we find is that the free energies come in doublets with the same structure, and that the exponentials have disappeared. The degree dependence resembles that of the perturbative sector. If the transseries has more parameters but still only one instanton action is constant, say $A_1$, then we will find the same dependence \eqref{eq:Aconststructure} for the corresponding instanton sectors $(n|0|\cdots|0)$. For the other sectors we find the generic structure given by Theorem \ref{structuraltheorem}, provided we set $E_{1\beta}=1$, $\forall \beta$, in the generating function (note \eqref{eq:EtoExp}).

In conclusion, it is important to remark that in each particular example to be worked out one will first have to understand what the structure of the solutions is, taking into account any possible deviations from Theorem \ref{thm:multiparameterlogstructure} which we have just described.

\subsection{Genus Expansions within Transseries Solutions}

As described in section \ref{sec:transseriesintro}, resonance is usually associated to the fact that the (generalized) instanton actions are not independent of each other. We have already addressed one consequence of such structure, logarithms, let us now address another. Our transseries are constructed via multi--instanton sectors $F^{\boldsymbol{(n)}} (g_s)$, labeled by integers $\boldsymbol{n} = ( n_1, \ldots, n_\kappa) \in \mathbb{N}^\kappa_0$. Generically, these sectors are computed as asymptotic power series in $g_s$, but not always: for instance, the perturbative sector $\boldsymbol{n} = \boldsymbol{0}$ has a well--known genus expansion, \textit{i.e.}, it is given by an asymptotic power series in $g_s^2$. Are there other sectors with similar  genus expansions? It is at least known that in the examples studied in \cite{asv11, sv13}---dealing with two--parameters transseries---whenever the actions in the exponential term would cancel each other out, the corresponding nonperturbative sector had an expansion in $g_s^2$. Let us  discuss how this may be generalized away from explicit examples, by exploring the holomorphic anomaly equations.

Recall from, \textit{e.g.}, \eqref{eq:preNPansatz}, the general transseries form
\be
F ( \boldsymbol{\sigma}, g_s ) = \sum_{\boldsymbol{n} \in \BN^\kappa_0} \boldsymbol{\sigma}^{\boldsymbol{n}} \, \rme^{-A^{(\boldsymbol{n})}/g_s}\, F^{(\boldsymbol{n})} (g_s).
\ee
\noindent
where $\boldsymbol{\sigma}^{\boldsymbol{n}} = \prod_{\alpha = 1}^\kappa \sigma_\alpha^{n_\alpha}$ and $A^{(\boldsymbol{n})} = \sum_{\alpha=1}^\kappa n_\alpha A_\alpha$. The results that follow are valid in general, but, as usual, we shall ease the notation by focusing on one--dimensional complex moduli spaces.

Via the holomorphic anomaly equations, the free energy in a fixed multi--instanton sector satisfies \eqref{eq:preNPeq}
\be\label{eq:DFT}
\mathcal{D}^{(\boldsymbol{n})}(g_s)\, F^{(\boldsymbol{n})}(g_s) = T^{(\boldsymbol{n})}(g_s).
\ee
\noindent
Here $\mathcal{D}^{(\boldsymbol{n})}(g_s)$ is given by 
\be\label{eq:topcalD}
\mathcal{D}^{(\boldsymbol{n})}(g_s) = \partial_{S^{zz}} - \frac{1}{2} g_s^2 \left( D_z - \frac{1}{g_s}\partial_z A^{(\boldsymbol{n})}+2\, \partial_z \widetilde{F}^{(\boldsymbol{0})}(g_s) \right)\left( \partial_z - \frac{1}{g_s}\partial_z A^{(\boldsymbol{n})} \right),
\ee
\noindent
where, recall, $\widetilde{F}^{(\boldsymbol{0})}(g_s) \equiv F^{(\boldsymbol{0})}(g_s) - \frac{1}{g_s^2} F_0^{(\boldsymbol{0})}(g_s) \simeq \sum_{g=1}^{+\infty} g_s^{2g-2} F_g^{(\boldsymbol{0})}$. As to the right--hand--side of \eqref{eq:DFT}, $T^{(\boldsymbol{n})}(g_s)$ is the familiar quadratic term
\bea
T^{(\boldsymbol{n})}(g_s) &:=& \frac{1}{2}g_s^2\, \sideset{}{'} \sum_{\boldsymbol{m}=\boldsymbol{0}}^{\boldsymbol{n}} \left( \partial_z - \frac{1}{g_s} \partial_z A^{(\boldsymbol{m})} \right) F^{(\boldsymbol{m})} \left(\partial_z - \frac{1}{g_s} \partial_z A^{(\boldsymbol{n}-\boldsymbol{m})} \right) F^{(\boldsymbol{n}-\boldsymbol{m})} \equiv \nonumber \\
&\equiv& \frac{1}{2}g_s^2\, \sideset{}{'} \sum_{\boldsymbol{m}=\boldsymbol{0}}^{\boldsymbol{n}} H^{(\boldsymbol{m})}(g_s)\, H^{(\boldsymbol{n}-\boldsymbol{m})}(g_s),
\eea
\noindent
where we have defined the $H^{(\boldsymbol{m})}(g_s)$ functions for later convenience.

Our main goal in here is to show that if a sector $\boldsymbol{r}$ is such that $A^{(\boldsymbol{r})}=0$, then $F^{(\boldsymbol{r})}(-g_s) = + F^{(\boldsymbol{r})}(g_s)$. The strategy to prove this will consist of studying $F^{(\boldsymbol{n})}(-g_s)$, for any sector $\boldsymbol{n}$, and see that, under some conditions, it is equal to $F^{(\boldsymbol{n}^\star)}(+g_s)$ for some other sector $\boldsymbol{n}^\star$. It will turn out that the sectors with vanishing instanton action $A^{(\boldsymbol{r})}=0$ are those for which $\boldsymbol{r}^\star = \boldsymbol{r}$, implying that the corresponding free energies are even in $g_s$.

Let us start with the simplest example. Consider a two--parameters transseries in which the instanton actions satisfy the relation
\be
A_1 = - A_2.
\ee
\noindent
The transseries has a $\mathbb{Z}_2$--symmetry if this is the case. Then, for any instanton sector $(n_1| n_2)$,
\be\label{eq:2paramsymmetry}
A^{(n_1|n_2)}+A^{(n_2|n_1)} = \left( n_1 A_1+n_2 A_2 \right)+\left( n_2 A_1+n_1 A_2 \right) = \left( n_1+n_2 \right)\left( A_1+A_2 \right) = 0.
\ee
\noindent
Note also that 
\be\label{eq:2paramfixed}
A^{(r|r)} = r \left( A_1+A_2 \right) = 0, \qquad \forall\, r.
\ee
\noindent
It will prove useful later to rewrite this in the following way. Let $\star$ be defined by
\be
\boldsymbol{n}^\star \equiv (n_1 | n_2)^\star := (n_2 | n_1).
\ee
\noindent
Then we can write \eqref{eq:2paramsymmetry} and \eqref{eq:2paramfixed} as $A^{(\boldsymbol{n})} + A^{(\boldsymbol{n}^\star)}=0$, and $A^{(\boldsymbol{r})}=0$ if and only if $\boldsymbol{r}^\star = \boldsymbol{r}$. Further, because of the $g_s$--dependence of \eqref{eq:topcalD},
\be
\mathcal{D}^{(\boldsymbol{n})}(-g_s) = \mathcal{D}^{(\boldsymbol{n}^\star)}(g_s),
\ee
\noindent
\textit{i.e.}, the change of sign in $g_s$ is compensated by the change of sign of the instanton actions, which in turn relates to the sector $\boldsymbol{n}^\star$. Suppose for a moment that the $T^{(\boldsymbol{n})}(g_s)$ functions satisfy the following relation
\be\label{eq:Tassumption}
T^{(\boldsymbol{n})}(-g_s) = \varepsilon^{(\boldsymbol{n})}\, T^{(\boldsymbol{n}^\star)}(g_s),
\ee
\noindent
where $\varepsilon^{(\boldsymbol{n})}$ is a number which is equal to $+1$ when $\boldsymbol{n} = \boldsymbol{r} = (r | r) = \boldsymbol{r}^\star$. Then we can calculate
\bea
\mathcal{D}^{(\boldsymbol{n})}(-g_s) \left( F^{(\boldsymbol{n})}(-g_s) - \varepsilon^{(\boldsymbol{n})}\, F^{(\boldsymbol{n}^\star)}(g_s) \right) &=& \mathcal{D}^{(\boldsymbol{n})}(-g_s) F^{(\boldsymbol{n})}(-g_s) - \varepsilon^{(\boldsymbol{n})}\, \mathcal{D}^{(\boldsymbol{n}^\star)}(g_s) F^{(\boldsymbol{n}^\star)}(g_s) = \nonumber \\
&=& T^{(\boldsymbol{n})}(-g_s) - \varepsilon{(\boldsymbol{n})} \, T^{(\boldsymbol{n}^\star)}(g_s) = 0.
\label{eq:calDcombinationzero}
\eea
\noindent
This means that the combination $F^{(\boldsymbol{n})}(-g_s) - \varepsilon^{(\boldsymbol{n})}\, F^{(\boldsymbol{n}^\star)}(g_s)$ is annihilated by a linear differential operator. This operator admits the trivial solution, so we can have
\be
F^{(n_1|n_2)}(-g_s) = \varepsilon^{(n_1|n_2)}\, F^{(n_2|n_1)}(g_s).
\ee
\noindent
Finally setting $n_1=n_2=r$ we obtain the desired result
\be
F^{(r|r)}(-g_s) = + F^{(r|r)}(g_s).
\ee
\noindent
There are two things we have not shown in this example. One is the assumption \eqref{eq:Tassumption}, with which we shall deal in a more general setting to be described below. The other is explaining how the trivial solution is selected in \eqref{eq:calDcombinationzero}. We take care of this in the following lemma.

\begin{lem}\label{lem:kernelD}
For a function $y (g_s;z,S^{zz})$ such that
\be
\mathcal{D}^{(\boldsymbol{n})}(g_s)\, y (g_s;z,S^{zz}) = 0,
\ee
\noindent
we have that $y$ is the trivial solution, $y(g_s;z,S^{zz})=0$, if the ambiguity part $y(g_s;z,0)$, obtained by setting all the propagators to zero, is also zero.
\end{lem}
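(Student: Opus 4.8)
The plan is to exploit the fact, just established above, that every instanton action is holomorphic, so that $\partial_{S^{zz}} A^{(\boldsymbol{n})}=0$ and the term $\mathcal{D}^{(\boldsymbol{n})}_{-1}=-\partial_{S^{zz}}A^{(\boldsymbol{n})}$ drops out of the expansion $\mathcal{D}^{(\boldsymbol{n})}(g_s)=\sum_{h\geq 0}g_s^h\,\mathcal{D}^{(\boldsymbol{n})}_h$. The crucial structural observation is that, although $\mathcal{D}^{(\boldsymbol{n})}(g_s)$ is second order in $\partial_z$, those second-order $z$--derivatives enter only at order $g_s^2$ and higher (they sit inside $\mathcal{D}^{(\boldsymbol{n})}_h$ with $h\geq 1$), whereas the leading operator is simply $\mathcal{D}^{(\boldsymbol{n})}_0=\partial_{S^{zz}}-\frac{1}{2}(\partial_z A^{(\boldsymbol{n})})^2$, which is first order in $S^{zz}$ with an $S^{zz}$--independent, holomorphic coefficient. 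It is precisely this $g_s$--grading that makes the argument work: viewed as a genuine PDE in $(z,S^{zz})$ the relation $\partial_{S^{zz}}y=g_s^2(\cdots)y$ would be of backward--heat type and ill--posed, so the proof must proceed order by order in the formal $g_s$--expansion, where the dangerous $z$--derivatives are pushed to higher genus.

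Concretely, I would write $y\simeq\sum_{g}g_s^{g+b}\,y_g(z,S^{zz})$ and expand $\mathcal{D}^{(\boldsymbol{n})}(g_s)\,y=0$ in powers of $g_s$, obtaining the triangular tower
\be
\mathcal{D}^{(\boldsymbol{n})}_0\,y_g = -\sum_{h=1}^{g}\mathcal{D}^{(\boldsymbol{n})}_h\,y_{g-h}\equiv G_g ,
\ee
where $G_g$ depends only on the $y_{h}$ with $h<g$. This is formally identical to the nonperturbative holomorphic anomaly equation \eqref{eq:NPeq}, so I would integrate it with the same integrating factor as in \eqref{eq:integratingG},
\be
y_g = \rme^{\frac{1}{2}(\partial_z A^{(\boldsymbol{n})})^2 S^{zz}}\left(c_g(z)+\int^{S^{zz}}\rmd\widetilde{S}^{zz}\,\rme^{-\frac{1}{2}(\partial_z A^{(\boldsymbol{n})})^2\widetilde{S}^{zz}}\,G_g\right),
\ee
choosing the indefinite integral so that it vanishes at $S^{zz}=0$; the integration constant is then exactly the ambiguity, $c_g(z)=y_g(z,0)$.

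The proof closes by induction on $g$. In the base case $g=0$ the sum defining $G_0$ is empty, so $y_0=\rme^{\frac{1}{2}(\partial_z A^{(\boldsymbol{n})})^2 S^{zz}}c_0(z)$; the hypothesis $y(g_s;z,0)=0$ forces $c_0\equiv 0$ and hence $y_0\equiv 0$. For the inductive step, assuming $y_h\equiv 0$ for every $h<g$ makes $G_g\equiv 0$, whence again $y_g=\rme^{\frac{1}{2}(\partial_z A^{(\boldsymbol{n})})^2 S^{zz}}c_g(z)$ with $c_g=y_g(z,0)=0$ by hypothesis, so $y_g\equiv 0$. Summing over $g$ gives $y\equiv 0$, as claimed.

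I do not expect a genuine technical obstacle; the one point requiring care is conceptual rather than computational. It is the recognition that the statement fails for a generic function but holds order by order in $g_s$: one must check that the holomorphicity of $A^{(\boldsymbol{n})}$ really removes the $g_s^{-1}$ piece (so the recursion starts cleanly at $\mathcal{D}^{(\boldsymbol{n})}_0$) and that no $\mathcal{D}^{(\boldsymbol{n})}_h$ with $h\geq 1$ can feed same--order data back into $y_g$, which is what keeps the tower triangular and the leading operator exactly integrable. Once this grading is in place, identifying the integration constant with the ambiguity $y_g(z,0)$ and running the induction is routine.
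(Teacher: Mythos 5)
Your proof is correct and follows essentially the same route as the paper's: expand in powers of $g_s$, note that the triangular structure makes each order satisfy $\mathcal{D}^{(\boldsymbol{n})}_0\,y_g = (\text{previous orders})$, and induct on $g$ using the fact that the kernel of $\mathcal{D}^{(\boldsymbol{n})}_0$ is an exponential times a holomorphic ambiguity that the hypothesis $y(g_s;z,0)=0$ forces to vanish. The added remarks on the $g_s$--grading and the identification $c_g(z)=y_g(z,0)$ are consistent with, and only make explicit, what the paper's proof uses implicitly.
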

\begin{proof}
Expanding both $\mathcal{D}^{(\boldsymbol{n})}(g_s)$ and $y(g_s)$ in a $g_s$--series we have
\be
\sum_{h=0}^g \mathcal{D}^{(\boldsymbol{n})}_h\, y_{g-h} = 0, \qquad g=0,1,2,\ldots.
\ee
\noindent
For $g=0$,
\be
\mathcal{D}^{(\boldsymbol{n})}_0\, y_0 = 0 \qquad \Rightarrow \qquad y_0 (z,S^{zz}) = \rme^{\frac{1}{2} \left( \partial_z A^{(\boldsymbol{n})} \right)^2 \, S^{zz}}\, v_0 (z),
\ee
\noindent
where $v_0(z)$ is the corresponding holomorphic ambiguity. If $v_0 = 0$ then $y_0 = 0$. By induction in $g$, if $y_h=0$, $\forall\, h<g$, then
\be
\mathcal{D}^{(\boldsymbol{n})}_0\, y_g = 0 \qquad \Rightarrow \qquad y_g (z,S^{zz}) = \rme^{\frac{1}{2} \left( \partial_z A^{(\boldsymbol{n})} \right)^2\, S^{zz}}\, v_g (z).
\ee
\noindent
But $v_g = 0$ implies $y_g = 0$, and the proof is complete.
\end{proof}

Let us now describe the generalization to the $(\mathbb{Z}_2)^p$--symmetric case. Consider a $2p$--parameters transseries such that the instanton actions satisfy
\be
A_{p+1} = - A_1, \quad A_{p+2} = - A_2, \quad \ldots , \quad A_{2p} = - A_p.
\ee
\noindent
This means that $A^{(\boldsymbol{n})}=0$ if $n_{p+1}=n_1$, $n_{p+2}=n_2$, \ldots, $n_{2p}=n_p$. But we will also require the converse, so that we do not have situations such as $A_1+A_2+A_3=0$ (this would be a $\mathbb{Z}_3$--symmetry). We are imposing this extra condition in order to make sure that there are no other sectors with vanishing instanton action, for which our arguments would not apply. If we define
\be
\boldsymbol{n}^\star \equiv (n_1| n_2 | \cdots | n_p | n_{p+1} | n_{p+2} | \cdots | n_{2p} )^\star := (  n_{p+1} | n_{p+2} | \cdots | n_{2p} | n_1| n_2 | \cdots | n_p ),
\ee
\noindent
a $2p$--parameters transseries has a $(\mathbb{Z}_2)^p$--symmetry when
\be
A^{(\boldsymbol{n})}=0 \quad \Leftrightarrow \quad \boldsymbol{n}^\star = \boldsymbol{n}.
\ee
\noindent
The example we discussed at the beginning is the particular case $p=1$.

\begin{lem}\label{lem:topinduction}
For a $2p$--parameters transseries with $(\mathbb{Z}_2)^p$--symmetry, and for any instanton sector $\boldsymbol{n}$, one has
\be
F^{(\boldsymbol{n})}(-g_s) = \varepsilon^{(\boldsymbol{n})}\, F^{(\boldsymbol{n}^\star)}(g_s),
\ee
\noindent
provided that the associated holomorphic ambiguities satisfy the corresponding relation
\be\label{eq:ambiguitysymmetry}
f^{(\boldsymbol{n})}(-g_s) = \varepsilon^{(\boldsymbol{n})}\, f^{(\boldsymbol{n}^\star)}(g_s).
\ee
\noindent
In here the $\varepsilon$ are numbers which satisfy $\varepsilon^{(\boldsymbol{m})}\varepsilon^{(\boldsymbol{n}-\boldsymbol{m})} = \varepsilon^{(\boldsymbol{n})}$, for any $\boldsymbol{m}<\boldsymbol{n}$.
\end{lem}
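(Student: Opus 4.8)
The plan is to prove the statement by induction on the total instanton number $|\boldsymbol{n}| := \sum_{\alpha=1}^{2p} n_\alpha$, closely following the $p=1$ computation carried out just before Lemma~\ref{lem:kernelD}. The two structural inputs I will need at each step are, first, the operator symmetry $\mathcal{D}^{(\boldsymbol{n})}(-g_s) = \mathcal{D}^{(\boldsymbol{n}^\star)}(g_s)$, and second, the quadratic-source relation \eqref{eq:Tassumption}, namely $T^{(\boldsymbol{n})}(-g_s) = \varepsilon^{(\boldsymbol{n})}\, T^{(\boldsymbol{n}^\star)}(g_s)$. The operator symmetry holds in the general $(\mathbb{Z}_2)^p$ setting by the same argument used for $p=1$: in \eqref{eq:topcalD} the replacement $g_s \to -g_s$ leaves $g_s^2$ and $\widetilde{F}^{(\boldsymbol{0})}(g_s)$ invariant (the perturbative sector is even in $g_s$) while flipping the sign of $\frac{1}{g_s}\partial_z A^{(\boldsymbol{n})}$; since $A^{(\boldsymbol{n})} + A^{(\boldsymbol{n}^\star)} = 0$ for \emph{every} $\boldsymbol{n}$ — which one checks directly from $A^{(\boldsymbol{n})} = \sum_\alpha n_\alpha A_\alpha$ together with $A_{\alpha+p} = -A_\alpha$ — this sign flip is exactly the passage $A^{(\boldsymbol{n})} \to A^{(\boldsymbol{n}^\star)}$.

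The heart of the argument is establishing \eqref{eq:Tassumption} from the inductive hypothesis. I would first record how the building blocks $H^{(\boldsymbol{m})}(g_s) = \left(\partial_z - \frac{1}{g_s}\partial_z A^{(\boldsymbol{m})}\right) F^{(\boldsymbol{m})}(g_s)$ transform under $g_s \to -g_s$: using $A^{(\boldsymbol{m})} = -A^{(\boldsymbol{m}^\star)}$ and the inductive hypothesis $F^{(\boldsymbol{m})}(-g_s) = \varepsilon^{(\boldsymbol{m})} F^{(\boldsymbol{m}^\star)}(g_s)$ (valid whenever $|\boldsymbol{m}| < |\boldsymbol{n}|$), one finds $H^{(\boldsymbol{m})}(-g_s) = \varepsilon^{(\boldsymbol{m})} H^{(\boldsymbol{m}^\star)}(g_s)$. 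Substituting this into $T^{(\boldsymbol{n})}(-g_s) = \frac{1}{2} g_s^2 \sideset{}{'}\sum_{\boldsymbol{m}=\boldsymbol{0}}^{\boldsymbol{n}} H^{(\boldsymbol{m})}(-g_s)\, H^{(\boldsymbol{n}-\boldsymbol{m})}(-g_s)$ — where every $\boldsymbol{m}$ and $\boldsymbol{n}-\boldsymbol{m}$ in the primed sum has total instanton number strictly below $|\boldsymbol{n}|$, so the hypothesis applies to both factors — and then invoking the multiplicativity $\varepsilon^{(\boldsymbol{m})}\varepsilon^{(\boldsymbol{n}-\boldsymbol{m})} = \varepsilon^{(\boldsymbol{n})}$ together with the linearity of $\star$ (which gives $(\boldsymbol{n}-\boldsymbol{m})^\star = \boldsymbol{n}^\star - \boldsymbol{m}^\star$), one reaches $T^{(\boldsymbol{n})}(-g_s) = \varepsilon^{(\boldsymbol{n})}\, \frac{1}{2} g_s^2 \sideset{}{'}\sum_{\boldsymbol{m}=\boldsymbol{0}}^{\boldsymbol{n}} H^{(\boldsymbol{m}^\star)}(g_s)\, H^{(\boldsymbol{n}^\star-\boldsymbol{m}^\star)}(g_s)$. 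Finally, since $\star$ is an involutive bijection on index vectors, reindexing the primed sum by $\boldsymbol{m}' = \boldsymbol{m}^\star$ (which runs over the same primed range with $\boldsymbol{n}$ replaced by $\boldsymbol{n}^\star$) recasts the right-hand side as $\varepsilon^{(\boldsymbol{n})} T^{(\boldsymbol{n}^\star)}(g_s)$, establishing \eqref{eq:Tassumption}.

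With \eqref{eq:Tassumption} in hand the induction closes exactly as in the $p=1$ case. Set $y(g_s) := F^{(\boldsymbol{n})}(-g_s) - \varepsilon^{(\boldsymbol{n})} F^{(\boldsymbol{n}^\star)}(g_s)$; applying $\mathcal{D}^{(\boldsymbol{n})}(-g_s)$, using \eqref{eq:DFT} at $-g_s$ for the first term and the operator symmetry for the second, yields $\mathcal{D}^{(\boldsymbol{n})}(-g_s)\, y = T^{(\boldsymbol{n})}(-g_s) - \varepsilon^{(\boldsymbol{n})} T^{(\boldsymbol{n}^\star)}(g_s) = 0$. The ambiguity part of $y$, obtained by setting $S^{zz}=0$ in \eqref{eq:integratingG}, is $f^{(\boldsymbol{n})}(-g_s) - \varepsilon^{(\boldsymbol{n})} f^{(\boldsymbol{n}^\star)}(g_s)$, which vanishes by the hypothesis \eqref{eq:ambiguitysymmetry}; Lemma~\ref{lem:kernelD} then forces $y \equiv 0$, that is $F^{(\boldsymbol{n})}(-g_s) = \varepsilon^{(\boldsymbol{n})} F^{(\boldsymbol{n}^\star)}(g_s)$. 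The base case $|\boldsymbol{n}|=1$ is immediate: for one-instanton sectors the primed sum is empty, so $T^{(\boldsymbol{n})} \equiv 0$ and the same kernel argument applies with both sources vanishing. I expect the main obstacle to be the bookkeeping in the middle paragraph — making sure the inductive hypothesis legitimately applies to \emph{both} factors $H^{(\boldsymbol{m})}$ and $H^{(\boldsymbol{n}-\boldsymbol{m})}$ (which is precisely what the primed range guarantees) and that the reindexing $\boldsymbol{m}\mapsto\boldsymbol{m}^\star$ genuinely permutes the summation range; everything else is a direct transcription of the $p=1$ derivation and of Lemmas~\ref{lem:kernelD} and~\ref{lem:topinduction}'s hypotheses.
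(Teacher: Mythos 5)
Your proposal is correct and follows essentially the same route as the paper's own proof: induction on the instanton sector, establishing $H^{(\boldsymbol{m})}(-g_s)=\varepsilon^{(\boldsymbol{m})}H^{(\boldsymbol{m}^\star)}(g_s)$ and hence $T^{(\boldsymbol{n})}(-g_s)=\varepsilon^{(\boldsymbol{n})}T^{(\boldsymbol{n}^\star)}(g_s)$ via multiplicativity of $\varepsilon$, linearity of $\star$, and the reindexing of the primed sum, then closing with the operator symmetry $\mathcal{D}^{(\boldsymbol{n})}(-g_s)=\mathcal{D}^{(\boldsymbol{n}^\star)}(g_s)$ and Lemma~\ref{lem:kernelD}, with one--instanton sectors as the base case. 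The only differences are cosmetic: you make explicit a few points the paper leaves implicit, such as the evenness of $\widetilde{F}^{(\boldsymbol{0})}$ in $g_s$ and the verification that $A^{(\boldsymbol{n})}+A^{(\boldsymbol{n}^\star)}=0$ holds for every sector.
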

\begin{proof}
We proceed by induction on the instanton sector, and leave the base case for last. Assume that the statement of the lemma holds for every $\boldsymbol{m}<\boldsymbol{n}$, then
\bea
H^{(\boldsymbol{m})}(-g_s) &=& \left( \partial_z -\frac{1}{(-g_s)} \partial_z A^{(\boldsymbol{m})} \right) F^{(\boldsymbol{m})}(-g_s) = \left( \partial_z + \frac{1}{g_s} \partial_z A^{(\boldsymbol{m})} \right) \varepsilon^{(\boldsymbol{m})}\, F^{(\boldsymbol{m}^\star)}(g_s) = \nonumber \\
&=& \varepsilon^{(\boldsymbol{m})} \left(\partial_z - \frac{1}{g_s} \partial_z A^{(\boldsymbol{m}^\star)} \right) F^{(\boldsymbol{m}^\star)}(g_s) = \varepsilon^{(\boldsymbol{m})}\, H^{(\boldsymbol{m}^\star)}(g_s).
\label{eq:Hrelation}
\eea
\noindent
Using \eqref{eq:Hrelation} above, we can calculate
\bea
T^{(\boldsymbol{n})}(-g_s) &=& \frac{1}{2} (-g_s)^2\, \sideset{}{'}\sum_{\boldsymbol{m}=\boldsymbol{0}}^{\boldsymbol{n}} H^{(\boldsymbol{m})}(-g_s)\, H^{(\boldsymbol{n}-\boldsymbol{m})}(-g_s) = \nonumber \\
&=& \frac{1}{2}g_s^2\, \sideset{}{'}\sum_{\boldsymbol{m}=\boldsymbol{0}}^{\boldsymbol{n}} \varepsilon^{(\boldsymbol{m})}\, \varepsilon^{(\boldsymbol{n}-\boldsymbol{m})}\, H^{(\boldsymbol{m}^\star)}(g_s)\, H^{((\boldsymbol{n}-\boldsymbol{m})^\star)}(g_s) = \nonumber\\
&=& \varepsilon^{(\boldsymbol{n})}\, \frac{1}{2}g_s^2\, \sideset{}{'}\sum_{\boldsymbol{m}^\star=\boldsymbol{0}}^{\boldsymbol{n}^\star} H^{(\boldsymbol{m}^\star)}(g_s)\, H^{(\boldsymbol{n}^\star-\boldsymbol{m}^\star)}(g_s) = \varepsilon^{(\boldsymbol{n})}\, T^{(\boldsymbol{n}^\star)}(g_s),
\label{eq:Trelation}
\eea
\noindent
where in the third line we have used that $\varepsilon^{(\boldsymbol{m})}\varepsilon^{(\boldsymbol{n}-\boldsymbol{m})} = \varepsilon^{(\boldsymbol{n})}$, that $\star$ is linear, and that 
\be
\sideset{}{'}\sum_{\boldsymbol{m}=\boldsymbol{0}}^{\boldsymbol{n}} = \sideset{}{'}\sum_{\boldsymbol{m}^\star=\boldsymbol{0}}^{\boldsymbol{n}^\star}.
\ee
\noindent
From the $g_s$--dependence of the differential operator \eqref{eq:topcalD} we conclude that
\be
\mathcal{D}^{(\boldsymbol{n})}(-g_s) = \mathcal{D}^{(\boldsymbol{n}^\star)}(g_s),
\ee
\noindent
which may be used, together with \eqref{eq:Trelation}, to calculate
\bea
\mathcal{D}^{(\boldsymbol{n})}(-g_s) \left( F^{(\boldsymbol{n})}(-g_s) - \varepsilon^{(\boldsymbol{n})}\, F^{(\boldsymbol{n}^\star)}(g_s) \right) &=& \mathcal{D}^{(\boldsymbol{n})}(-g_s) F^{(\boldsymbol{n})}(-g_s) - \varepsilon^{(\boldsymbol{n})}\, \mathcal{D}^{(\boldsymbol{n}^\star)}(g_s) F^{(\boldsymbol{n}^\star)}(g_s) = \nonumber \\
&=& T^{(\boldsymbol{n})}(-g_s) - \varepsilon{(\boldsymbol{n})} \, T^{(\boldsymbol{n}^\star)}(g_s) = 0.
\eea
\noindent
Of course this is the exact same calculation as in \eqref{eq:calDcombinationzero}. Lemma \ref{lem:kernelD} now ensures that if $f^{(\boldsymbol{n})}(-g_s) - \varepsilon^{(\boldsymbol{n})}\, f^{(\boldsymbol{n}^\star)}(g_s)=0$ then
\be
F^{(\boldsymbol{n})}(-g_s) = \varepsilon^{(\boldsymbol{n})}\, F^{(\boldsymbol{n}^\star)}(g_s).
\ee
\noindent
To finish this proof by induction we have to check the base cases, which are the one--instanton sectors (those with $\|\boldsymbol{n} \| = 1$). But for these the $T$ function is zero, there is no obstruction, and the proof carries through as in the general case, making use again of Lemma \ref{lem:kernelD} for $\|\boldsymbol{n} \| = 1$.
\end{proof}

Now we can state the main result of this subsection.

\begin{thm}
Consider a $2p$--parameters transseries with $(\mathbb{Z}_2)^p$--symmetry. If some sector $\boldsymbol{r}$ has a vanishing total instanton action, $A^{(\boldsymbol{r})}=0$, then $F^{(\boldsymbol{r})}(g_s)$ has a topological genus expansion in the string coupling $g_s$ (an asymptotic expansion in powers of $g_s^2$), provided the holomorphic ambiguities respect the symmetry (as in \eqref{eq:ambiguitysymmetry} with $\varepsilon^{(\boldsymbol{r})}=+1$).
\end{thm}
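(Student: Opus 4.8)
The plan is to obtain this theorem as an essentially immediate corollary of Lemma \ref{lem:topinduction}, with the only genuine work being the translation of a $g_s \to -g_s$ symmetry of the full sector amplitude into a statement about which powers of $g_s$ survive in its asymptotic expansion. First I would invoke the defining property of the $(\mathbb{Z}_2)^p$--symmetry, namely that $A^{(\boldsymbol{r})}=0$ is equivalent to $\boldsymbol{r}^\star = \boldsymbol{r}$; since we are given $A^{(\boldsymbol{r})}=0$, the sector $\boldsymbol{r}$ is a fixed point of $\star$.

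Next I would apply Lemma \ref{lem:topinduction} with $\boldsymbol{n}=\boldsymbol{r}$. Its hypothesis is precisely the ambiguity relation, which for the fixed sector reads $f^{(\boldsymbol{r})}(-g_s) = \varepsilon^{(\boldsymbol{r})}\, f^{(\boldsymbol{r})}(g_s)$ with $\varepsilon^{(\boldsymbol{r})}=+1$---this is exactly the proviso of the theorem, and it is consistent with the multiplicative rule $\varepsilon^{(\boldsymbol{m})}\varepsilon^{(\boldsymbol{n}-\boldsymbol{m})}=\varepsilon^{(\boldsymbol{n})}$ on a self--conjugate sector. The Lemma then yields

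\be
F^{(\boldsymbol{r})}(-g_s) = \varepsilon^{(\boldsymbol{r})}\, F^{(\boldsymbol{r}^\star)}(g_s) = + F^{(\boldsymbol{r})}(g_s),
\ee

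so that $F^{(\boldsymbol{r})}$ is an \emph{even} function of $g_s$.

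The remaining step is to read off the consequence for the genus expansion. Writing the asymptotic series $F^{(\boldsymbol{r})}(g_s) \simeq \sum_{g=0}^{+\infty} g_s^{g+b^{(\boldsymbol{r})}}\, F^{(\boldsymbol{r})}_g$ and sending $g_s \to -g_s$, each monomial picks up a factor $(-1)^{g+b^{(\boldsymbol{r})}}$. By uniqueness of asymptotic expansions, the evenness just established forces $(-1)^{g+b^{(\boldsymbol{r})}} F^{(\boldsymbol{r})}_g = F^{(\boldsymbol{r})}_g$ for every $g$, hence $F^{(\boldsymbol{r})}_g = 0$ whenever $g + b^{(\boldsymbol{r})}$ is odd. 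Thus only powers of $g_s$ of a single parity survive, and (recalling that the characteristic exponent $b^{(\boldsymbol{r})}$ is an integer in the relevant self--conjugate sectors, exactly as for the perturbative sector where $b^{(\boldsymbol{0})}=-2$) the surviving expansion runs in steps of $g_s^2$, i.e. it is a topological genus expansion.

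I expect the only delicate point to be bookkeeping rather than substance. One must make sure that the hypothesis of Lemma \ref{lem:topinduction} is genuinely met on the fixed sector---that $\varepsilon^{(\boldsymbol{r})}=+1$ is the choice forced by the multiplicative rule on $\boldsymbol{r}=\boldsymbol{r}^\star$---and that the uniqueness of the asymptotic expansion is invoked correctly, so that the functional evenness $F^{(\boldsymbol{r})}(-g_s)=F^{(\boldsymbol{r})}(g_s)$ really does descend to term--by--term parity of the coefficients $F^{(\boldsymbol{r})}_g$. Everything else follows directly from the machinery already assembled in the previous two lemmas.
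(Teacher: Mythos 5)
Your proof is correct and takes essentially the same route as the paper: the fixed--point property $\boldsymbol{r}^\star=\boldsymbol{r}$ combined with Lemma \ref{lem:topinduction} and $\varepsilon^{(\boldsymbol{r})}=+1$ yields $F^{(\boldsymbol{r})}(-g_s)=F^{(\boldsymbol{r})}(g_s)$, which is the whole of the paper's argument. Your additional final step---using uniqueness of asymptotic expansions to turn functional evenness into term--by--term parity of the coefficients $F^{(\boldsymbol{r})}_g$ (and noting the integrality of $b^{(\boldsymbol{r})}$ this forces on a non--trivial sector)---is left implicit in the paper, so your write--up is, if anything, slightly more complete.
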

\begin{proof}
A sector $\boldsymbol{r}$ with $A^{(\boldsymbol{r})}=0$ satisfies $\boldsymbol{r}^\star = \boldsymbol{r}$. Then Lemma \ref{lem:topinduction} says
\be
F^{(\boldsymbol{r})}(-g_s) = + F^{(\boldsymbol{r})}(g_s),
\ee
\noindent
where we have used $\varepsilon^{(\boldsymbol{r})}=+1$.
\end{proof}

It is important to remark that the whole argumentation of this section and the result above is indifferent to the number of propagators. Sub--indices $ij$ could have been used where appropriate and everything would have worked in the same way. Also, note that the validity of \eqref{eq:ambiguitysymmetry} should be confirmed from understanding the fixing of the holomorphic ambiguity for the corresponding sectors. Nonetheless, it is quite natural to expect that the $\BZ_2$--symmetry of the model is shared by the ambiguities.

One interesting special case of a $(\mathbb{Z}_2)^p$--symmetric transseries is when the instanton actions are given by
\be
A_\alpha = \omega^{\alpha -1} A, \qquad \alpha = 1,2, \ldots, 2^s,
\ee
\noindent
where $s$ is an integer and $p=2^{s-1}$, and $\omega$ is the root of unity $\rme^{2\pi\rmi/2^s}$. Note that $A^{(\boldsymbol{n})}=0$ if and only if $n_1 = n_{2^{s-1}+1}$, $n_2 = n_{2^{s-1}+2}$, \ldots, $n_{2^{s-1}}=n_{2^s}$, and so this is indeed a $(\mathbb{Z}_2)^p$--symmetry with $p=2^{s-1}$. Unfortunately, it is not straightforward to go any further than this: the reason we have restricted ourselves to $\mathbb{Z}_2$ symmetries, and not something larger such as $\mathbb{Z}_3$ or more, is that for the latter case the type of argument presented in Lemma \ref{lem:topinduction} does not work. Using the change $g_s \to -g_s$ precisely worked because there was a corresponding $\mathbb{Z}_2$ symmetry at the level of the instanton actions that counterbalanced the minus sign.

Let us also give a brief description of the anti--holomorphic structure of the higher--instanton free energies, in the present case with $\mathbb{Z}_2$--symmetry. If $n_1 \neq n_2$, $F^{(n_1|n_2)}_g$ is described by the generating function of Theorem \ref{structuraltheorem}, provided we take into account that $E_{12} = E_{11}^{-1}$ and $E_{22}= E_{11}$ because $A_1 = -A_2$ (recall \eqref{eq:EtoExp}). The sectors with vanishing total instanton action, $(r|r)$, have a more complicated structure. Assume we are in the case $b^{(\boldsymbol{n})}=+1$, so $F^{(r|r)}_{2h} = 0$. We then find that $F^{(r|r)}_{2h+1}$ is formed by a polynomial of degree $3h-1$ plus a combination of exponentials times polynomials, as in the generic case. The coefficients $a$ in the exponentials are still given by the adapted generating function, but the degrees of the corresponding polynomials now become more complicated than usual. The generating function helps us with the structure, though. It turns out that if $\lambda \left( (r|r);\gamma_{(r|r)} \right) = 3,4,5,\ldots$, then the degree of the polynomial follows the usual rule: it is $3(g+1-\lambda)$, where here $g = 2h+1$. An exception appears if $\lambda = 4$ and the corresponding value of $a/2$ is a sum of two squares\footnote{Moreover, the value of $a/2$ can be neither in the list $\{m_1^2-m_1 m_2+m_2^2\}_{1\leq m_1<m_2} = \{3,7,12,13,19,21,27,\ldots\}$, which corresponds to terms with $\lambda=3$, nor can it be a perfect square, which corresponds to $\lambda=2$.}. In this case the degree becomes $5(h-1)$. If $\lambda = 2$ then degree of the polynomial is $5h$. The value of $\lambda$ is equal to $1$ only for $a=0$, which corresponds to the bare polynomial in $F^{(r|r)}_{2h+1}$. This description follows from the study of solutions to high genus and instanton number, but we provide no proof. Modifications therefore may occur.

Let us make a final comment. The argument for the appearance of genus expansions can be extended to logarithmic transseries provided $k_{\text{max}}^{(\boldsymbol{n}^\star)} = k_{\text{max}}^{(\boldsymbol{n})}$, so that one can carry through the same kind of proof. On the other hand, one expects (from examples such as Painlev\'e I, II) that the transseries is logarithmic--free along $(r|r)$ sectors, which is certainly compatible with the symmetry conditions on $k_{\text{max}}$. However, the equations are not constraining enough to show this.

\section{A Glimpse of Large--Order Analysis}\label{sec:largeorder}

We have seen at the end of section \ref{sec:transseriesintro} that the transseries we study in this paper are expected to be {\em resurgent}, meaning that they satisfy bridge equations such as (\ref{eq:bridgeeq}). What this means in practice is that we expect to find relations linking the \textit{perturbative} coefficients in a transseries to its \textit{nonperturbative} content. These relations can be tested and therefore lead to useful checks and insight on our general formalism. Of course, such tests are most interesting when studying particular examples, which will be the purpose of the next papers in this series \cite{cesv14, cesv15}. Nevertheless, as we shall see, it is still the case that the general theory includes model--independent results which, in particular, provide strong resurgent tests on our present overall approach.

Generically, the bridge equations \eqref{eq:bridgeeq} lead to so--called \textit{large--order relations} between the transseries coefficients. Let us begin with the simple example of an one--parameter log--free resurgent transseries,
\be
F(x,\sigma) = \sum_{n=0}^{+\infty} \sum_{g=0}^{+\infty} \sigma^n\, \rme^{-nA/x}\, F^{(n)}_g\, x^g,
\ee
\noindent
where we also took all the starting orders $b^{(n)}$ of the $g$--series to vanish. One can derive from the bridge equation for such a transseries (see, \textit{e.g.}, \cite{asv11}) that its perturbative coefficients satisfy the following large--order relation
\be
F^{(0)}_g \simeq \frac{1}{2\pi\rmi} \sum_{k=1}^{+\infty} S_1^k\, \sum_{h=0}^{+\infty} \frac{\Gamma \left(g-h\right)}{\left(kA\right)^{g-h}}\, F^{(k)}_h.
\label{eq:largeorder}
\ee
\noindent
Here, the \textit{leading Stokes coefficient} $S_1$ is the constant term in the $\sigma$--dependent prefactor $S_\omega(\sigma)$ that appears in the bridge equations \eqref{eq:bridgeeq}. Its exact value differs from problem to problem, and therefore has to be calculated for each problem separately.

The meaning of the above large--order relation is most easily seen by expanding the $h$ and $k$--sums. For example, the leading term ($k=1$, $h=0$) tells us that, at large $g$, the $F^{(0)}_g$ behave as
\be
F^{(0)}_g \simeq \frac{1}{2\pi\rmi}\, S_1\, \frac{\Gamma \left(g\right)}{A^{g}},
\label{eq:largeorder0}
\ee
\noindent
where we used that we set $F^{(1)}_0=1$. First of all, we see from this expression that the perturbative coefficients $F^{(0)}_g$ grow like $g!$. This means that they define an \textit{asymptotic} series---something which is generically true for resurgent transseries of this form \cite{asv11}. More importantly, we see that from the precise large--order behavior of the $F^{(0)}_g$ we can read off the instanton action $A$ and the Stokes coefficient $S_1$. Continuing this process by including further terms in the $h$--sum from \eqref{eq:largeorder}, we see that from $1/g$ corrections to \eqref{eq:largeorder0} we can read off the coefficient $F^{(1)}_1$, from $1/g^2$ corrections we can read off $F^{(1)}_2$, and so on. When all of these corrections can be summed (or, in practice, Borel resummed), one can continue including further terms from the $k$--sum and, for example, read off $F^{(2)}_0$ from $2^{-g}$ corrections and so on. In other words, all of the nonperturbative transseries coefficients ``resurge'' in the large--order behavior of the perturbative coefficients $F^{(0)}_g$. Here, we will only use large--order relations in their simplest form \eqref{eq:largeorder0}.

In topological string theories, there are a few subtleties that we should keep in mind when writing down the analogue of the above large--order formulae. First of all, it is customary to write the perturbative free energy of a topological string theory as
\be
F(g_s) \simeq \sum_{g=0}^{+\infty} g_s^{2g-2} F_g.
\label{eq:topstrclass}
\ee
\noindent
Here, the expansion parameter is the closed string coupling constant, $g_s^2$, but in the nonperturbative transseries, where D--brane effects and therefore open strings are involved, the appropriate expansion parameter is the open string coupling constant, $g_s$. However, we still want to use the traditional form \eqref{eq:topstrclass} for the perturbative sub--series, which leads to the appearance of some factors of $2$ in our final formulae. Another subtlety that we should keep in mind when comparing to the simple example above, is that general topological string theories have multi--parameter transseries solutions with logarithm--sectors and nonzero starting orders $g$. In what follows, we will not need to deal with the complications that arise due to log--sectors (but see \cite{asv11, sv13} for some examples of large--order formulae when such sectors are present), but we will need to consider multi--parameter transseries with arbitrary starting orders. Taking all of these subtleties into account, and assuming a bridge equation of the form \eqref{eq:bridgeeq}, one finds that the leading large--order formula we need in the topological string case takes the following form
\be\label{eq:fullresurgenceformula}
F^{({\boldsymbol{0}})}_g \simeq \frac{1}{2\pi\rmi}\, \sum_{\alpha=1}^{\kappa} \sum_{n=1}^{+\infty} S_{1,\alpha}^n\, \sum_{h=0}^{+\infty} \frac{\Gamma \left( 2g - 2 - b^{({\boldsymbol{n}_\alpha})} - h \right)}{\left( n A_\alpha \right)^{2g - 2 - b^{({\boldsymbol{n}_\alpha})} - h}}\, F^{({\boldsymbol{n}_\alpha})}_h.
\ee
\noindent
Here, $1 \leq \alpha \leq \kappa$ labels the different instanton actions $A_\alpha$, meaning that we have a $\kappa$--parameters transseries. We have also used a shorthand where $({\boldsymbol{n}_\alpha}) := (0| \cdots |0| n |0| \cdots |0)$, with an $n$ in the $\alpha$th place. The $S_{1,\alpha}$ are the leading Stokes constants associated to the different instanton sectors; they need to be determined for each problem separately. If however the bridge equation is not of the form \eqref{eq:bridgeeq}, then  \eqref{eq:fullresurgenceformula} may naturally generalize---although large deviations are not expected.

Let us now see how the above resurgent formulae provide checks, within the perturbative sector, of our nonperturbative results in the previous sections. Either large--order or resurgent analysis (see, \textit{e.g.}, \cite{z81, asv11}) yields a relation determining the instanton action, obtained by an appropriate ratio of \eqref{eq:largeorder0},
\begin{equation}\label{eq:A2resurgence}
A^2 \left(z^i,S^{ij}\right) = \lim_{g\to+\infty} \frac{\Gamma \left(2g-b+2\right)}{\Gamma \left(2g-b\right)}\, \frac{F^{(0)}_g \left(z^i,S^{ij} \right)}{F^{(0)}_{g+1} \left(z^i,S^{ij} \right)},
\end{equation}
\noindent
where $A$ is the dominant instanton action at the point in moduli space where the right--hand--side is being evaluated. We shall now explicitly prove that this immediately implies holomorphicity of the instanton action; $\partial_{S^{ij}} A^2 = 0$.

In order to prove this statement, take a propagator derivative on both sides of \eqref{eq:A2resurgence}
\bea
\partial_{S^{ij}} A^2 &=& \lim_{g\to+\infty} \frac{\Gamma \left(2g-b+2\right)}{\Gamma \left(2g-b\right)} \left( \frac{\partial_{S^{ij}}F^{(0)}_g}{F^{(0)}_{g+1}} - \frac{\partial_{S^{ij}}F^{(0)}_{g+1}\, F^{(0)}_g}{(F^{(0)}_{g+1})^2} \right) = \nonumber \\
&=& \lim_{g\to+\infty} \partial_{S^{ij}}F^{(0)}_g \left( \frac{\Gamma \left(2g-b+2\right)}{\Gamma \left(2g-b\right)}\, \frac{1}{F^{(0)}_{g+1}}- \frac{\Gamma \left(2g-b\right)}{\Gamma \left(2g-b-2\right)}\, \frac{F^{(0)}_{g-1}}{(F^{(0)}_g)^2} \right),
\label{eq:propagatorderivativeA}
\eea
\noindent
where we have relabelled $g\to g-1$ in the second term. Now, for large enough $g$, the perturbative free energies grow factorially and controlled by the dominant instanton action, as in
\be\label{eq:largeorderperturbative}
F^{(0)}_g \simeq c(z_i,S^{ij})\, A(z^i,S^{ij})^{-(2g-b)}\, \Gamma \left(2g-b\right) \left( 1+\mathcal{O}(g^{-1}) \right),
\ee
\noindent
where $c(z_i,S^{ij})$ is a function independent of $g$. Let us first focus on the term in brackets in \eqref{eq:propagatorderivativeA}. For large enough $g$ we can calculate it to give
\be
\frac{A^{2g-b}}{\Gamma \left(2g-b\right) c} \left( 1+\mathcal{O}(g^{-1}) \right) - \frac{A^{2g-b}}{\Gamma \left(2g-b\right) c} \left( 1+\mathcal{O}(g^{-1}) \right) =  \frac{A^{2g-b}}{\Gamma \left(2g-b\right)}\, \mathcal{O}(g^{-1}),
\ee
\noindent
as the leading terms cancel each other out. Let us next turn to the factor of $\partial_{S^{ij}} F^{(0)}_g$ in \eqref{eq:propagatorderivativeA}. The usual holomorphic anomaly equations allow us to express this in terms of $z$--derivatives,
\be
\partial_{S^{ij}}F^{(0)}_g = \frac{1}{2} \left( \partial_i \partial_j F^{(0)}_{g-1} - \Gamma^{k}_{ij}\, \partial_k F^{(0)}_{g-1} + \sum_{h=1}^{g-1} \partial_i F^{(0)}_{g-h}\,\partial_j F^{(0)}_h \right).
\ee
\noindent
In particular, the large--order behavior of the derivatives of the free energies are obtained from \eqref{eq:largeorderperturbative},
\bea
\partial_i F^{(0)}_h &\sim& \partial_i c\, \frac{\Gamma \left(2h-b\right)}{A^{2h-b}} - c\, \partial_i A\, \frac{\Gamma \left(2h-b+1\right)}{A^{2h-b+1}}, \\
\partial_i \partial_j F^{(0)}_h &\sim& \partial_i \partial_j c\, \frac{\Gamma \left(2h-b\right)}{A^{2h-b}} - \left( \partial_i c\, \partial_j A + \partial_j c\, \partial_i A + c\, \partial_i \partial_j A \right) \frac{\Gamma \left(2h-b+1\right)}{A^{2h-b+1}} +  \nonumber \\
&+&
c\, \partial_i A\, \partial_j A\, \frac{\Gamma \left(2h-b+2\right)}{A^{2h-b+2}}.
\eea
\noindent
It follows that, at large $g$,
\bea
\partial_i \partial_j F^{(0)}_{g-1} &\sim& \frac{\Gamma \left(2g-b\right)}{A^{2g-b}} \left\{ c\, \partial_i A\, \partial_j A - \left( \partial_i c\, \partial_j A + \partial_j c\, \partial_i A + c\, \partial_i \partial_j A \right) A\, \frac{\Gamma \left(2g-b-1\right)}{\Gamma \left(2g-b\right)} + \right. \nonumber \\
&&
\hspace{60pt}
\left. + \partial_i \partial_j c\, A^2\, \frac{\Gamma \left(2g-b-2\right)}{\Gamma \left(2g-b\right)} \right\} = \nonumber \\
&=& \frac{\Gamma \left(2g-b\right)}{A^{2g-b}}\, \left\{ \mathcal{O}(g^0) + \mathcal{O}(g^{-1}) + \mathcal{O}(g^{-1})Ê\right\},
\label{eq:largeordersecondderivative}
\eea
\noindent
and
\bea
\Gamma^{k}_{ij}\, \partial_k F^{(0)}_{g-1} &\sim& \frac{\Gamma \left(2g-b\right)}{A^{2g-b}}\, \Gamma^{k}_{ij} \left\{ - c\, A\, \partial_k A\, \frac{\Gamma \left(2g-b-1\right)}{\Gamma \left(2g-b\right)} + \partial_k c\, A^2\, \frac{\Gamma \left(2g-b-2\right)}{\Gamma\left(2g-b\right)} \right\} = \nonumber \\
&=& \frac{\Gamma \left(2g-b\right)}{A^{2g-b}} \left\{ \mathcal{O}(g^{-1}) + \mathcal{O}(g^{-2}) \right\}.  
\label{eq:largeorderchristoffelterm}
\eea
\noindent
The quadratic term of the holomorphic anomaly equations is trickier because some factors are not really large (for either small or large $h$). However, we can use an inequality that the free energies satisfy since they are asymptotic coefficients. In fact, in this case, there must be a function $\widetilde{c}(z_i, S^{ij})$ such that, for every $g$,
\be
\left| F^{(0)}_g \right| \leq \widetilde{c} \left|A\right|^{-(2g-b)}\, \Gamma \left(2g-b\right).
\ee
\noindent
Therefore, one can evaluate the quadratic term as
\bea
\left| \sum_{h=1}^{g-1} \partial_i F^{(0)}_{g-h}\, \partial_j F^{(0)}_h \right| &\leq& \sum_{h=1}^{g-1} \left|\partial_i F^{(0)}_{g-h}\right| \left|\partial_j F^{(0)}_h\right| \leq \nonumber \\
&\leq& \sum_{h=1}^{g-1} \left( \partial_i \widetilde{c}\, \frac{\Gamma \left(2g-2h-b\right)}{\left|A\right|^{2g-2h-b}} + \widetilde{c}\, \partial_i |A|\, \frac{\Gamma \left(2g-2h-b+1\right)}{\left|A\right|^{2g-2h-b+1}} \right) \times \nonumber \\
&&
\hspace{50pt}
\times \left( \partial_j \widetilde{c}\, \frac{\Gamma \left(2h-b\right)}{\left|A\right|^{2h-b}} + \widetilde{c}\, \partial_j |A|\, \frac{\Gamma \left(2h-b+1\right)}{\left|A\right|^{2h-b+1}} \right).
\eea
\noindent
Next, we make use of the inequalities
\bea
\Gamma \left(2g-2h+p\right) \Gamma \left(2h+p\right) \leq \Gamma \left(2g+p-2\right) \Gamma \left(p+2\right), \\
\Gamma \left(2g-2h+p\right) \Gamma \left(2h+p+1\right) \leq \Gamma \left(2h+p-1\right) \Gamma \left(p+2\right),
\eea
\noindent
in order to find
\be
\left| \sum_{h=1}^{g-1} \partial_i F^{(0)}_{g-h}\, \partial_j F^{(0)}_h \right| \leq \frac{\Gamma \left(2g-b\right)}{\left|A\right|^{2g-b}} \left\{ \mathcal{O}(g^{-1}) + \mathcal{O}(g^0) + \mathcal{O}(g^0) \right\},
\label{eq:largeorderquadraticterm}
\ee
\noindent
where we have essentially done the same type of manipulations as before. Note that the sum in $h$ has $\sim g$ terms which must also be taken into account. Now plugging \eqref{eq:largeordersecondderivative}, \eqref{eq:largeorderchristoffelterm} and \eqref{eq:largeorderquadraticterm} into \eqref{eq:propagatorderivativeA} we finally obtain
\be
\left| \partial_{S^{ij}}A^2 \right| \leq \mathcal{O} \left(g^{-1}\right),
\ee
\noindent
and as $g$ goes to infinity we prove the holomorphicity of $A$.

It is important to notice that in the above reasoning we have only used the \textit{perturbative} formulation of the holomorphic anomaly equations, \eqref{eq:propagatorsholanomeqs}, and nothing else besides completely generic large--order arguments. Yet, we have obtained a general result which precisely reproduces what we had computed earlier via the nonperturbative formulation of the holomorphic anomaly equations we discussed in section \ref{sec:holoeqs}. In particular, this large--order test independently validates our approach. As we mentioned earlier, this holomorphicity tells us that, in cases where there is a known matrix model holographic dual, the instanton action keeps its usual geometric interpretation and may be directly computed as an adequate period of the corresponding spectral curve \cite{msw07, ps09, dmp11}. Otherwise, strictly within the closed string theoretic approach, this has the natural generalization of giving the instanton action a threefold geometric interpretation as a combination of holomorphic cycles in the  Calabi--Yau geometry.

Having checked our nonperturbative result for the instanton action using a perturbative large--order analysis, let us move to the next--to--leading order and address the one--loop one--instanton sector. As we have shown in the previous sections, the anti--holomorphic dependence of the higher--instanton free energies is essentially independent of the particular model (geometry) we are dealing with. For the perturbative sector that structure was a polynomial of a certain degree in the propagators (the coefficients of those polynomials are holomorphic functions that \textit{do} depend on the model). A similar situation happens for the nonperturbative sectors, where we now encounter not only polynomials but also exponentials in the propagators. The degrees and exponents are almost independent of the model (a mild dependence on the starting genus, recall \eqref{eq:lambdainitialdata}), but, again, the coefficients will be holomorphic and dependent on the model at hand.

This robust anti--holomorphic structure must be compatible with the large--order behavior of the \textit{perturbative} free energies \eqref{eq:fullresurgenceformula} and, in general, with the resurgence relations briefly mentioned at the beginning of this section. A detailed check of this has to be done for particular models. Here, as an example, consider the leading term in \eqref{eq:fullresurgenceformula}
\be\label{eq:largeorderleading}
F^{(0)}_g \simeq \frac{S_1}{2\pi\rmi}\, \frac{\Gamma \left(2g-2 - b^{(1)}\right)}{A^{2g-2-b^{(1)}}}\, F^{(1)}_0 + \cdots,
\ee
\noindent
where $A$ is now the instanton action with the smallest absolute value at the point in moduli space we are evaluating our model, \textit{i.e.}, the dominant instanton action. By $(1)$ we denote the associated one--instanton sector $(0| \cdots |1| \cdots |0)$. From \eqref{eq:largeorderleading} we can easily obtain
\be
\frac{S_1}{2\pi\rmi}\, F^{(1)}_0 = \lim_{g\to+\infty} \frac{A^{2g-2-b^{(1)}}}{\Gamma \left(2g-2-b^{(1)}\right)}\, F^{(0)}_g.
\ee
\noindent
Now comes the key point. On the left--hand--side we have an \textit{exponential} in the propagator $S^{zz}$, since according to our earlier results $F^{(1)}_0$ is of the form
\be
F^{(1)}_0 = \rme^{\frac{1}{2} \left(\partial_z A\right)^2 S^{zz}}\, f^{(1)}_0 (z),
\ee
\noindent
where $f^{(1)}_0$ is the corresponding holomorphic ambiguity. On the other hand, on the right--hand--side we have a \textit{polynomial}, albeit of increasing degree $3g-3$ as $g \to +\infty$. Of course that if our nonperturbative results are to be validated by resurgence, and in particular by the resurgence of the perturbative sector, the propagator structure of both sides has to match. If we denote
\be
F^{(0)}_g = \sum_{k=0}^{3g-3} p_{g,k}(z)\, (S^{zz})^k,
\ee
\noindent
we must have
\be\label{eq:largeorder1inst}
\frac{S_1}{2\pi\rmi}\, f^{(1)}_0(z)\, \frac{1}{k!} \left( \frac{1}{2} (\partial_z A)^2 \right)^k = \lim_{g\to+\infty}\frac{A^{2g-2-b^{(1)}}}{\Gamma \left(2g-2-b^{(1)}\right)}\,  p_{g,k} (z).
\ee
\noindent
The left--hand--side of \eqref{eq:largeorder1inst} is essentially the same for every $k$, since it arises from the exponential dependence in $S^{zz}$ of $F^{(1)}_0$, while on the right--hand--side one has the large--order values of every coefficient in $F^{(0)}_g$, $g \gg 1$. One could go on and include more terms in \eqref{eq:largeorderleading}, in order to see what the anti--holomorphic structure of the solutions imposes. More precisely, the left--hand--side of \eqref{eq:fullresurgenceformula} is a polynomial, it has finite degree in the propagators, while the right--hand--side is an infinite sum involving every possible power of the propagators. This means, that there must be a cancellation of powers higher than $3g-3$. Checking this structure in its full generality is beyond the scope of this paper. Nonetheless, an equivalent approach is to confirm it within particular examples and for specified numerical precision. We have done this in the context of local $\BP^2$ around the conifold point and obtained a very precise numerical check of \eqref{eq:largeorder1inst} \cite{cesv14}, again validating our nonperturbative approach with purely perturbative data.

Finally, notice that since $p_{g,0} = f^{(0)}_g$, the perturbative holomorphic ambiguity, the $k=0$ equation in \eqref{eq:largeorder1inst} is
\be
\frac{S_1}{2\pi\rmi}\, f^{(1)}_0(z) = \lim_{g\to+\infty} \frac{A^{2g-2-b^{(1)}}}{\Gamma \left(2g-2-b^{(1)}\right)}\, f^{(0)}_g.
\ee
\noindent
This implies that $f^{(1)}_0$, the first non--perturbative holomorphic ambiguity, can be fixed, up to the Stokes constant $S_1$, by the large--order relation above. This is just an example of what one obtains from resurgent relations such as \eqref{eq:fullresurgenceformula} when the propagators are set to zero. In fact, when $S^{ij}=0$, $F^{(\boldsymbol{n})}_g$ reduces to a combination of $f^{(\boldsymbol{n})}_g$ and all the previous ambiguities. In particular, this means that the resurgence of the free energies, \eqref{eq:fullresurgenceformula}, induces a resurgence of the holomorphic ambiguities, immediately implying that the fixing of the perturbative ambiguity is equivalent, via resurgence, to the fixing of all nonperturbative ambiguities. Note, however, that while this is in principle a correct strategy to fix all holomorphic ambiguities, in practice it requires a lot of computation and becomes a bit cumbersome. In the next section we shall discuss another, more practical, approach to fixing  perturbative and nonperturbative holomorphic ambiguities.

\section{Fixing the Holomorphic Ambiguities}\label{sec:fixingambiguities}

So far we have understood in detail how the holomorphic anomaly equations fix the anti--holomorphic dependence of all (generalized) multi--instanton sectors and, thus, the full $S^{ij}$ dependence of its transseries solution. Akin to the perturbative case, one question remains: is it possible, in each example, to further fix the holomorphic dependence and thus explicitly solve the holomorphic anomaly equations? We shall now discuss, generically, how to fix the remaining holomorphic ambiguity---but one should further point out that only in concrete examples will this fixing become fully explicit \cite{cesv14, cesv15}.

\subsection{Fixing the Perturbative Ambiguity}

To first set the stage let us review, following \cite{hkr08}, how the perturbative holomorphic ambiguity is fixed for each genus, in the case of local Calabi--Yau geometries. For the sake of simplicity we will assume that the complex moduli space is one--dimensional, but the results are general.

As discussed earlier, the perturbative free energies, $F^{(0)}_g$ for $g\geq 2$, are polynomials of degree $3g-3$ in the propagator. The undetermined term in this polynomial is the holomorphic ambiguity that appears after integration of the holomorphic anomaly equations, and this is what we need to fix. Taking the holomorphic limit of the free energies, which we shall denote by $\mathcal{F}^{(0)}_g$, one finds regular functions on the complex moduli space except at a particular set of points at its boundary. At those points the Calabi--Yau geometry acquires a node. For example, in the case of geometries which are fibrations of an elliptic curve, the so called local curves, the singular locus is given by the vanishing of the discriminant of the curve, $\Delta$. Proceeding with this example, it is known that the singular locus, called the conifold divisor, corresponds in the A--model to a phase transition point in which the free energies can undergo a transition whose corresponding double--scaling limit is associated to some universality class. If $t_{\text{c}}$ is the flat coordinate around the conifold point, then one has the following behavior for the holomorphic limit of the free energies near $t_{\text{c}} = 0$,
\be\label{eq:gapcondition}
\left. \mathcal{F}^{(0)}_g \right|_{\text{c}} = \frac{{\mathfrak{c}}^{g-1}\, B_{2g}}{2g \left(2g-2\right) t_{\text{c}}^{2g-2}} +\mathcal{O }(t_{\text{c}}^0), \qquad g \geq 2,
\ee
\noindent
where $\mathfrak{c}$ is a constant depending on the model and $B_{2g}$ are the Bernoulli numbers. This behavior is called the \textit{gap condition} because there is only one singular term with a pole or order $2g-2$ at the conifold point, and then a regular tail \cite{bcov93, gv95, agnt95, gv98a, gv98c, hk06, hkq06}. Having this in mind, consider the expression for $F^{(0)}_g$ integrated from the holomorphic anomaly equations,
\be
F^{(0)}_g = C^{2g-2} \sum_{k=1}^{3g-3} p_{g,k}(z)\, (S^{zz})^k + f^{(0)}_g (z),
\ee
\noindent
where $C= C_{zzz}$ is the Yukawa coupling, $p_{g,k}(z)$ are (rational) functions of the complex modulus $z$ which depend on the particular model, and $f^{(0)}_g(z)$ is the perturbative holomorphic ambiguity. It is a general property that $C \sim \Delta^{-1}$, and $\DeltaÊ\sim t_{\text{c}}$. Note that near the conifold point the Yukawa coupling reproduces the singular behavior of the gap condition, \eqref{eq:gapcondition}, but not the correct coefficient. To make things right one has to consider a holomorphic ambiguity of the form
\be\label{eq:ambiguityansatz}
f^{(0)}_g(z) = \frac{p_{g,0} (z)}{\Delta^{2g-2}}.
\ee
\noindent
The function $p_{g,0}(z)$ is then determined by imposing that $\mathcal{F}^{(0)}_g$ is regular at every point in moduli space---besides at the conifold point---in particular when $z\to 0$ and $z\to \infty$. Assuming for simplicity that $\Delta$ is linear in $z$, we find that $p_{g,0}(z)$ cannot have poles (this is regularity at $z=0$, the large--radius point), and it can only grow with $z$ as fast as $\Delta^{2g-2}$, in which case $p_{g,0}(z)$ is a polynomial in $z$. A counting argument then shows that all the coefficients of $p_{g,0}(z)$, except for one, can be fixed by the gap condition \eqref{eq:gapcondition}. The remaining coefficient is obtained by imposing the universal behavior of the free energies at the large--radius point, $z=0$, which in the A--model corresponds to the large volume limit. This is called the constant map contribution \cite{bcov93, mm98, gv98a, fp98, gv98c}, 
\be
\left. \mathcal{F}^{(0)}_g \right|_{\text{cm}} = \frac{(-1)^g\, \chi\, B_{2g-2}B_{2g}}{4g \left(2g-2\right) \left(2g-2\right)!} + \mathcal{O}(z), \qquad g\geq 2,
\ee
\noindent
where $\chi$ is the Euler characteristic of the Calabi--Yau manifold. This algorithm for fixing the ambiguities, with its generalization to multi--dimensional moduli spaces, works for every genus and has been used to solve several topological string theories up to high genus in many examples.

The gap condition represents the phase transition singularity that the free energies undergo when the model at hand is in the universality class of the $c=1$ string at self--dual radius \cite{gv95}. Models in different universality classes will have phase transitions with a different singular behavior, controlled by a characteristic exponent $\gamma$, in the form
\be\label{eq:freeenergyHgamma}
\mathcal{F}^{(0)}_g \sim H_g\, \tau^{\left(1-g\right)\left(2-\gamma\right)}, \qquad g\geq 2,
\ee
\noindent
where $\tau$ is a proper coordinate around the phase transition point, and $H_g$ are some particular (fixed) coefficients which play the counterpart of the Bernoulli numbers in \eqref{eq:gapcondition}.  Usually, in ``conventional'' topological string theory, we have $\gamma=0$ recovering \eqref{eq:gapcondition}. But for other classes $\gamma$ can be different; \textit{e.g.}, for the class of $c=0$ string theory, or pure 2d gravity, one has $\gamma=-\frac{1}{2}$ \cite{cgmps06}. In general these classes are in correspondence to minimal models which have central charge $c_{p,q}=1-6 \frac{(p-q)^2}{pq}$ and
\be
\gamma_{p,q} = -\frac{2}{p+q-1},
\ee
\noindent
see \cite{fgz93} for a calculation in terms of multi--matrix models. Topological string theories in these more general classes have, however, been scarcely studied. One exception is \cite{cgmps06} which address (equivariant) topological strings on the local curve $X_p = \mathcal{O}(p-2)\oplus \mathcal{O}(-p) \rightarrow \mathbb{P}^1$, $p\geq 2$, which are in fact in the $c=0$ class. In this case it was found that the free energies have a structure somewhat similar to the one described above, in particular resembling the fixing of the holomorphic ambiguity. However, without ``nice'' coordinates around the phase transition point, we cannot be sure a gap condition exists. It would be extremely interesting to study topological string theory in more general universality classes and find out how more general boundary conditions fix the holomorphic ambiguities.

\subsection{Fixing Holomorphic Ambiguities at the Conifold}\label{subsec:fixingambiguityconifold}

The phase transition behavior is a universal property of the given topological string theory class. In this sense, making use of the resurgent equations that relate the large--order behavior of the perturbative free energies to the higher instanton sectors \cite{asv11}, one should be able to conclude something about those higher instanton free energies near the phase transition point. It turns out that this is the case under some circumstances and for certain instanton sectors.

Let us be more concrete by considering a model that lies in the universality class of $c=1$ string theory, and assume that there is only one complex modulus. Near the conifold point, $t_{\text{c}}=0$, the holomorphic limit of the perturbative free energies satisfies the gap condition \eqref{eq:gapcondition}, which we now write in the form
\be
\left. \mathcal{F}^{(0)}_g \right|_{\text{c}} = \frac{{\mathfrak{c}}^{g-1}\, B_{2g}}{2g \left(2g-2\right) t_{\text{c}}^{2g-2}} + {\mathfrak{a}}_g + {\mathfrak{b}}_g\, t_{\text{c}} +\cdots,
\ee
\noindent
where $\{ {\mathfrak{a}}_g, {\mathfrak{b}}_g, \ldots\}$ are numbers that depend on the particular model we are working with. Near this phase transition point we expect the large--order behavior of the $\mathcal{F}^{(0)}_g$ to be controlled by an instanton action proportional to $t_{\text{c}}$ \cite{ps09}, which is actually a period and vanishes at the conifold point. Indeed, as we have seen before from large--order behavior in section \ref{sec:largeorder}, one can extract the dominant instanton action via
\bea
A^2 &=& \lim_{g\to \infty} 4 g^2\, \frac{\mathcal{F}^{(0)}_g}{\mathcal{F}^{(0)}_{g+1}} = \lim_{g\to+\infty} 4g^2\, \frac{G_g\, t_{\text{c}}^{2-2g} + {\mathfrak{a}}_g + {\mathfrak{b}}_g\, t_{\text{c}} + \cdots}{G_{g+1}\, t_{\text{c}}^{2-2(g+1)} + {\mathfrak{a}}_{g+1} + {\mathfrak{b}}_{g+1}\, t_{\text{c}} + \cdots} = \nonumber \\
&=& \lim_{g\to+\infty} 4g^2 \left( \frac{G_g}{G_{g+1}}\, t_{\text{c}}^2 + \frac{{\mathfrak{a}}_g}{G_{g+1}}\, t_{\text{c}}^{2g} + \frac{{\mathfrak{b}}_g}{G_{g+1}}\, t_{\text{c}}^{2g+1} + \cdots \right),
\eea
\noindent
where $G_g \equiv \frac{{\mathfrak{c}}^{g-1}\, B_{2g}}{2g \left(2g-2\right)}$. Note that the first term is of order $t_{\text{c}}^2$ and then we have a tail starting at $t_{\text{c}}^{2g}$. When we take $g$ to infinity that tail will disappear leaving a result which is model independent (but still depending on the universality class). Now, since for large $g$ one has
\be\label{eq:Bernoulliasymptotics}
B_{2g}=(-1)^{g-1}\, \frac{2(2g)!}{(2\pi)^{2g}} \left( 1+\frac{1}{2^{2g}} + \cdots \right),
\ee
\noindent
one immediately finds
\be\label{eq:conifoldinstantonactionsquared}
A_{\text{c}}^2 = -\frac{4\pi^2}{{\mathfrak{c}}}\, t_{\text{c}}^2.
\ee
\noindent
So that indeed $A_{\text{c}} \propto t_{\text{c}}$ and it vanishes at the conifold point. As we shall see in the following, this type of large--order argument in which the model dependence, encoded in the constants $\{ {\mathfrak{a}}_g, {\mathfrak{b}}_g, \ldots \}$, goes away in the large $g$ limit, may be applied to extract the universal behavior of the higher--instanton free energies, associated to that instanton action (in the holomorphic limit).

As we discussed in section \ref{sec:largeorder}, the leading large--order behavior of $\mathcal{F}^{(0)}_g$ is expected to be of the form
\be\label{eq:largeorderfreeenergyconifold}
\mathcal{F}^{(0)}_g \simeq \sum_{k=1}^{+\infty} \frac{S_1^k}{2\pi\rmi}\, \sum_{h=0}^{+\infty} \frac{\Gamma \left( 2g - b_{(k)} - h \right)}{\left( k A_{\text{c}} \right)^{2g - b_{(k)} - h}}\, \mathcal{F}^{(k)}_h + \cdots.
\ee
\noindent
Here $S_1$ is the Stokes constant associated to the specific model, $(k)$ refers to the instanton sector $(0| \cdots |k| \cdots |0)$ associated to the conifold instanton action $A_{\text{c}}$, given by \eqref{eq:conifoldinstantonactionsquared}, $b_{(k)}$ is related to the starting power (similar to $b^{(k)}$ but not quite the same---thus the resembling notation), and the dots at the end stand for contributions from other (multi) instanton sectors including mixed/generalized sectors. These contributions are necessarily subleading as, around the conifold point, $A_{\text{c}}$ has the smallest absolute value being zero at the phase transition point.

We have already seen how to calculate $A_{\text{c}}$ from the large--order limit of the perturbative free energies. In the same spirit we may now calculate $b_{(1)}$,
\bea
b_{(1)} &=& \lim_{g\to+\infty} 2g \left(1 - \frac{A_{\text{c}}^2}{4g^2}\, \frac{\mathcal{F}^{(0)}_{g+1}}{\mathcal{F}^{(0)}_g} \right) = \lim_{g\to+\infty} 2g \left( 1 - \frac{A_{\text{c}}^2}{4g^2}\, \frac{G_{g+1}\, t_{\text{c}}^{2-2(g+1)} + {\mathfrak{a}}_{g+1} + \cdots}{G_g\, t_{\text{c}}^{2-2g} + {\mathfrak{a}}_g + \cdots}\right) = \nonumber \\
&=& \lim_{g\to+\infty} 2g \left( 1 - \frac{-\frac{4\pi^2}{{\mathfrak{c}}}\, t_{\text{c}}^2}{4g^2} \left( \frac{G_{g+1}}{G_g}\, \frac{1}{t_{\text{c}}^2} - \frac{{\mathfrak{a}}_g\, G_{g+1}}{G_g^2}\, t_{\text{c}}^{g+1} + \cdots \right) \right) = \nonumber\\
&=& \lim_{g\to+\infty} \frac{4g^2+2 \left(1+g-2g^2\right)}{2g} = 1,
\label{eq:betaresurgence}
\eea
\noindent
where, again, the model--dependent data is removed as $g$ goes to infinity. The next step is to calculate $\mathcal{F}^{(1)}_h$, $h=0,1,2,\ldots$. One finds
\bea
\frac{S_1}{2\pi\rmi}\, \mathcal{F}^{(1)}_0 &=& \lim_{g\to+\infty} \frac{A_{\text{c}}^{2g-b_{(1)}}}{\Gamma \left(2g-b_{(1)}\right)}\, \mathcal{F}^{(0)}_g = \nonumber\\
&=& \lim_{g\to+\infty} t_{\text{c}}\, \frac{\rmi \left(2g-1\right)}{2\pi\sqrt{{\mathfrak{c}}} \left(g-1\right)} = \frac{\rmi}{\pi\sqrt{{\mathfrak{c}}}}\, t_{\text{c}} = \frac{1}{2\pi^2}\, A_{\text{c}}.
\eea
\noindent
Next,
\bea
\frac{S_1}{2\pi\rmi}\, \mathcal{F}^{(1)}_1 &=& \lim_{g\to+\infty} \frac{A_{\text{c}}^{2g-2}}{\Gamma \left(2g-2\right)} \left( \mathcal{F}^{(0)}_g - \frac{\Gamma \left(2g-1\right)}{A_{\text{c}}^{2g-1}}\, \frac{S_1}{2\pi\rmi}\, \mathcal{F}^{(1)}_0 \right) = \nonumber\\
&=& \lim_{g\to+\infty} \left( \frac{2g-1}{2\pi^2} - \frac{2g-2}{2\pi^2} \right) = \frac{1}{2\pi^2}.
\eea
\noindent
The next free energy is different,
\bea
\frac{S_1}{2\pi\rmi}\, \mathcal{F}^{(1)}_2 &=&  \lim_{g\to+\infty}  \frac{\pi A_{\text{c}}^{2g-3}}{\Gamma \left(2g-3\right)} \left( \mathcal{F}^{(0)}_g - \frac{\Gamma \left(2g-1\right)}{\pi A_{\text{c}}^{2g-1}}\, \frac{S_1}{2\pi\rmi}\, \mathcal{F}^{(1)}_0 - \frac{\Gamma \left(2g-2\right)}{\pi A_{\text{c}}^{2g-2}}\, \frac{S_1}{2\pi\rmi}\, \mathcal{F}^{(1)}_1 \right) = \nonumber\\
&=& \lim_{g\to+\infty} \frac{\sqrt{{\mathfrak{c}}}}{4\pi^3\rmi\, t_{\text{c}}}\, \frac{1}{\Gamma \left(2g-3\right)} \left( \frac{\Gamma \left(2g\right)}{2 \left(g-1\right)} - \Gamma \left(2g-1\right) - \Gamma \left(2g-2\right) \right) = 0,
\eea
\noindent
since the term in brackets is identically zero. By induction one proves that $\mathcal{F}^{(1)}_h = 0$ for $h\geq 2$.

One could now go to the next instanton sector, $n=2$, and keep calculating. For that we would have to use \eqref{eq:Bernoulliasymptotics} including the term $2^{-2g}$. Proceeding in this way we find the same independence on the data associated to the specific model; as above all model--dependent data drops out after taking $g$ to infinity. Doing the same type of calculations, but taking into account the new factor of $2$ from \eqref{eq:Bernoulliasymptotics}, we obtain results which are very similar to the one--instanton sector. Going further we can actually calculate the holomorphic limit of the full $n$--instanton sector as
\bea
b_{(n)} &=& 1, \\
\frac{S_1^n}{2\pi\rmi}\, \mathcal{F}^{(n)}_0 &=& \frac{1}{n}\, \frac{\rmi}{\pi \sqrt{{\mathfrak{c}}}}\, t_{\text{c}}, \\
\frac{S_1^n}{2\pi\rmi}\, \mathcal{F}^{(n)}_1 &=& \frac{1}{n^2}\, \frac{1}{2\pi^2}, \\
\mathcal{F}^{(n)}_{g\geq 2} &=& 0.
\eea
\noindent
What we have found, based on a large--order analysis of the perturbative free energies, is that, in the holomorphic limit, the higher--instanton free energies associated to the conifold instanton action are universal, in the sense that they only depend on the universality class of the $c=1$ string and the Stokes constant. It is thus no surprise that the above results turn out to be essentially the same as the ones obtained in \cite{ps09}. What is important to notice is that this procedure provides a rather general method for fixing the nonperturbative holomorphic ambiguity, $f^{(n)}_g$, of all these sectors (and up to the Stokes constant): simply consider the holomorphic limit of the free energies appearing in the large--order relation and match them against the equations above, solving for $f^{(n)}_g$. Furthermore, it is important to remark that the coordinate $t_{\text{c}}$, which can be calculated from the Picard--Fuchs equations and has the interpretation of a period, is completely dependent upon the model we are trying to solve. In that way, the results above are universal in form but the actual expressions \textit{do} depend on the model.

\subsection{Fixing Ambiguities beyond Conifold Points}

If the model we are trying to solve lies in an universality class other than $c=1$, one might expect to be able to do a similar calculation as above. In that case, if we had a gap condition, we would have
\be\label{eq:generalizedgapcondition}
\mathcal{F}^{(0)}_g = H_g\, \tau^{\left(1-g\right)\left(2-\gamma\right)} + {\mathfrak{a}}_g + {\mathfrak{b}}_g\, \tau + \cdots,
\ee
\noindent
where the coefficients $H_g$ should grow factorially in order to identify the usual dominant instanton action, and $\tau$ is the appropriate coordinate around the phase transition point (which should also have a period interpretation, otherwise the gap condition becomes trivial). However, in order to be as thorough as for the conifold case, we would first need to know the exact form of $H_g$. For the $c=1$ class, the coefficients $G_g$ arise from the Gaussian matrix model which appears as the double--scaling limit at the phase transition point. For other classes one will have to address the double--scaling limit of the model, which naturally depends on the universality class. These double--scaling limits lie in hierarchies, such as the KdV hierarchy \cite{gm90}, but in full generality it is a complicated problem by itself to identify what the numbers $H_g$ are, and how they behave for large $g$, in analogy to \eqref{eq:Bernoulliasymptotics}. Without this information it is impossible to carry on the type of calculation we did in the previous subsection \ref{subsec:fixingambiguityconifold}. Another issue which also needs to be taken into account is the following. While it might be conceivable that topological string theories in the $c=1$ class will not show resonance or logarithmic sectors, a detailed study of the double--scaling limit of 2d gravity, described by the Painlev\'e I equation, \cite{gikm10,asv11}, or of 2d supergravity, described by the Painlev\'e II equation, \cite{sv13}, indicates that other classes will have them. Consequently, large--order expressions such as \eqref{eq:largeorderfreeenergyconifold} would need careful revision along the lines of \cite{asv11, sv13}.

It would also be interesting to address the aforementioned problem of more general phase transition points from a purely geometric point--of--view, and for example classify the universal types of large--order behavior of the perturbative series starting only from a classification of geometric Calabi--Yau threefold singularities. Unfortunately, while such a classification is known for K3 surfaces (\textit{i.e.}, Calabi--Yau twofolds), not only is it not known in the three--dimensional case, but it is actually believed that such singularities are not classifiable \cite{h04}. As a result, it seems to be quite difficult to prove in full generality that one can always find enough boundary conditions to completely solve the holomorphic anomaly equations, beyond the $c=1$ case. Nevertheless, in practical examples \cite{cesv14, cesv15}, we find that at least in the $c=1$ cases the available boundary conditions are indeed sufficient to make the problem completely integrable.

\section{Conclusions and Outlook}

In this work, by making use of resurgent analysis and transseries solutions, we have shown that the holomorphic anomaly equations are completely integrable at the nonperturbative level. In particular, we have suggested a prescription to solve, at full nonperturbative level, any closed topological string theory within the local Calabi--Yau B--model setting. This prescription is based on constructing classes of nonperturbative solutions, assembled as resurgent transseries, and we have shown very explicitly how to build two classes of solutions (``multi--instantonic'' transseries, with or without logarithmic sectors). These are the simplest classes of solutions one can think of, but, of course, it is conceivable that many other (more complicated) classes of solutions may exist and that they will be constructed along very similar fashions. Of course the next natural question is to ask how our methods apply in practice to specific examples, not only at the level of the explicit construction of the transseries solutions but also performing strong checks on these solutions by using resurgent analysis as in \cite{asv11, sv13}, finding Stokes constants and so on. We have applied our methods to two natural Calabi--Yau examples, local $\BP^2$ and local $\BP^1 \times \BP^1$, and we will report on these complete results some time soon \cite{cesv14, cesv15}.

For several reasons, the aforementioned examples are the first natural testing grounds for our general procedure. Recall that one of the simplest examples, precisely the conifold, was addressed at the nonperturbative level in \cite{ps09} (alongside the resolved conifold). In that paper it was suggested that the techniques used therein, mainly dealing with Borel analysis\footnote{The relation to resurgent methods was later pointed out in \cite{asv11}.}, should be applicable to topological strings in any Calabi--Yau geometry with a \textit{finite} number of non--zero Gopakumar--Vafa invariants. On the other hand, it was clear that the methods in \cite{ps09} would not work for threefolds with an infinite number of non--zero Gopakumar--Vafa invariants and it was already suggested in that reference that one should devise a new approach to tackle local $\BP^2$; possibly the simplest background to address in this class. The approach in the present paper precisely answers what the correct approach is, and in fact shows how general transseries methods are as they are essentially applicable to every possible geometry. In the examples of local $\BP^2$ and local $\BP^1 \times \BP^1$ \cite{cesv14, cesv15} we find many new features as compared to the case of the conifold \cite{ps09}. For example, the mirror curves have genus one and modularity plays an important role. In particular, one may now understand when the instanton action is dictated by A--cycle instantons (much like the conifold was \cite{ps09}) and when is it dictated by B--cycle instantons along the lines of \cite{msw07}. Generically, there will be several competing instanton actions in the game much along the lines of \cite{dmp11} and the resurgent analysis of these models is extremely rich and interesting \cite{cesv14, cesv15}. Another point to explore is what is the proper role of the generating functions which were so central in our present construction. There is an open possibility that these generating functions actually turn into ``real'' functions within specific examples, providing a resummation of the full nonperturbative solution. Whether this is possible and whether this would relate to the theta--function resummations in \cite{bde00, e08, em08} is an extremely interesting route of future research.

Moreover, the issue of modularity is by itself worth further investigation. For example, it would be very interesting to extend the results in \cite{abk06, asyz13} and understand the general modular properties of the many multi--instanton components of the transseries solutions. At the same time, the question of modularity relates to that of holomorphicity as addressed in \cite{em08}. Recall that when working on the gauge theory side, and computing the matrix model partition function, one starts off with a holomorphic object to which nonperturbative contributions further turn modular \cite{em08}. It is thus very natural to ask what happens in the present closed string picture. Of course that choosing a background entails breaking holomorphicity (while maintaining modularity) \cite{bcov93}, but could it be that the full set of transseries multi--instanton nonperturbative corrections still conspire in order to recover a holomorphic partition function as the final result? We believe these are very interesting questions for future research.

As one thinks about possible applications of our methods, many ideas quickly come to mind. For example, we have worked on the B--model side and it would be very interesting to investigate what would be the mirror analogue of our results, on the A--model. It is conceivable that this would relate to the topological vertex approach in \cite{adkmv03, akmv03}. Still concerning the B--model side, the extension of the matrix model large $N$ duality in \cite{dv02} from planar level to arbitrary (perturbative) genus, following \cite{emo07}, entailed a rather complete parallel between the closed string holomorphic anomaly equations and the matrix model topological recursion \cite{eo07, eo08}. In this way, our present construction seems to be pointing in the direction that there should also be a nonperturbative version of the topological recursion, this time around based on resurgent transseries solutions and on our general equation \eqref{eq:1parameterNPeq} (and its multi--parameter/multi--dimensional generalizations). Finally, it would be very interesting to extend our results to the compact Calabi--Yau case, perhaps along the lines of \cite{yy04, al07}.

\acknowledgments
We would like to thank Marcos Mari\~no for most insightful comments. RCS would further like to thank Murad Alim, Jos\'e Manuel Queiruga and Emanuel Scheidegger for useful discussions. RCS would like to thank the Institute for Theoretical Physics at the University of Amsterdam and the Department of Mathematics of Instituto Superior T\'ecnico, for their hospitality while this paper was being written. RS would like to thank the Department of Physics of the University of Geneva and CERN--TH Division for hospitality. RCS was supported by a Spanish FPU fellowship. RCS and JDE are supported in part by MICINN and FEDER (grant FPA2011--22594), Xunta de Galicia (Conseller\'{\i}a de Educaci\'on and grant PGIDIT10PXIB206075PR), and the Spanish Consolider--Ingenio 2010 Programme CPAN (CSD2007--00042). The research of RS was partially supported by the European Science Foundation Grant HoloGrav--4076 and by FCT--Portugal grants PTDC/MAT/119689/2010 and EXCL/MAT-GEO/0222/2012. The research of MV was partially supported by the European Research Council Advanced Grant EMERGRAV. The Centro de Estudios Cient\'{\i}ficos (CECs) is funded by the Chilean Government through the Centers of Excellence Base Financing Program of Conicyt.

\newpage

\appendix

\section{Multi--Dimensional Complex Moduli Space}\label{app:multidimcomplmodspace}

In the main body of the paper we have restricted ourselves to the case where the complex moduli space of our Calabi--Yau manifold is one--dimensional. A similar analysis of the equations (and solutions) can be carried out in the completely general case where the moduli space has arbitrary dimension $h^{2,1}$. We find that no new ingredients are required in this generalization which we shall present in the following, as we will make explicit (but schematic) the derivation of the equations for the higher--instanton free energies, assuming a general transseries \textit{ansatz} and including logarithmic sectors. Recall that indices $i$, $j$ run from $1$ to $h^{2,1}$.

We start off with the holomorphic anomaly equation
\be\label{eq:multidimholanomeq}
\frac{\partial F}{\partial S^{ij}} +\frac{1}{2}\left( U_i \: D_j F +U_j \: D_i F \right) - \frac{1}{2}g_s^2 \left( D_i D_j F + D_i F \: D_j F \right) = g_s^{-2} W_{ij}+V_{ij},
\ee
\noindent
and the usual transseries \textit{ansatz}
\be
F = \sum_{\boldsymbol{n}\in \mathbb{N}^\kappa_0} \boldsymbol{\sigma}^{\boldsymbol{n}}\,  \rme^{-A^{(\boldsymbol{n})}/g_s} \sum_{k=0}^{k_{\text{max}}^{(\boldsymbol{n})}} \log^k \left(g_s\right)\, F^{(\boldsymbol{n})[k]}(g_s).
\ee
\noindent
Next, one first calculates the covariant derivatives and the quadratic terms,
\bea
D_j F &=& \sum_{\boldsymbol{n}} \boldsymbol{\sigma}^{\boldsymbol{n}}\, \rme^{-A^{(\boldsymbol{n})}/g_s} \sum_{k=0}^{k_{\text{max}}^{(\boldsymbol{n})}} \log^k \left(g_s\right) \left( \partial_j - \frac{1}{g_s} \partial_j A^{(\boldsymbol{n})}\right) F^{(\boldsymbol{n})[k]}, \\
D_i D_j F &=& \sum_{\boldsymbol{n}} \boldsymbol{\sigma}^{\boldsymbol{n}}\, \rme^{-A^{(\boldsymbol{n})}/g_s} \sum_{k=0}^{k_{\text{max}}^{(\boldsymbol{n})}} \log^k \left(g_s\right) \times \nonumber \\
&&
\times \left\{ \left( \partial_i - \frac{1}{g_s} \partial_i A^{(\boldsymbol{n})} \right)\left( \partial_j - \frac{1}{g_s} \partial_j A^{(\boldsymbol{n})} \right) - \Gamma^k_{ij} \left( \partial_k - \frac{1}{g_s} \partial_k A^{(\boldsymbol{n})} \right)  \right\} F^{(\boldsymbol{n})[k]}, \\
D_i F \, D_j F &=& \sum_{\boldsymbol{n}} \boldsymbol{\sigma}^{\boldsymbol{n}}\, \rme^{-A^{(\boldsymbol{n})}/g_s} \sum_{\boldsymbol{m}=\boldsymbol{0}}^{\boldsymbol{n}} \sum_{k=0}^{k_{\text{max}}^{(\boldsymbol{m})}+k_{\text{max}}^{(\boldsymbol{n}-\boldsymbol{m})}} \log^k \left(g_s\right) \times \\
&&
\times \sum_{\ell=\max \left( 0,k-k_{\text{max}}^{(\boldsymbol{n}-\boldsymbol{m})} \right)}^{\min \left( k,k_{\text{max}}^{(\boldsymbol{m})} \right)}\left( \partial_i - \frac{1}{g_s} \partial_i A^{(\boldsymbol{m})} \right) F^{(\boldsymbol{m})[\ell]}  \left( \partial_j - \frac{1}{g_s} \partial_j A^{(\boldsymbol{n}-\boldsymbol{m})} \right) F^{(\boldsymbol{n}-\boldsymbol{m})[k-\ell]}. \nonumber
\eea
\noindent
In the last equation we have used $A^{(\boldsymbol{m})} + A^{(\boldsymbol{n}-\boldsymbol{m})} = A^{(\boldsymbol{n})}$ in the exponential term. Using these explicit expressions, the holomorphic anomaly equation \eqref{eq:multidimholanomeq} reads
\bea
\sum_{\boldsymbol{n}} \boldsymbol{\sigma}^{\boldsymbol{n}}\, \rme^{-A^{(\boldsymbol{n})}/g_s} && \Bigg\{ \sum_{k=0}^{k_{\text{max}}^{(\boldsymbol{n})}} \log^k \left(g_s\right) \left[ \left( \partial_{S^{ij}} - \frac{1}{g_s} \partial_{S^{ij}}A^{(\boldsymbol{n})} \right) F^{(\boldsymbol{n})[k]} + \right.  \nonumber \\
&&
+ \frac{1}{2} U_i \left( \partial_j - \frac{1}{g_s} \partial_j A^{(\boldsymbol{n})}\right) F^{(\boldsymbol{n})[k]} + \frac{1}{2} U_j \left( \partial_i - \frac{1}{g_s} \partial_i A^{(\boldsymbol{n})} \right) F^{(\boldsymbol{n})[k]} - \nonumber \\
&&
\hspace{-20pt}
\left.- \frac{1}{2} g_s^2 \left( \left( \partial_i - \frac{1}{g_s} \partial_i A^{(\boldsymbol{n})}\right) \left( \partial_j - \frac{1}{g_s} \partial_j A^{(\boldsymbol{n})}\right) - \Gamma^k_{ij} \left( \partial_k - \frac{1}{g_s} \partial_k A^{(\boldsymbol{n})}\right) \right) F^{(\boldsymbol{n})[k]} \right] - \nonumber \\
&&
\hspace{-85pt}
- \frac{1}{2} g_s^2 \sum_{\boldsymbol{m}=\boldsymbol{0}}^{\boldsymbol{n}} \sum_{k=0}^{k_{\text{max}}^{(\boldsymbol{m})}+k_{\text{max}}^{(\boldsymbol{n}-\boldsymbol{m})}} \log^k \left(g_s\right) \sum_\ell \left( \partial_i - \frac{1}{g_s} \partial_i A^{(\boldsymbol{m})}\right) F^{(\boldsymbol{m})[\ell]} \left( \partial_j - \frac{1}{g_s} \partial_j A^{(\boldsymbol{n}-\boldsymbol{m})}\right) F^{(\boldsymbol{n}-\boldsymbol{m})[k-\ell]} - \nonumber \\
&&
- \delta_{\boldsymbol{n}\boldsymbol{0}} \left( \frac{1}{g_s^2} W_{ij} + V_{ij} \right) \Bigg\} = 0.
\eea
\noindent
As explained in section \ref{sec:holoeqs}, the perturbative sector $\boldsymbol{n}=\boldsymbol{0}$ gives back the usual holomorphic anomaly equations \eqref{eq:propagatorsholanomeqs}, when (\ref{eq:Uconstraint}--\ref{eq:Wconstraint}) are satisfied. As such, we shall now focus on $\boldsymbol{n}\neq \boldsymbol{0}$. Separate the terms corresponding to $\boldsymbol{m}=\boldsymbol{0}$ and $\boldsymbol{m}=\boldsymbol{n}$ from the sum in $\boldsymbol{m}$, and collect them along with the $U_i$ terms. One finds
\bea
&&
\sum_{k=0}^{k_{\text{max}}^{(\boldsymbol{n})}} \log^k \left(g_s\right)\left[ \left( \partial_{S^{ij}} - \frac{1}{g_s} \partial_{S^{ij}}A^{(\boldsymbol{n})} \right) F^{(\boldsymbol{n})[k]} \right. - \nonumber \\
&&
- \frac{1}{2} g_s^2 \left\{ \partial_i \widetilde{F}^{(\boldsymbol{0})} \left( \partial_j - \frac{1}{g_s} \partial_j A^{(\boldsymbol{n})}\right) + \partial_j \widetilde{F}^{(\boldsymbol{0})} \left( \partial_i - \frac{1}{g_s} \partial_i A^{(\boldsymbol{n})}\right) + \right. \nonumber \\
&&
\hspace{35pt}
\left.\left. + \left( \partial_i - \frac{1}{g_s} \partial_i A^{(\boldsymbol{n})} \right)\left( \partial_j - \frac{1}{g_s} \partial_j A^{(\boldsymbol{n})} \right) - \Gamma^k_{ij} \left( \partial_k - \frac{1}{g_s} \partial_k A^{(\boldsymbol{n})} \right) \right\} F^{(\boldsymbol{n})[k]} \right] = \\
&&
= \frac{1}{2} g_s^2  \sideset{}{'}\sum_{\boldsymbol{m}=\boldsymbol{0}}^{\boldsymbol{n}}\, \sum_{k=0}^{k_{\text{max}}^{(\boldsymbol{m})}+k_{\text{max}}^{(\boldsymbol{n}-\boldsymbol{m})}} \log^k \left(g_s\right) \sum_\ell \left( \partial_i - \frac{1}{g_s} \partial_i A^{(\boldsymbol{m})} \right) F^{(\boldsymbol{m})[\ell]} \left( \partial_j - \frac{1}{g_s} \partial_j A^{(\boldsymbol{n}-\boldsymbol{m})} \right) F^{(\boldsymbol{n}-\boldsymbol{m})[k-\ell]}, \nonumber
\eea
\noindent
where we have defined $\partial_i \widetilde{F}^{(\boldsymbol{0})} := \partial_i F^{(\boldsymbol{0})}- \frac{1}{g_s^2}U_i$. Noticing that $F^{(\boldsymbol{n})[k]} = 0$ if $k<0$ or $k> k_{\text{max}}^{(\boldsymbol{n})}$, and defining the symmetrization of two indices as usual $T_{(ij)}:= \frac{1}{2} \left( T_{ij} + T_{ji} \right)$, one can write
\bea
&&
\left( \partial_{S^{ij}} - \frac{1}{g_s} \partial_{S^{ij}}A^{(\boldsymbol{n})} \right) F^{(\boldsymbol{n})[k]} - \nonumber \\
&&
- \frac{1}{2} g_s^2 \left( D_{(i|} - \frac{1}{g_s} \partial_{(i|}A^{(\boldsymbol{n})} + 2\, \partial_{(i|} \widetilde{F}^{(\boldsymbol{0})} \right) \left( \partial_{|j)} - \frac{1}{g_s} \partial_{|j)}A^{(\boldsymbol{n})} \right) F^{(\boldsymbol{n})[k]} = \nonumber \\
&&
= \frac{1}{2}g_s^2\, \sideset{}{'}\sum_{\boldsymbol{m}=\boldsymbol{0}}^{\boldsymbol{n}} \sum_\ell \left( \partial_i - \frac{1}{g_s} \partial_i A^{(\boldsymbol{m})} \right) F^{(\boldsymbol{m})[\ell]} \left( \partial_j - \frac{1}{g_s} \partial_j A^{(\boldsymbol{n}-\boldsymbol{m})} \right) F^{(\boldsymbol{n}-\boldsymbol{m})[k-\ell]}.
\eea
\noindent
In here we have also used that the Christoffel symbols are symmetric in the lower indices. Now, the structure of the differential operators in this equation suggests the following definition of covariant derivatives
\bea
\nabla^{(\boldsymbol{n})}_i F^{(\boldsymbol{n})[k]} &:=&  \left( \partial_i - \frac{1}{g_s} \partial_i A^{(\boldsymbol{n})} \right) F^{(\boldsymbol{n})[k]}, \\
\nabla^{(\boldsymbol{n})}_i \nabla^{(\boldsymbol{n})}_j F^{(\boldsymbol{n})[k]} &:=& \left( D_{(i|} - \frac{1}{g_s} \partial_{(i|}A^{(\boldsymbol{n})} + 2\, \partial_{(i|} \widetilde{F}^{(\boldsymbol{0})} \right) \left( \partial_{|j)} - \frac{1}{g_s} \partial_{|j)} A^{(\boldsymbol{n})} \right) F^{(\boldsymbol{n})[k]} = \\
&=&
\left( \delta^k_{(i|} \left( \partial_{|j)} -\frac{1}{g_s} \partial_{|j)}A^{(\boldsymbol{n})} + 2\, \partial_{|j)} \widetilde{F}^{(\boldsymbol{0})} \right) - \Gamma^k_{ij} \right) \left( \partial_k - \frac{1}{g_s} \partial_k A^{(\boldsymbol{n})}\right) F^{(\boldsymbol{n})[k]}, \nonumber \\
\nabla^{(\boldsymbol{n})}_{S^{ij}} F^{(\boldsymbol{n})[k]} &:=& \left( \partial_{S^{ij}} - \frac{1}{g_s} \partial_{S^{ij}} A^{(\boldsymbol{n})} \right) F^{(\boldsymbol{n})[k]}.
\eea
\noindent
With this compact notation the equations have the final form:
\be\label{eq:multidimcompacteq}
\left( \nabla^{(\boldsymbol{n})}_{S^{ij}} - \frac{1}{2} g_s^2\, \nabla^{(\boldsymbol{n})}_i \nabla^{(\boldsymbol{n})}_j \right) F^{(\boldsymbol{n})[k]} = \frac{1}{2} g_s^2\, \sideset{}{'}\sum_{\boldsymbol{m}=\boldsymbol{0}}^{\boldsymbol{n}}\, \sum_\ell \nabla^{(\boldsymbol{m})}_i F^{(\boldsymbol{m})[\ell]}\, \nabla^{(\boldsymbol{n}-\boldsymbol{m})}_i F^{(\boldsymbol{n}-\boldsymbol{m})[k-\ell]}.
\ee

Up to this point we have not made explicit the $g_s$--expansion of the free energies. Before we do that, let us note that the structure of our original holomorphic anomaly equation \eqref{eq:multidimholanomeq}, as well as the traditional (perturbative) formulation of these equations as in \eqref{eq:propagatorsholanomeqs}, including a propagator--derivative, a second--order derivative and a quadratic term, is precisely preserved in here and it will remain so even once we expand the free energies in $g_s$. Further notice that the contribution from lower instanton sectors in \eqref{eq:multidimcompacteq} comes only from the quadratic terms.

Just as we did in the main body of the paper, the operator on the left--hand--side of \eqref{eq:multidimcompacteq} will be denoted by $\mathcal{D}^{(\boldsymbol{n})}_{ij}$. This operator, which includes the propagator--derivative and second--order derivatives, has a $g_s$--expansion starting at $g=-1$,
\be
\left( \nabla^{(\boldsymbol{n})}_{S^{ij}} - \frac{1}{2} g_s^2\, \nabla^{(\boldsymbol{n})}_i \nabla^{(\boldsymbol{n})}_j \right) \equiv \mathcal{D}^{(\boldsymbol{n})}_{ij} (g_s) = \sum_{g=-1}^{+\infty} g_s^g\, \mathcal{D}^{(\boldsymbol{n})}_{ij;g}.
\ee
\noindent
Here $\mathcal{D}^{(\boldsymbol{n})}_{ij;-1}= - \partial_{S^{ij}}A^{(\boldsymbol{n})}$, which will turn out to be the zero operator once we find that the instanton actions are holomorphic. Let the asymptotic expansions for the higher instanton sectors be
\be
F^{(\boldsymbol{n})[k]}(g_s) \simeq \sum_{g=0}^{+\infty} g_s^{g+b^{(\boldsymbol{n})[k]}} F^{(\boldsymbol{n})[k]}_g (z_i, S^{ij}),
\ee
\noindent
and calculate the left and right--hand sides of \eqref{eq:multidimcompacteq}. One finds
\be
\mathcal{D}^{(\boldsymbol{n})}_{ij} F^{(\boldsymbol{n})[k]} = g^{b^{(\boldsymbol{n})[k]}} \sum_{g=-1}^{+\infty} g_s^g \left( -\partial_{S^{ij}} A^{(\boldsymbol{n})}\,  F^{(\boldsymbol{n})[k]}_{g+1} + \sum_{h=0}^g \mathcal{D}^{(\boldsymbol{n})}_{ij;h} F^{(\boldsymbol{n})[k]}_{g-h} \right),
\ee
\noindent
and
\bea
&&
g_s^2\, \sideset{}{'}\sum_{\boldsymbol{m}=\boldsymbol{0}}^{\boldsymbol{n}}\, \sum_\ell \left( \partial_i - \frac{1}{g_s} \partial_i A^{(\boldsymbol{m})} \right) F^{(\boldsymbol{m})[\ell]} \left( \partial_j - \frac{1}{g_s} \partial_j A^{(\boldsymbol{n}-\boldsymbol{m})} \right) F^{(\boldsymbol{n}-\boldsymbol{m})[k-\ell]} = \\
&&
= \sideset{}{'}\sum_{\boldsymbol{m}=\boldsymbol{0}}^{\boldsymbol{n}}\, \sum_\ell g_s^{b^{(\boldsymbol{m})[\ell]}+b^{(\boldsymbol{n}-\boldsymbol{m})[k-\ell]}} \times \nonumber \\
&&
\hspace{30pt}
\times \sum_{g=0}^{+\infty} g_s^g\, \sum_{h=0}^g \left( \partial_i F^{(\boldsymbol{m})[\ell]}_{h-1} - \partial_i A^{(\boldsymbol{m})} F^{(\boldsymbol{m})[\ell]}_h \right)\left( \partial_j F^{(\boldsymbol{n}-\boldsymbol{m})[k-\ell]}_{g-1-h} - \partial_j A^{(\boldsymbol{n}-\boldsymbol{m})} F^{(\boldsymbol{n}-\boldsymbol{m})[k-\ell]}_{g-h} \right). \nonumber
\eea
\noindent
Combining both calculations and multiplying everything by $g_s^{-b^{(\boldsymbol{n})}}$,  \eqref{eq:multidimcompacteq} becomes
\bea\label{eq:multidimNPeq}
&&
- \partial_{S^{ij}}A^{(\boldsymbol{n})}\, F^{(\boldsymbol{n})[k]}_{g+1} + \sum_{h=0}^g \mathcal{D}^{(\boldsymbol{n})}_{ij;h} F^{(\boldsymbol{n})[k]}_{g-h} = \\
&&
= \frac{1}{2} \sideset{}{'}\sum_{\boldsymbol{m}=\boldsymbol{0}}^{\boldsymbol{n}}\, \sum_\ell\, \sum_{h=0}^{g-B} \left( \partial_i F^{(\boldsymbol{m})[\ell]}_{h-1} - \partial_i A^{(\boldsymbol{m})} F^{(\boldsymbol{m})[\ell]}_h \right) \left( \partial_j F^{(\boldsymbol{n}-\boldsymbol{m})[k-\ell]}_{g-1-B-h} - \partial_j A^{(\boldsymbol{n}-\boldsymbol{m})} F^{(\boldsymbol{n}-\boldsymbol{m})[k-\ell]}_{g-B-h} \right) = 0, \nonumber
\eea
\noindent
for $g=-1,0,1,2,\ldots$, $k=0,1,2,\dots$, $1\leq i\leq j \leq h^{2,1}$, and $\boldsymbol{n}\neq \boldsymbol{0}$. Here $B$ stands for $B(\boldsymbol{n},\boldsymbol{m})[k,\ell] := b^{(\boldsymbol{m})[\ell]} + b^{(\boldsymbol{n}-\boldsymbol{m})[k-\ell]} - b^{(\boldsymbol{n})[k]}$. One has:
\bea
\mathcal{D}^{(\boldsymbol{n})}_{ij;0} &=& \partial_{S^{ij}} - \frac{1}{2} \partial_i A^{(\boldsymbol{n})} \partial_j A^{(\boldsymbol{n})}, \\
\mathcal{D}^{(\boldsymbol{n})}_{ij;1} &=& \frac{1}{2} D_i D_j A^{(\boldsymbol{n})} +\partial_{(i|} A^{(\boldsymbol{n})}\left( D_{|j)} + \partial_{|j)} F^{(\boldsymbol{0})}_1 \right), \\
\mathcal{D}^{(\boldsymbol{n})}_{ij;2} &=& - \frac{1}{2} D_i D_j - \partial_{(i|}F^{(\boldsymbol{0})}_1 D_{|j)}, \\
\mathcal{D}^{(\boldsymbol{n})}_{ij;2h-1} &=& \partial_{(i|}F^{(\boldsymbol{0})}_h \partial_{|j)}A^{(\boldsymbol{n})}, \qquad h=2,3,\ldots, \\
\mathcal{D}^{(\boldsymbol{n})}_{ij;2h} &=& -\partial_{(i|}F^{(\boldsymbol{0})}_h D_{|j)}, \qquad h=2,3,\ldots.
\eea

Going back to \eqref{eq:multidimNPeq}, let us consider the equation for $\boldsymbol{n} = ( 0 | \cdots |1| \cdots | 0)$ (an one--instanton sector), $k=0$, and $g=-1$. The second and third terms are zero and one immediately obtains
\be
\partial_{S^{ij}} A^{(0 | \cdots |1| \cdots | 0)}\, F^{(0 | \cdots |1| \cdots | 0)[0]}_0  = 0.
\ee
\noindent
Since by construction $F^{(0 | \cdots |1| \cdots | 0)[0]}_0 \neq 0$, we find the by now familiar result that the instanton actions are all holomorphic. With this important piece of information the nonperturbative holomorphic anomaly equations finally read (here we separate the $\mathcal{D}_0$ term)
\bea\label{eq:multidimrecursiveeq}
&&
\left( \partial_{S^{ij}} - \frac{1}{2} \partial_i A^{(\boldsymbol{n})} \partial_j A^{(\boldsymbol{n})} \right) F^{(\boldsymbol{n})[k]}_g = -\sum_{h=1}^g \mathcal{D}^{(\boldsymbol{n})}_{ij;h} F^{(\boldsymbol{n})[k]}_{g-h} + \\
&&
+ \frac{1}{2}\, \sideset{}{'}\sum_{\boldsymbol{m}=\boldsymbol{0}}^{\boldsymbol{n}}\,  \sum_\ell\, \sum_{h=0}^{g-B} \left( \partial_i F^{(\boldsymbol{m})[\ell]}_{h-1} - \partial_i A^{(\boldsymbol{m})} F^{(\boldsymbol{m})[\ell]}_h \right) \left( \partial_j F^{(\boldsymbol{n}-\boldsymbol{m})[k-\ell]}_{g-1-B-h} - \partial_j A^{(\boldsymbol{n}-\boldsymbol{m})} F^{(\boldsymbol{n}-\boldsymbol{m})[k-\ell]}_{g-B-h} \right). \nonumber
\eea
\noindent
Let us make a few remarks concerning this result. First notice the familiar structure which includes a propagator--derivative (now enlarged with the instanton action) equated to a second--order derivative (to be found inside $\mathcal{D}_2$) added with quadratic terms. Further, the equations are recursive in the instanton sector $\boldsymbol{n}$, the logarithmic power $k$, and the genus $g$. However, because of the functions $B$ and $k_{\text{max}}$, it may be the case that for some equations the left--hand--side of \eqref{eq:multidimrecursiveeq} is zero while the right--hand--side is not. This does not break the recursiveness of the equations but introduces some extra constraints that need to be satisfied in the particular model under study (see the discussion at the end of subsection \ref{sec:resonanceandlogarithmicsectors}). In the case in which we have no constraints and the instanton actions are generic (\textit{i.e.}, they are not constant nor linearly dependent over the integers) we can prove that the structure of the free energies obtained as solutions from \eqref{eq:multidimrecursiveeq} is given by the following theorem.

\begin{thm}
For any $\boldsymbol{n}\neq \boldsymbol{0}$, $k\in\{0,\ldots,k_{\rm{max}}^{(\boldsymbol{n})}\}$, and $g\geq 0$, the structure of the nonperturbative free energies has the form
\begin{equation}
F^{(\boldsymbol{n})[k]}_g = \sum_{\{\gamma_{\boldsymbol{n}}\}} \rme^{\frac{1}{2}\sum_{\alpha,\beta=1}^\kappa a_{\alpha\beta} \left( \boldsymbol{n}; \gamma_{\boldsymbol{n}} \right) \sum_{i,j=1}^{h^{2,1}} \partial_i A_\alpha\, \partial_j A_\beta\, S^{ij}}\, \mathrm{Pol} \left( S^{ij}; 3 \left( g + b^{(\boldsymbol{n})[k]} - \lambda_{b,k_{\rm{max}}}^{[k]} \left( \boldsymbol{n}; \gamma_{\boldsymbol{n}} \right) \right) \right),
\end{equation}
\noindent
where the set of numbers $\left\{ a_{\alpha\beta} \left( \boldsymbol{n}; \gamma_{\boldsymbol{n}} \right) \right\}$ and $\left\{ \lambda_{b,k_{\rm{max}}}^{[k]} \left( \boldsymbol{n}; \gamma_{\boldsymbol{n}} \right) \right\}$ are read off from the generating function
\bea
\Phi_{b,k_{\rm{max}}} &=& \sideset{}{'}\prod_{\boldsymbol{m}=\boldsymbol{0}}^{+\infty}\,  \prod_{\ell=0}^{k_{\rm{max}}^{(\boldsymbol{m})}} \frac{1}{1 - \varphi^{b^{(\boldsymbol{m})[\ell]}}\, \psi^\ell\, \prod_{\alpha,\beta=1}^\kappa E_{\alpha\beta}^{m_\alpha m_\beta}\, \prod_{\alpha=1}^\kappa \rho_\alpha^{m_\alpha}} = \nonumber \\
&=& \sum_{\boldsymbol{n}=\boldsymbol{0}}^{+\infty} \boldsymbol{\rho}^{\boldsymbol{n}}\, \sum_{k=0}^{k_{\rm{max}}^{(\boldsymbol{n})}} \psi^k\, \sum_{\left\{ \gamma_{\boldsymbol{n}} \right\}} \prod_{\alpha,\beta=1}^\kappa E_{\alpha\beta}^{a_{\alpha\beta} \left( \boldsymbol{n}; \gamma_{\boldsymbol{n}} \right)}\, \varphi^{\lambda_{b,k_{\rm{max}}}^{[k]} \left( \boldsymbol{n}; \gamma_{\boldsymbol{n}} \right)} \left( 1 + \mathcal{O} \left( \varphi \right) \right).
\eea
\noindent
Here $\mathrm{Pol} \left(S^{ij};d\right)$ stands for a polynomial of total degree $d$ in the variables $\{S^{ij}\}$ (and whose coefficients have a holomorphic dependence in $\{z_i \}$). Whenever $d<0$, the polynomial is taken to be identically zero. We are assuming that $b^{(\boldsymbol{m})[\ell]} + b^{(\boldsymbol{n}-\boldsymbol{m})[k-\ell]} - b^{(\boldsymbol{n})[k]}\geq 0$, and $k_{\rm{max}}^{(\boldsymbol{n})}-k_{\rm{max}}^{(\boldsymbol{m})}-k_{\rm{max}}^{(\boldsymbol{n}-\boldsymbol{m})} \geq 0$.
\end{thm}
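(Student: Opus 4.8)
The plan is to prove this by a double induction on the instanton multi--index $\boldsymbol{n}$ and the genus $g$, carrying the logarithmic label $k$ and the convolution index $\ell$ along passively, in exact parallel with the proof of Theorem \ref{structuraltheorem}. The crucial observation I would exploit first is that nothing in the generating--function bookkeeping of Lemma \ref{lem:setrecurrence} ever refers to the moduli index $h^{2,1}$: the symbols $E_{\alpha\beta}$ are labelled only by instanton--action indices $\alpha,\beta$, and the recursions \eqref{eq:arecursion}--\eqref{eq:lambdainitialdata} for $\{a_{\alpha\beta}\}$ and $\{\lambda_b\}$ are already stated for matrix--valued $a_{\alpha\beta}$. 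Hence the combinatorial skeleton---the classes $\gamma_{\boldsymbol{n}}$, the numbers $a_{\alpha\beta}(\boldsymbol{n};\gamma_{\boldsymbol{n}})$, and the $\psi$--graded generating function---may be imported verbatim. The only genuinely new content is (i) promoting every ``degree in $S^{zz}$'' to a \emph{total} degree in the collection $\{S^{ij}\}$, and (ii) integrating the \emph{system} \eqref{eq:multidimrecursiveeq}, one first--order linear equation per propagator, rather than a single ODE.

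Second, I would establish the multi--dimensional degree--counting dictionary that drives the induction. From the differential--ring relation \eqref{eq:covariantderivativeonpropagator}, $D_i S^{jk}=-C_{i\ell m}S^{\ell j}S^{mk}+f^{jk}_i$, a single covariant (or holomorphic) derivative raises the total degree of a polynomial in $\{S^{ij}\}$ by one. Acting on the exponential $E^{a}=\exp\bigl(\tfrac12\sum_{\alpha\beta}a_{\alpha\beta}\sum_{ij}\partial_iA_\alpha\,\partial_jA_\beta\,S^{ij}\bigr)$, a derivative brings down $\partial_k$ of the (linear) exponent, i.e. $E^{a}$ times a total--degree--$2$ polynomial; two derivatives give degree $4$. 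Together with the perturbative input that $\partial_iF^{(\boldsymbol 0)}_h$ has total degree $3h-2$ for $h\geq1$ (the section--\ref{sec:1parameter} structure, following \cite{al07,alm08}), these rules let me reproduce, now with moduli indices attached, the computation of the two pieces of $G^{(\boldsymbol{n})[k]}_g$: the operator term $\sum_{h}\mathcal D^{(\boldsymbol{n})}_{ij;h}F^{(\boldsymbol{n})[k]}_{g-h}$ contributes $E^{a(\boldsymbol n;\gamma_{\boldsymbol n})}\,\mathrm{Pol}\bigl(3(g+b^{(\boldsymbol n)[k]}-\lambda_{b,k_{\max}}^{[k]})-1\bigr)$, with the dominant piece coming from $\mathcal D^{(\boldsymbol n)}_{ij;1}$, while the quadratic convolution reproduces line \eqref{eq:proofquadraticterm} verbatim---the $\ell$--sum feeding the $\psi$--grading and Lemma \ref{lem:setrecurrence} collapsing the product of classes into the single label $\gamma_{\boldsymbol n}$.

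Third comes the integration step, which is where the multi--dimensional setting genuinely differs and which I expect to be the main obstacle. Equation \eqref{eq:multidimrecursiveeq} is an overdetermined--looking system: one equation $\bigl(\partial_{S^{ij}}-\tfrac12\partial_iA^{(\boldsymbol n)}\partial_jA^{(\boldsymbol n)}\bigr)F^{(\boldsymbol n)[k]}_g=G^{(\boldsymbol n)[k]}_{g,ij}$ for each pair $(i,j)$. I would first note that, because all instanton actions are holomorphic (shown directly below \eqref{eq:multidimNPeq}), each coefficient $\partial_iA^{(\boldsymbol n)}\partial_jA^{(\boldsymbol n)}$ is propagator--independent, so the compatibility conditions of the homogeneous system hold trivially and the common annihilated function is exactly $\exp\bigl(\tfrac12\sum_{ij}\partial_iA^{(\boldsymbol n)}\partial_jA^{(\boldsymbol n)}S^{ij}\bigr)=E^{a(\boldsymbol n;\widehat\gamma_{\boldsymbol n})}$ times a holomorphic ambiguity $f^{(\boldsymbol n)[k]}_g(z)$. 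The real work is to check that a single total--degree assignment is consistent across all $(i,j)$: integrating the $(i,j)$ equation against the two identities $\int \mathrm{Pol}(d)=\mathrm{Pol}(d+1)$ and $\int e^{cS^{ij}}\mathrm{Pol}(d)=e^{cS^{ij}}\mathrm{Pol}(d)$ must yield, for the $\widehat\gamma_{\boldsymbol n}$ piece, the same raised total degree $3(g+b^{(\boldsymbol n)[k]}-\lambda(\widehat\gamma_{\boldsymbol n}))$ independently of which propagator is integrated, while the ``integration constants'' in the remaining moduli directions must themselves be of exponential--times--polynomial form. I would dispatch this using the closedness of the differential ring generated by the propagators \cite{al07,alm08}, which guarantees that the right--hand sides are genuine polynomials in all $\{S^{ij}\}$ simultaneously, so that the problem reduces to integrating one compatible total system rather than gluing inconsistent slices.

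Finally I would verify the base case $g=0$ exactly as in the main proof: for a one--instanton sector the quadratic term of \eqref{eq:multidimrecursiveeq} vanishes, direct integration gives $E^{a(\boldsymbol n;\widehat\gamma_{\boldsymbol n})}\,\mathrm{Pol}(0)$, and a secondary induction in $\boldsymbol n$ over the $B(\boldsymbol n,\boldsymbol m)=0$ terms---mirroring \eqref{eq:basecaseeq}--\eqref{eq:basecasesol}---produces the full $g=0$ statement. The hypotheses $b^{(\boldsymbol m)[\ell]}+b^{(\boldsymbol n-\boldsymbol m)[k-\ell]}-b^{(\boldsymbol n)[k]}\geq0$ and $k_{\max}^{(\boldsymbol n)}-k_{\max}^{(\boldsymbol m)}-k_{\max}^{(\boldsymbol n-\boldsymbol m)}\geq0$ enter precisely to guarantee that the left--hand side of \eqref{eq:multidimrecursiveeq} is never forced to vanish against a nonzero right--hand side, so that no constraint equations interrupt the recursion; should either fail, the argument would break into the constrained cases discussed at the end of subsection \ref{sec:resonanceandlogarithmicsectors}.
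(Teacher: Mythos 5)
Your proposal is correct and follows essentially the same route as the paper: the paper's own ``proof'' of this appendix theorem is a single remark that it is completely analogous to the proof of Theorem \ref{structuraltheorem}---namely, exactly the double induction in $(\boldsymbol{n},g)$ that you describe, carrying $k$ and $\ell$ passively, with Lemma \ref{lem:setrecurrence} imported verbatim because the generating function is blind to $h^{2,1}$ (the $a$'s carry only the $\alpha,\beta$ indices), the propagators entering only through the total degree. The one place you go beyond the paper is in explicitly flagging and handling the compatibility of the overdetermined system of propagator equations and the form of the integration ``constants'' in the remaining propagator directions, a point the paper passes over in silence; this strengthens rather than deviates from its argument.
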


The proof of this theorem is completely analogous to that of Theorem \ref{structuraltheorem}. The only novelty one has to take into account are the indices $i$, $j$, and the presence of logarithmic sectors. It is also important to notice that the generating function is the same, independently of the dimension of the complex moduli space. This is because all the propagators are on equal footing with each other---albeit the same thing cannot be said about the instanton sectors. This is why the coefficients $a$ carry indices $\alpha$, $\beta$, but not $i$, $j$.

\section{Proof of Set Recursion Lemma}\label{ap:proofofrecursionlemma}

In this appendix we prove Lemma \ref{lem:setrecurrence}, which provides a bridge between the recursions defining the coefficients which appear in the proof of Theorem \ref{structuraltheorem}, and the description through the generating function \eqref{eq:thmcurlyFstructure}.

\begin{proof}[Proof of Lemma \ref{lem:setrecurrence}]
The proof essentially consists in calculating $\left( \Phi_b - 1 \right)^2$ in two different ways, and then equating the results. In the first way we simply square a formal power series as
\begin{eqnarray}
\label{eq:FminusOne1}
\left( \Phi_b - 1 \right)^2 &=& \left( \sideset{}{'}\sum_{\boldsymbol{n}=\boldsymbol{0}}^{+\infty} \boldsymbol{\rho}^{\boldsymbol{n}}\, \sum_{\left\{ \gamma_{\boldsymbol{n}} \right\}} E^{a \left( \boldsymbol{n}; \gamma_{\boldsymbol{n}} \right)}\,  \varphi^{\lambda_b \left( \boldsymbol{n}; \gamma_{\boldsymbol{n}} \right)} \left( 1+ \mathcal{O} \left( \varphi \right) \right) \right)^2 = \\
&& 
\hspace{-45pt}
= \sideset{}{''}\sum_{\boldsymbol{n}=\boldsymbol{0}}^{+\infty} \boldsymbol{\rho}^{\boldsymbol{n}}\, \sideset{}{'}\sum_{\boldsymbol{m}=\boldsymbol{0}}^{\boldsymbol{n}} \sum_{\left\{ \gamma_{\boldsymbol{m}}, \gamma_{\boldsymbol{n}-\boldsymbol{m}} \right\}} E^{a \left( \boldsymbol{m}; \gamma_{\boldsymbol{m}} \right) + a \left( \boldsymbol{n}-\boldsymbol{m}; \gamma_{\boldsymbol{n}-\boldsymbol{m}} \right)}\, \varphi^{\min \left\{ \lambda_b \left( \boldsymbol{m}; \gamma_{\boldsymbol{m}} \right) + \lambda_b \left( \boldsymbol{n}-\boldsymbol{m}; \gamma_{\boldsymbol{n}-\boldsymbol{m}} \right) \right\}} \left(1+\mathcal{O}\left( \varphi \right) \right), \nonumber
\end{eqnarray}
\noindent
where the double--prime ($\:''$) means that $\boldsymbol{n}$ with norm equal to $0$ or $1$ are excluded from the sum.

The second way to compute the above quantity is by simply expanding the square
\begin{equation}\label{eq:FminusOne2}
\left( \Phi_b - 1 \right)^2 = \Phi_b^2 - 2 \Phi_b +1,
\end{equation}
\noindent
with
\begin{equation}
\Phi_b^2 = \sideset{}{'}\prod_{\boldsymbol{m}=\boldsymbol{0}}^{+\infty} \frac{1}{\left( 1 - \varphi^{b^{(\boldsymbol{m})}}\, \prod_{\alpha,\beta=1}^\kappa E_{\alpha\beta}^{m_\alpha m_\beta}\, \prod_{\alpha=1}^\kappa \rho_\alpha^{m_\alpha} \right)^2}.
\end{equation}
\noindent
After using the formal expansion $(1-x)^{-2} = 1 + 2 x + 3 x^2 + 4 x^3 + \cdots$ and some familiar manipulations from our earlier analysis in the main body of the text, one obtains
\begin{equation}\label{eq:explicitF2genfunction}
\Phi_b^2 = \sum_{\boldsymbol{n}=\boldsymbol{0}}^{+\infty} \boldsymbol{\rho}^{\boldsymbol{n}}\, \sum_{\left\{ \gamma_{\boldsymbol{n}} \right\}} E^{a \left( \boldsymbol{n}; \gamma_{\boldsymbol{n}} \right)} \sum_{\left\{ r_{\boldsymbol{m}} \right\} \in \left\{ \gamma_{\boldsymbol{n}} \right\}} \varphi^{\sum_{\boldsymbol{m}=\boldsymbol{0}}^{+\infty} r_{\boldsymbol{m}}\, b^{(\boldsymbol{m})}} \prod_{\boldsymbol{m}=\boldsymbol{0}}^{+\infty} \left( r_{\boldsymbol{m}} + 1 \right).
\end{equation}
\noindent
Now, since (\ref{eq:FminusOne1}) does note have any terms with $\|\boldsymbol{n}\|=0,1$ we should check that (\ref{eq:FminusOne2}) does not have them either. Let us do this explicitly. For $\|\boldsymbol{n}\|=0$ one finds
\begin{equation}
\left. \Phi_b^2 \right|_{\|\boldsymbol{n}\|=0}=1, \quad \left. \Phi_b \right|_{\|\boldsymbol{n}\|=0}=1 \qquad \Rightarrow \qquad \left. \left( \Phi_b - 1 \right)^2 \right|_{\|\boldsymbol{n}\|=0} = 1-2\cdot 1+1 = 0.
\end{equation}
\noindent
For $\|\boldsymbol{n}\|=1$ note that there is only one class ($\widehat{\gamma}_{\boldsymbol{n}}$) and we thus find
\begin{equation}
\left. \Phi_b^2 \right|_{\|\boldsymbol{n}\|=1} = \sum_{\alpha=1}^\kappa \rho_\alpha\, E_{\alpha\alpha}\, \varphi^{b^{(\boldsymbol{n})}} \left( 1 + 1 \right), \,\, \left. \Phi_b \right|_{\|\boldsymbol{n}\|=1} = \sum_{\alpha=1}^\kappa \rho_\alpha\, E_{\alpha\alpha}\, \varphi^{b^{(\boldsymbol{n})}} \quad \Rightarrow \quad \left. \left( \Phi_b - 1 \right)^2 \right|_{\|\boldsymbol{n}\|=1} = 0.
\end{equation}
\noindent
In this way, we have obtained a second way of writing $\left( \Phi - 1 \right)^2$, as
\begin{equation}
\sideset{}{''}\sum_{\boldsymbol{n}=\boldsymbol{0}}^{+\infty} \boldsymbol{\rho}^{\boldsymbol{n}}\, \sum_{\left\{ \gamma_{\boldsymbol{n}} \right\}} E^{a \left( \boldsymbol{n}; \gamma_{\boldsymbol{n}} \right)} \sum_{\left\{ r_{\boldsymbol{m}} \right\} \in \left\{ \gamma_{\boldsymbol{n}} \right\}} \varphi^{\sum_{\boldsymbol{m}=\boldsymbol{0}}^{+\infty} r_{\boldsymbol{m}}\, b^{(\boldsymbol{m})}} \left( \prod_{\boldsymbol{m}=\boldsymbol{0}}^{+\infty} \left( r_{\boldsymbol{m}} + 1 \right) - 2 \right).
\end{equation}
\noindent
All we have left to do is to notice that $\prod_{\boldsymbol{m}=\boldsymbol{0}}^{+\infty} \left( r_{\boldsymbol{m}} + 1 \right) - 2 \geq 0$ with equality holding only for the class $\widehat{\gamma}_{\boldsymbol{n}}$. Therefore,
\begin{equation}\label{eq:FminusOne3}
\left( \Phi_b - 1 \right)^2 = \sideset{}{''}\sum_{\boldsymbol{n}=\boldsymbol{0}}^{+\infty} \boldsymbol{\rho}^{\boldsymbol{n}}\, \sideset{}{'}\sum_{\left\{ \gamma_{\boldsymbol{n}} \right\}} E^{a \left( \boldsymbol{n}; \gamma_{\boldsymbol{n}} \right)}\,  \varphi^{\lambda_b \left( \boldsymbol{n}; \gamma_{\boldsymbol{n}} \right)}\, \mathcal{O} \left( \varphi^0 \right).
\end{equation}
\noindent
Comparison of (\ref{eq:FminusOne1}) and (\ref{eq:FminusOne3}) concludes the proof.
\end{proof}

\newpage


\bibliographystyle{plain}

\end{document}